\newcommand{\Sym}{\mathsf{Sym}}
\newcommand{\HWV}{\mathsf{HWV}}
\newcommand{\dotitem}{\item[$\cdot$]}
\DeclareMathOperator{\Hom}{Hom}
\newcommand{\Pow}{\mathrm{Pow}}
\DeclareMathOperator{\Aut}{Aut}
\DeclareMathOperator{\Inn}{Inn}
\DeclareMathOperator{\Out}{Out}
\newcommand{\bbC}{\mathbb{C}}
\newcommand{\fraksl}{\mathfrak{sl}}
\newcommand{\vvirg}{ , \dots , }
\renewcommand{\Im}{\mathrm{Im}}
\newcommand{\la}{\lambda}
\newcommand{\IC}{\ensuremath{\mathbb{C}}}
\newcommand{\IA}{\ensuremath{\mathbb{A}}}
\newcommand{\IN}{\ensuremath{\mathbb{N}}}
\newcommand{\IZ}{\ensuremath{\mathbb{Z}}}
\newcommand{\aS}{\ensuremath{\mathfrak{S}}}
\newcommand{\bbZ}{\mathbb{Z}}
\newcommand{\bbS}{\mathbb{S}}
\newcommand{\frakS}{\mathfrak{S}}
\newcommand{\pc}{\textup{\textsf{pc}}}
\newcommand{\dc}{\textup{\textsf{dc}}}
\newcommand{\per}{\mathrm{per}}
\renewcommand{\det}{\mathrm{det}}
\newcommand{\sgn}{\mathrm{sgn}}
\newcommand{\VNP}{\textbf{VNP}}
\newcommand{\VPs}{\textbf{VP$_{\text{s}}$}}
\newcommand{\VPe}{\textbf{VP$_{\text{e}}$}}
\newcommand{\tr}{\mathrm{tr}}
\newcommand{\calS}{\mathcal{S}}
\newcommand{\frakg}{\mathfrak{g}}
\newcommand{\textsum}{{\textstyle \sum}}
\DeclareMathOperator{\End}{End}
\newcommand{\xto}{\xrightarrow}
\newcommand{\tensor}{\smash{\textstyle\bigotimes}}
\newcommand{\GL}{\mathsf{GL}}
\newcommand{\PGL}{\mathsf{PGL}}
\newcommand{\SL}{\mathsf{SL}}
\newcommand{\Id}{\mathrm{Id}}
\numberwithin{equation}{section}
\newtheorem{theorem}[equation]{Theorem}
\newtheorem{corollary}[equation]{Corollary}
\newtheorem{lemma}[equation]{Lemma}
\newtheorem{proposition}[equation]{Proposition}
\newtheorem{conjecture}[equation]{Conjecture}
\theoremstyle{definition}
\newtheorem{observ}[equation]{Observation}
\newtheorem{remark}[equation]{Remark}
\newcommand{\partinto}[1][]{\smash{\mathord{\mathchoice{%
  \xymatrix@=0.4em@1{%
  \ar@{|-}[rr]_-*--{\scriptstyle #1}
  &*{\phantom{\scriptstyle{#1}}}&}
}{
  \xymatrix@=0.25em@1{%
  \ar@{|-}[rr]_-*--{\scriptstyle #1}
  &*{\phantom{\scriptstyle{#1}}}&}
}{
  \xymatrix@=0.2em@1{%
  \ar@{|-}[rr]_-*--{\scriptscriptstyle #1}
  &*{\phantom{\scriptscriptstyle{#1}}}&}
}{}}}}
\newcommand{\partintonosmash}[1][]{\mathord{\mathchoice{%
  \xymatrix@=0.4em@1{%
  \ar@{|-}[rr]_-*--{\scriptstyle #1}
  &*{\phantom{\scriptstyle{#1}}}&}
}{
  \xymatrix@=0.25em@1{%
  \ar@{|-}[rr]_-*--{\scriptstyle #1}
  &*{\phantom{\scriptstyle{#1}}}&}
}{
  \xymatrix@=0.2em@1{%
  \ar@{|-}[rr]_-*--{\scriptscriptstyle #1}
  &*{\phantom{\scriptscriptstyle{#1}}}&}
}{}}}
\newcommand{\partintostar}[1][]{\smash{\mathord{\mathchoice{%
  \xymatrix@=0.4em@1{%
  \ar@{|-}[rr]_-*--{\scriptstyle #1}^-*--{\scriptstyle \ast}
  &*{\phantom{\scriptstyle{#1}}}&}
}{
  \xymatrix@=0.25em@1{%
  \ar@{|-}[rr]_-*--{\scriptstyle #1}^-*--{\scriptstyle \ast}
  &*{\phantom{\scriptstyle{#1}}}&}
}{
  \xymatrix@=0.2em@1{%
  \ar@{|-}[rr]_-*--{\scriptscriptstyle #1}^-*--{\scriptstyle \ast}
  &*{\phantom{\scriptscriptstyle{#1}}}&}
}{}}}}
\newcommand{\partintostarnosmash}[1][]{\mathord{\mathchoice{%
  \xymatrix@=0.4em@1{%
  \ar@{|-}[rr]_-*--{\scriptstyle #1}^-*--{\scriptstyle \ast}
  &*{\phantom{\scriptstyle{#1}}}&}
}{
  \xymatrix@=0.25em@1{%
  \ar@{|-}[rr]_-*--{\scriptstyle #1}^-*--{\scriptstyle \ast}
  &*{\phantom{\scriptstyle{#1}}}&}
}{
  \xymatrix@=0.2em@1{%
  \ar@{|-}[rr]_-*--{\scriptscriptstyle #1}^-*--{\scriptstyle \ast}
  &*{\phantom{\scriptscriptstyle{#1}}}&}
}{}}}
\newcommand{\sk}{\textup{sk}}
\newcommand{\ak}{\textup{ak}}
\newcommand{\am}{\textup{am}}
\newcommand{\sm}{\textup{sm}}
\theoremstyle{definition}
\title{Geometric complexity theory and matrix powering}
\author{Fulvio Gesmundo$^1$, Christian Ikenmeyer$^2$, Greta Panova$^3$}
\date{\today}
\begin{document}
\sloppy

\begin{abstract}
Valiant's famous determinant versus permanent problem is the flagship problem in algebraic complexity theory.
Mulmuley and Sohoni (Siam J Comput 2001, 2008)
introduced geometric complexity theory, an approach to study this and related problems via algebraic geometry and representation theory.
Their approach works by multiplying the permanent polynomial with a high power of a linear form (a process called padding)
and then comparing the orbit closures of the determinant and the padded permanent.
This padding was recently used heavily to show negative results
for the method of shifted partial derivatives (Efremenko, Landsberg, Schenck, Weyman, 2016)
and for geometric complexity theory (Ikenmeyer Panova, FOCS 2016 and B\"urgisser, Ikenmeyer Panova, FOCS 2016),
in which occurrence obstructions were ruled out to be able to prove superpolynomial complexity lower bounds.
Following a classical homogenization result of Nisan (STOC 1991) we replace the determinant in geometric complexity theory with the trace of a symbolic matrix power.
This gives an equivalent but much cleaner homogeneous formulation of geometric complexity theory in which the padding is removed.
This radically changes the representation theoretic questions involved to prove complexity lower bounds.
We prove that in this homogeneous formulation there are no orbit occurrence obstructions that prove even superlinear
lower bounds on the complexity of the permanent.

Interestingly---in contrast to the determinant---the trace of a symbolic matrix power is not uniquely determined by its stabilizer.
\end{abstract}
\maketitle

\footnotetext[1]{QMATH, Dept. of Math. Sciences, Univ. of Copenhagen, Denmark, (Texas A\&M University, College Station, TX) \\ \phantom{a} \hfill \email{fulvio.gesmundo@gmail.com}}
\footnotetext[2]{Max Planck Institute for Informatics, Saarland Informatics Campus, Germany, \email{cikenmey@mpi-inf.mpg.de}}
\footnotetext[3]{University of Pennsylvania, Philadelphia, PA, \email{panova@math.upenn.edu}}

\section{Statement of the result}
Let $\per_m := \sum_{\sigma \in \aS_m} \prod_{i=1}^m X_{i,\sigma(i)}$ denote the $m \times m$ permanent polynomial
and let $\Pow_n^m := \tr(X^m)$ denote the trace of the $m$th power of an $n \times n$ matrix $X=(X_{i,j})$ of variables.
The coordinate rings of the orbits and orbit closures $\IC[\GL_{n^2}\Pow^m_n]$ and $\IC[\overline{\GL_{n^2}\per_m}]$
are $\GL_{n^2}$-representations.
Let $\la$ be an isomorphism type of an irreducible $\GL_{n^2}$-representation.
In this paper we prove that if $n \geq m+2 \geq 12$ and
$\la$ occurs in $\IC[\overline{\GL_{n^2}\per_m}]$,
then $\la$ also occurs in $\IC[\GL_{n^2}\Pow^m_n]$, see Theorem~\ref{thm:main} below.

\section{Introduction}\label{section: introduction}
Valiant's famous determinant versus permanent problem is a major open problem in computational complexity theory.
It can be stated as follows, see Conjecture~\ref{conj:valiantconj}:
For a polynomial $p$ in any number of variables
let the determinantal complexity $\dc(p)$ denote the smallest $n \in \IN$ such that $p$
can be written as the determinant $p=\det(A)$ of an $n \times n$ matrix $A$ whose entries are affine linear forms in the variables.

Throughout the paper we fix our ground field to be the complex numbers $\IC$.
The permanent is of interest in combinatorics and theoretical physics, but our main interest
stems from the fact that it is complete for the complexity class $\VNP$ (although the arguments in this paper remain valid
if the permanent is replaced by any other $\VNP$-complete function, mutatis mutandis).
Valiant famously posed the following conjecture.
\begin{conjecture}\label{conj:valiantconj}
The sequence $\dc(\per_m)$ grows superpolynomially.
\end{conjecture}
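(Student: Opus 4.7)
The statement is Valiant's conjecture, the flagship open problem of algebraic complexity theory; my proposal is therefore necessarily a sketch of the GCT program that motivates the present paper, rather than an actual argument. The strategy is to reformulate $\dc(\per_m)\le n$ geometrically as the orbit-closure containment
\[
\overline{\GL_{n^2} \cdot \ell^{n-m}\per_m} \;\subseteq\; \overline{\GL_{n^2} \cdot \det_n}
\]
inside $\Sym^n(\IC^{n^2})^*$, where $\ell$ is a fixed linear form used to pad $\per_m$ up to degree $n$. To rule out such a containment for every $n$ polynomial in $m$ (and thereby force $\dc(\per_m)$ to be superpolynomial), one hunts for a representation-theoretic obstruction: an irreducible $\GL_{n^2}$-type $\lambda$ whose multiplicity in $\IC[\overline{\GL_{n^2}\cdot \ell^{n-m}\per_m}]$ strictly exceeds its multiplicity in $\IC[\overline{\GL_{n^2}\cdot \det_n}]$, or---optimistically---an \emph{occurrence obstruction}, i.e.\ multiplicity zero on the determinant side.

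The plan of attack proceeds in two blocks. First, I would exploit the very large stabilizer of $\det_n$ together with the algebraic Peter--Weyl theorem and plethystic computations to describe $\IC[\overline{\GL_{n^2}\cdot\det_n}]$ as a $\GL_{n^2}$-module, and then perform the analogous analysis on the padded-permanent side using its much smaller stabilizer (essentially $\aS_m \times \aS_m$ together with diagonal scalings and matrix transposition, intersected with the stabilizer of the padding form $\ell^{n-m}$). Second, I would search among partitions $\lambda$ of $nd$ for combinatorially controlled families where plethysm and Kronecker coefficients provably force a multiplicity gap of the right sign and of the right magnitude as a function of $n$.

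The main obstacle is that this program is famously resistant, and recent work has dramatically narrowed the space of viable obstructions. Ikenmeyer--Panova and B\"urgisser--Ikenmeyer--Panova show that occurrence obstructions in the padded formulation cannot yield even superpolynomial lower bounds, and the present paper extends this negative message to the cleaner homogeneous $\Pow_n^m$ reformulation, at least in the regime $n\ge m+2\ge 12$. Consequently any serious attempt to prove Valiant's conjecture along these lines appears to require either sharp quantitative control of plethysm and Kronecker multiplicities---invariants that are notoriously hard even to estimate---or a fundamentally new idea that bypasses the occurrence/multiplicity paradigm altogether.
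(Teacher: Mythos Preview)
The statement is a \emph{conjecture}, not a theorem, and the paper does not prove it; indeed, the paper's main result is a \emph{negative} result about one possible approach to it. You recognize this explicitly, so your proposal is not a proof but an honest summary of the GCT strategy together with an accurate account of why that strategy is currently blocked. In that sense there is nothing to compare against: the paper contains no proof of Conjecture~\ref{conj:valiantconj}, and your proposal correctly refrains from claiming one.

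That said, your sketch is a fair description of the program and of the state of the art, with one small overstatement worth flagging. You write that occurrence obstructions in the padded setting ``cannot yield even superpolynomial lower bounds''; the B\"urgisser--Ikenmeyer--Panova result rules them out for bounds better than roughly $m^{25}$, so it does not preclude, say, an $m^{3}$ lower bound via occurrence obstructions. The paper's own contribution is sharper in a different direction: in the homogeneous $\Pow_n^m$ setting it rules out even \emph{superlinear} lower bounds via orbit occurrence obstructions once $n\ge m+2\ge 12$. Your closing paragraph captures the upshot correctly: progress on the conjecture along these lines would require quantitative multiplicity comparisons rather than mere occurrence, or a genuinely new idea.
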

Valiant \cite{Val:Complete_classes_in_algebra} proved that Conj.~\ref{conj:valiantconj} implies the separation $\VPe \subsetneq \VNP$ of algebraic complexity classes,
which was later refined in \cite{Toda:Classes_of_arithm_circuits_for_det}, see also \cite{MaPo:Characterizing_Valiant_alg_compl_classes}:
Conj.~\ref{conj:valiantconj} is equivalent to the separation $\VPs \subsetneq \VNP$.
Many polynomially equivalent formulations for the determinantal complexity exist. For example
$\dc(p)$ is polynomially equivalent to the smallest size of a skew circuit computing $p$,
or the smallest size of a weakly skew circuit computing $p$,
or the smallest size of an algebraic branching program computing $p$.

\subsection{Preliminaries and the padded setting}
Geometric complexity theory was introduced by Mulmuley and Sohoni \cite{GCT1,GCT2} to resolve Conj.~\ref{conj:valiantconj} and related conjectures as follows.
For $n > m$ define the \emph{padded permanent} $\per_m^n := (X_{n,n})^{n-m}\per_m$, which is homogeneous of degree $n$ in $m^2+1$ variables.
Let $\IA_n$ denote the vector space of homogeneous degree $n$ polynomials in the $n^2$ variables $X_{i,j}$.
Clearly $\per_m^n \in \IA_n$.
Moreover, $\det_n \in \IA_n$, where $\det_n := \sum_{\sigma \in \aS_n} \sgn(\sigma) \prod_{i=1}^n X_{i,\sigma(i)}$ is the determinant polynomial.
The group $\GL_{n^2}$ of invertible $n^2 \times n^2$ matrices acts canonically on $\IA_n$ by replacing variables with homogeneous linear forms.
Let $\GL_{n^2} \det_n := \{g \cdot \det_n \mid g \in \GL_{n^2}\}$ be the orbit of the determinant
and analogously let $\GL_{n^2} \per_m^n$ be the orbit of the padded permanent.
Let $\overline{\GL_{n^2}\det_n} \subseteq \IA_n$ and $\overline{\GL_{n^2}\per_m^n} \subseteq \IA_n$ denote the closures of the respective orbits in $\IA_n$.
Here Euclidean closure and Zariski closure coincide \cite[II.2.2 c \& AI.7.2 Folgerung]{Kra:Geom_Meth_Invarianterntheorie}, i.e.,
both orbit closures are affine subvarieties of $\IA_n$.
Mulmuley and Sohoni proposed the following way to find lower bounds on $\dc(\per_m)$.
\begin{proposition}\label{pro:monoids}
If $\overline{\GL_{n^2}\per_m^n} \not\subseteq \overline{\GL_{n^2}\det_n}$, then $\dc(\per_m)>n$.
\end{proposition}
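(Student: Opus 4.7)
The plan is to prove the contrapositive: if $\dc(\per_m) \leq n$, then $\overline{\GL_{n^2}\per_m^n} \subseteq \overline{\GL_{n^2}\det_n}$. Assume we have a presentation $\per_m = \det(\tilde A)$ where $\tilde A = A_0 + \sum_{1 \leq i,j \leq m} X_{i,j}\,A_{i,j}$ is an $n \times n$ matrix with affine linear entries and $A_0, A_{i,j}$ are constant $n \times n$ matrices. The idea is that padding is exactly the tool needed to homogenize such an affine expression.

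First, I would observe that multiplying out and tracking degrees gives the homogenization identity
\begin{equation*}
X_{n,n}^{\,n-m}\per_m \;=\; \det\Bigl(X_{n,n} A_0 \;+\; \textstyle\sum_{1\leq i,j\leq m} X_{i,j}\,A_{i,j}\Bigr),
\end{equation*}
since setting $X_{n,n}=1$ recovers the original relation and both sides are homogeneous of degree $n$. Denote the matrix inside the determinant by $L(X)$; its entries are homogeneous linear forms in the $n^2$ variables $X_{i,j}$ (with the coefficients of $X_{i,j}$ for $i=n$ or $j=n$ other than $(n,n)$ being zero). Viewing the assignment $X \mapsto L(X)$ as a linear endomorphism $\phi$ of the $n^2$-dimensional space of matrices, the identity above reads $\per_m^n = \det_n \circ \phi$, i.e., $\per_m^n = \phi \cdot \det_n$ up to the usual pullback/pushforward convention of the $\GL_{n^2}$-action on $\IA_n$.

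The second step is to deal with the fact that $\phi$ need not be invertible: only $m^2+1 \ll n^2$ variables appear among the entries of $L(X)$, so $\phi$ has small rank and is not in $\GL_{n^2}$. Here I would use the standard perturbation trick. Let $\psi$ be any fixed element of $\GL_{n^2}$ (say, the identity map $X \mapsto X$ on matrices) and consider $\phi_\epsilon := \phi + \epsilon\, \psi$. For all sufficiently small nonzero $\epsilon \in \IC$ the map $\phi_\epsilon$ is invertible, so $\phi_\epsilon \cdot \det_n \in \GL_{n^2}\det_n$. Taking the Euclidean limit $\epsilon \to 0$ yields $\per_m^n = \phi \cdot \det_n \in \overline{\GL_{n^2}\det_n}$. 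Since $\overline{\GL_{n^2}\det_n}$ is closed and $\GL_{n^2}$-stable, the entire orbit closure $\overline{\GL_{n^2}\per_m^n}$ is contained in it.

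The only step requiring care -- and what I would flag as the main technical point rather than an obstacle -- is the perturbation argument: one must verify that $\det_n$ itself is nonzero on the generic perturbation (equivalently, that $\phi_\epsilon$ is generically invertible), and that the limit genuinely lands in $\overline{\GL_{n^2}\det_n}$ rather than merely in its Euclidean closure -- but this is immediate from the fact \cite[II.2.2 c \& AI.7.2]{Kra:Geom_Meth_Invarianterntheorie} cited in the introduction that Euclidean and Zariski closures coincide for $\GL_{n^2}$-orbits in $\IA_n$. No representation-theoretic machinery is required; the entire argument is a direct consequence of the homogenization identity combined with an orbit-closure limit.
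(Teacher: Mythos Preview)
Your argument is correct and is precisely the standard proof of this well-known fact. Note, however, that the paper does not actually prove Proposition~\ref{pro:monoids}: it is stated as a known result due to Mulmuley and Sohoni \cite{GCT1,GCT2} and used as the starting point for the padded setting. Your homogenization-plus-perturbation argument (write $\per_m^n = \det_n \circ \phi$ for a linear but possibly singular $\phi$, approximate $\phi$ by invertible $\phi_\epsilon$, and use that orbit closures are closed and $\GL_{n^2}$-stable) is exactly the textbook justification; nothing further is needed.
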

We call Prop.~\ref{pro:monoids} the \emph{padded setting}.
To prove lower bounds on $\dc(\per_m)$ Mulmuley and Sohoni \cite{GCT1,GCT2} suggested to study the representation theory of the coordinate rings of
the orbits and orbit closures and use so-called \emph{occurrence obstructions}.
To define occurrence obstructions we now discuss the representation theory of the coordinate rings.

Recall that $\IA_n$ is a complex vector space of dimension $\binom{n^2+n-1}{n}$.
Let $\IC[\IA_n]$ denote its coordinate ring, i.e., the ring of polynomials in $\binom{n^2+n-1}{n}$ variables.
The group $\GL_{n^2}$ acts linearly in a canonical way on each homogeneous degree $d$ component $\IC[\IA_n]_d$ by the canonical pullback $(gf)(p) := f(g^{-1}p)$,
for all $f \in \IC[\IA_n]_d$, $g \in \GL_{n^2}$, $p \in \IA_n$.
Since the group $\GL_{n^2}$ is reductive, the finite dimensional $\GL_{n^2}$-representation $\IC[\IA_n]_d$ splits into a direct sum
representations:
$
\IC[\IA_n]_d = \bigoplus_{i} V_i,
$
where each $V_i$ is an irreducible $\GL_{n^2}$-representation, i.e., a vector space with no nontrivial linear subspaces that are invariant under the group action.
Two irreducible $\GL_{n^2}$-representations $V_i$ and $V_j$ are called \emph{isomorphic}
if there exists a $\GL_{n^2}$-equivariant vector space isomorphism $\varphi : V_i \stackrel{\sim}{\to} V_j$,
i.e. $g\varphi(v) = \varphi(gv)$ for all $g \in \GL_{n^2}$ and $v \in V_i$.
Grouping together isomorphic copies we write
$
\IC[\IA_n]_d = \bigoplus_{\la} V_\la^{\oplus a_\la(d[n])},
$
where the natural numbers $a_\la(d[n])$ are the so-called \emph{plethysm coefficients} and the sum ranges over all isomorphism types $\la$ of irreducible $\GL_{n^2}$-representations.
It is a major open problem in algebraic combinatorics to find a combinatorial description for $a_\la(d[n])$, see Problem 9 in \cite{Sta:Positivity_probs_conj_alg_comb}.

A \emph{partition} is a finite sequence of nonincreasing nonnegative integers.
Partitions are often depicted by their Young diagrams, which are top-left justified arrays of boxes, where there are $\la_i$ boxes in row $i$.
For example, the Young diagram of the partition $(5,4)$ is $\tiny\yng(5,4)$.
We often identify partitions with their Young diagrams.
We write $|\la| := \la_1 + \la_2 + \cdots$ for the number of boxes in $\la$. Moreover, we write $\la \partinto[N]D$
to denote a partition $\la$ of $D$ into at most $N$ parts.
We omit $N$ if there is no restriction on the number of parts.
We denote by $\ell(\la)$ the \emph{length} of $\la$, which is its number of nonzero parts.
The isomorphism types or irreducible $\GL_{n^2}$-representations that can possibly appear in $\IC[\IA_n]_d$
are indexed by partitions $\la \partinto[n^2]nd$.

For an affine subvariety $Z \subseteq \IA_n$ (e.g., $Z=\overline{\GL_{n^2}\det_n}$ or $Z=\overline{\GL_{n^2}\per_m^n}$)
the coordinate ring $\IC[Z]$ is defined by restricting the functions in $\IC[\IA_n]$ to $Z$.
Since in our case $Z$ will always be a cone (i.e., closed under vector space rescaling) it follows that
$\IC[Z]$ inherits the grading from $\IC[\IA_n]$. In each degree $d$ both coordinate rings split:
$
\IC[\overline{\GL_{n^2}\det_n}]_d = \bigoplus_{\la} V_\la^{\oplus a'_\la(d[n])} $ and $
\IC[\overline{\GL_{n^2}\per_m^n}]_d = \bigoplus_{\la} V_\la^{\oplus a''_{\la,m}(d[n])},
$
where the $a'_\la(d[n])$ and $a''_\la(d[n])$ are nonnegative integers.
Clearly if $\overline{\GL_{n^2}\per_m^n} \subseteq \overline{\GL_{n^2}\det_n}$, then there exists the
surjective $\GL_{n^2}$-equivariant homomorphism $\IC[\overline{\GL_{n^2}\det_n}]_d \twoheadrightarrow \IC[\overline{\GL_{n^2}\per_m^n}]_d$
that is just the restriction of functions.
Schur's lemma implies that in this case we have $a''_{\la,m}(d[n]) \leq a'_\la(d[n])$.
Using Prop.~\ref{pro:monoids} we can draw the following conclusion.
\begin{proposition}\label{pro:reprthobs}
If there exists a partition $\la$ such that $a'_\la(d[n]) < a''_{\la,m}(d[n])$, then $\dc(\per_m)>n$.
\end{proposition}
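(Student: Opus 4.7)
The plan is to prove the contrapositive and then invoke Prop.~\ref{pro:monoids}: I would show that if $\overline{\GL_{n^2}\per_m^n} \subseteq \overline{\GL_{n^2}\det_n}$, then $a''_{\la,m}(d[n]) \leq a'_\la(d[n])$ for every partition $\la$ and every degree $d$. Prop.~\ref{pro:monoids} then converts failure of this inequality into a complexity lower bound.

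First I would construct a surjective, graded, $\GL_{n^2}$-equivariant map of coordinate rings. Under the assumed inclusion of varieties, restriction of polynomial functions produces a surjective ring homomorphism $\rho \colon \IC[\overline{\GL_{n^2}\det_n}] \twoheadrightarrow \IC[\overline{\GL_{n^2}\per_m^n}]$. Both orbit closures are cones in $\IA_n$: since $\det_n$ and $\per_m^n$ are homogeneous of degree $n$, the scalar matrix $c\,\Id$ acts on each of them by multiplication by $c^{-n}$, so the $\GL_{n^2}$-orbit of either polynomial is stable under nonzero rescaling, and passing to the closure preserves this property. Consequently the grading on $\IC[\IA_n]$ descends to both coordinate rings, and $\rho$ is a graded map. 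Moreover, since both orbit closures are $\GL_{n^2}$-stable, the pullback action of $\GL_{n^2}$ on $\IC[\IA_n]$ descends to both sides, and the restriction $\rho$ commutes with this action. In each degree $d$ this yields a surjective $\GL_{n^2}$-equivariant linear map of finite-dimensional representations $\IC[\overline{\GL_{n^2}\det_n}]_d \twoheadrightarrow \IC[\overline{\GL_{n^2}\per_m^n}]_d$.

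Then I would finish by Schur's lemma. Since $\GL_{n^2}$ is reductive, both sides split as direct sums of isotypic components indexed by irreducible isomorphism types $\la$, and any $\GL_{n^2}$-equivariant linear map preserves this decomposition. Surjectivity of $\rho$ in degree $d$ therefore implies surjectivity on each $\la$-isotypic component, forcing $a''_{\la,m}(d[n]) \leq a'_\la(d[n])$ for all $\la$ and $d$. The contrapositive is exactly the statement of the proposition. There is no substantial obstacle in the argument: it is the standard dictionary between closed containments of $\GL$-stable cones and multiplicity inequalities in their graded coordinate rings, as set up by Mulmuley and Sohoni. The only minor technicality is the verification of the cone property, which follows cleanly from the homogeneity of $\det_n$ and $\per_m^n$.
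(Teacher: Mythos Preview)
Your proof is correct and follows essentially the same route as the paper: take the contrapositive, use the restriction map $\IC[\overline{\GL_{n^2}\det_n}]_d \twoheadrightarrow \IC[\overline{\GL_{n^2}\per_m^n}]_d$ as a surjective $\GL_{n^2}$-equivariant map, and conclude the multiplicity inequality via Schur's lemma, then invoke Prop.~\ref{pro:monoids}. Your added verification that the orbit closures are cones is a nice touch, but otherwise this is exactly the paper's argument.
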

These $\la$ are called \emph{representation theoretic obstructions}.
Mulmuley and Sohoni conjectured that Prop.~\ref{pro:reprthobs} can be used to resolve Conj.~\ref{conj:valiantconj}.
This conjecture is still wide open.
A key insight about the shape of possible $\lambda$ is presented in \cite{KadLan:padded_poly_and_GCT}:
\begin{proposition}\label{pro:kadishlandsberg}
If $a''_{\la,m}(d[n])>0$, then $\la_1\geq (n-m)d$.
\end{proposition}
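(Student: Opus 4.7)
My approach would exploit the fact that the padded permanent is a semi-invariant for a natural one-parameter subgroup, and run a weight calculation linking the character of that subgroup on $[\per_m^n]$ to the first part of $\la$.

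I would first introduce $\tau\colon \IC^* \to \GL_{n^2}$, the one-parameter subgroup scaling the variable $X_{n,n}$ by $t$ and fixing the other $n^2-1$ variables. Since $\per_m^n = X_{n,n}^{n-m}\per_m$ and $\per_m$ involves only the variables $X_{i,j}$ with $1\le i,j\le m$, a direct computation yields $\tau(t)\cdot \per_m^n = t^{n-m}\per_m^n$. Thus $\tau$ stabilizes the line $[\per_m^n]\in\IP(\IA_n)$ and acts on it by the character $t\mapsto t^{n-m}$.

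Assuming now that $a''_{\la,m}(d[n])>0$, the equivariant restriction $\IC[\IA_n]_d \twoheadrightarrow \IC[\overline{\GL_{n^2}\per_m^n}]_d$ produces a subrepresentation $V \subseteq \IC[\IA_n]_d$ isomorphic to $V_\la$ that maps nontrivially to the orbit closure; by density of the orbit and its $\GL_{n^2}$-invariance I can pick $f \in V$ with $f(\per_m^n) \neq 0$. Decomposing $f = \sum_w f_w$ into $\tau$-eigenvectors---each $f_w$ again lying in $V$---some $f_w$ satisfies $f_w(\per_m^n)\neq 0$. The pullback identity
\[
t^w f_w(\per_m^n) = (\tau(t) f_w)(\per_m^n) = f_w(\tau(t^{-1})\per_m^n) = t^{-(n-m)d} f_w(\per_m^n),
\]
where I have used that $f_w$ is homogeneous of degree $d$ on $\IA_n$ and that $\tau(t^{-1})\per_m^n = t^{-(n-m)}\per_m^n$, forces $w = -(n-m)d$. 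So $V$ carries a nonzero $\tau$-weight vector of weight $-(n-m)d$.

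Translating to the standard convention, in which $V_\la$ is the polynomial Schur module of $\GL_{n^2}$ with highest weight $\la$, the $\tau$-weights on $V \subseteq \IC[\IA_n]_d$ (pullback action) are the negatives of the $\tau$-weights of $V_\la$, so $V_\la$ itself has a $T$-weight $\mu$ whose $(n,n)$-coordinate equals $(n-m)d$. Since the $T$-weights of $V_\la$ all lie in the convex hull of the $\aS_{n^2}$-orbit of $\la$, every coordinate of every weight is bounded above by $\la_1$. Therefore $(n-m)d \le \la_1$, which is the claimed inequality. The main obstacle is the sign bookkeeping between the pullback action on the coordinate ring and the standard convention for polynomial $\GL_{n^2}$-representations; once that is in place, the proof reduces to the one-line weight computation above.
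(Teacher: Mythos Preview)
The paper does not supply its own proof of this proposition; it is quoted from \cite{KadLan:padded_poly_and_GCT} and used as a black box. Your argument is correct and is essentially the standard weight argument behind the Kadish--Landsberg result: the padding variable $X_{n,n}$ furnishes a one-parameter subgroup under which $\per_m^n$ is a semi-invariant of weight $n-m$, so any degree-$d$ function not vanishing at $\per_m^n$ must carry a torus weight whose $(n,n)$-entry is $(n-m)d$ in absolute value, which then bounds $\la_1$ from below via the convex-hull description of the weights of $V_\la$.

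One point deserves to be written out rather than flagged as an obstacle. Your sentence ``the $\tau$-weights on $V\subseteq\IC[\IA_n]_d$ are the negatives of the $\tau$-weights of $V_\la$'' is precisely the assertion that, under the pullback action $(gf)(p)=f(g^{-1}p)$, the irreducible labeled $\la$ in the paper's convention is literally $V_\la^*$ as a $\GL_{n^2}$-module; its highest weight is $\la^*=(-\la_{n^2},\dots,-\la_1)$, so every coordinate of every weight of $V$ lies in $[-\la_1,-\la_{n^2}]$. From $-(n-m)d\ge -\la_1$ the inequality follows. Since the whole proof hinges on getting this sign right, it is worth stating this identification explicitly rather than leaving it as bookkeeping.
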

This poses a crucial restriction to the possible obstructions $\la$:
If we search for obstructions that prove Conj.~\ref{conj:valiantconj},
then we can assume that the first part of $\la$ is much larger than its other parts.

Mulmuley and Sohoni proposed to search for $\la$ that satisfy not only $a'_\la(d[n]) < a''_{\la,m}(d[n])$ but even $a'_\la(d[n]) = 0 < a''_{\la,m}(d[n])$.
Such $\lambda$ are called \emph{occurrence obstructions}.
It was recently shown that  no lower bounds better than $\dc(\per_m)>m^{25}$ can be proved with occurrence obstructions \cite{BuIkPa:no_occurrence_obstructions_in_GCT}.
Mulmuley and Sohoni proposed even further to use the following upper bound for $a'_\la(d[n])$ coming from the \emph{coordinate ring of the determinant orbit}:
The algebraic group $\GL_{n^2}$ is an affine variety and acts on itself by left and right multiplication.
Hence $\GL_{n^2} \times \GL_{n^2}$ acts on the coordinate ring $\IC[\GL_{n^2}]$.
The algebraic Peter-Weyl theorem (see e.g.\ \cite[II.3.1 Satz 3]{Kra:Geom_Meth_Invarianterntheorie}, \cite[Ch.~7, 3.1 Thm.]{Pro:LieGroups}, or \cite[Thm.~4.2.7]{GooWal:Symmetry_reps_invs}) tells us how its coordinate ring splits as a $\GL_{n^2}\times \GL_{n^2}$-representation:
$
\IC[\GL_{n^2}] \simeq \bigoplus_{\la} V_\la \otimes V_{\la^*},
$
where the sum is over all isomorphism types of $\GL_{n^2}$ and $\la^*$ is the type dual to $\la$.
If $p \in \IA_n$ has a reductive stabilizer $\calS\subseteq \GL_{n^2}$,
then the orbit $\GL_{n^2}p$ is an affine variety whose coordinate ring $\IC[\GL_{n^2}p]$ is the ring of right $\calS$-invariants:
$\IC[\GL_{n^2}p] = \IC[\GL_{n^2}]^\calS$, see \cite[Sec.~4.1 \& 4.2]{BuIkPa:no_occurrence_obstructions_in_GCT}.
For the determinant the stabilizer was already calculated by Frobenius \cite{Frob:Stabilizer_of_det}.
Functions on the orbit closure restrict to the orbit and since the orbit is dense in its closure this gives an embedding $\IC[\overline{\GL_{n^2}p}] \subseteq \IC[\GL_{n^2}p]$ and in each degree $d$ we have that
\begin{equation}
\label{eq:subrep}
\IC[\overline{\GL_{n^2}p}]_d \subseteq \IC[\GL_{n^2}p]_d
\end{equation}
is a $\GL_{n^2}$-subrepresentation; see also \cite{BI:15} for a study of the relationship between the two coordinate rings.
The multiplicities that arise in $\IC[\GL_{n^2}\det_n]_d$ are much more accessible than those in $\IC[\overline{\GL_{n^2}\det_n}]_d$. Indeed,
$
\IC[\GL_{n^2}\det_n]_d = \bigoplus_{\la} V_\la^{\oplus \sk(\la,n\times d)}.
$
Here the nonnegative integer $\sk(\la,n\times d)$ is the so-called \emph{rectangular symmetric Kronecker coefficient},
a quantity that can be described completely in terms of the symmetric group as follows.
The irreducible representations of the symmetric group $\aS_{D}$ are indexed by partitions $\la$ of $D$ into arbitrarily many parts and denoted by $[\la]$.
For partitions $\la \partinto D$ and $\mu \partinto D$ the group $\aS_D \times \aS_D$ acts irreducibly on the tensor product $[\la] \otimes [\mu]$,
but the embedding $\aS_D \hookrightarrow \aS_D \times \aS_D$, $g \mapsto (g,g)$ makes $[\la] \otimes [\mu]$ an $\aS_D$ representation that decomposes:
$
[\la] \otimes [\mu] \simeq \bigoplus_\nu [\nu]^{\oplus g(\la,\mu,\nu)},
$
where the sum is over all partitions of $D$
and the nonnegative integers $g(\la,\mu,\nu)$ are called the \emph{Kronecker coefficients}.
Finding a combinatorial expression for $g(\la,\mu,\nu)$ is a famous open problem in algebraic combinatorics (see Problem 10 in \cite{Sta:Positivity_probs_conj_alg_comb}).
If we replace the tensor product $[\la]\otimes[\la]$ by the symmetric square $S^2[\la]$ (which is the space of $\IZ_2$ invariants in $[\la]\otimes[\la]$,
where $\IZ_2$ switches the tensor factors),
we get the \emph{symmetric Kronecker coefficients}:
$
S^2[\la] \simeq \bigoplus_\mu [\mu]^{\oplus \sk(\mu,\la)},
$
where the sum is over all partitions of $D$
and the nonnegative integers $\sk(\mu,\la)$ are called the \emph{symmetric Kronecker coefficients}.
We denote by $n \times d := (d,d,\ldots,d)$ the partition of $nd = D$ in which each of the $n$ parts equals $d$ and call it the \emph{rectangular partition}
because its Young diagram is a rectangle.
Using this notation $\sk(\la,n \times d)$ is the multiplicity of $[\la]$ in $S^2[n \times d]$.

From \eqref{eq:subrep} it follows that $a'_\la(d[n]) \leq \sk(\la,n\times d)$. A partition $\la$ that satisfies $\sk(\la,n\times d)=0<a''_{\la,m}(d[n])$ is called an \emph{orbit occurrence obstruction}.
Its existence implies $\dc(\per_m)>n$.
Mulmuley and Sohoni conjectured that orbit occurrence obstructions which prove Conj.~\ref{conj:valiantconj} exist,
recently disproved in \cite{BuIkPa:no_occurrence_obstructions_in_GCT}.
A natural upper bound for $\sk(\la,n\times d)$ is the Kronecker coefficient $g(\la,n\times d,n\times d)$.
Mulmuley and Sohoni conjectured that the vanishing of $g(\la,n\times d,n\times d)$ suffices to
find sufficiently good orbit occurrence obstructions that prove Conj.~\ref{conj:valiantconj},
but recently \cite{IkPa:Rectangular_Kron_in_GCT} proved that no lower bounds better than $3m^4$ can be proved in this way.
Note that even the small polynomial $3m^4$ would be a highly nontrivial lower bound:
The best lower bound on $\dc(\per_m)$ is $\frac {m^2} 2$ by Mignon and Ressayre \cite{MigRes:Quadratic_bound_on_dc_perm}.
The paper \cite{IkPa:Rectangular_Kron_in_GCT} does not rule out that this lower bound could be improved using orbit occurrence obstructions
and their proof is tightly optimized to yield an exponent as small as possible.
Notably \cite{IkPa:Rectangular_Kron_in_GCT} does not make a statement about symmetric Kronecker coefficients because they are more challenging than Kronecker coefficients.
We will see in Section~\ref{sec:skcof} how trivial statements about Kronecker coefficients can become interesting if one studies symmetric Kronecker coefficients.

What all these different coefficients have in common are the \emph{semigroup properties}, which are all proved in the same way by multiplying two highest weight vectors.
Let $\la=(\la_1,\la_2,\ldots)$ and $\mu=(\mu_1,\mu_2,\ldots)$ be partitions, then $\la+\mu$ is defined as $(\la_1+\mu_1,\la_2+\mu_2,\ldots)$.
The semigroup property states that $a'_{\la+\mu}((d+d')[n]) \geq \max\{a'_\la(d[n]),a'_\mu(d'[n])\}$ and $g(\la+\alpha,\mu+\beta,\nu+\gamma) \geq \max\{g(\la,\mu,\nu),g(\alpha,\beta,\gamma)\}$.
Analogous properties hold for many other coefficients, e.g., for $a_\la(d[n])$, $a''_{\la,m}(d[n])$, and $\sk(\la, n \times d)$.

The results in \cite{IkPa:Rectangular_Kron_in_GCT} ($g(\la,n\times d,n\times d)>0$) and \cite{BuIkPa:no_occurrence_obstructions_in_GCT} ($a'_\la(d[n])>0$) are proved using the semigroup property in the following way:
One decomposes $\la$ into a sum of smaller partitions, then shows positivity for the smaller partitions, and then uses the semigroup property.
In both papers Prop.~\ref{pro:kadishlandsberg} is heavily used
because it enables us to assume that the smaller partitions have an almost arbitrarily chosen first part.
This simplifies the construction of these positive building blocks considerably.
Prop.~\ref{pro:kadishlandsberg} crucially uses that the permanent is padded with a high power of a linear form.

Moreover, also crucially using this padding,
\cite{EfrLanSchWey:shifted_partials_and_perm_vs_det} showed that the method of shifted partial derivatives applied to Prop.~\ref{pro:monoids} cannot be used to prove Conj.~\ref{conj:valiantconj}.

In the light of these no-go results we remove the necessity of the padding in the next section.

\subsection{The homogeneous setting}
Using a result by Nisan \cite{Nisan:Lower_bounds_non_commutative_compts} Prop.~\ref{pro:monoids} and the whole geometric complexity theory approach can be reformulated \emph{without padding the permanent}:
Let $\IA_n^m$ denote the space of homogeneous degree $m$ polynomials in $n^2$ variables.
Let
$
\Pow_n^m := \tr(X^m) \in \IA_n^m, \text{ where } X=(X_{i,j}) \text{ is the $n \times n$ variable matrix}.
$
Let $\pc(\per_m)$ denote the smallest $n$ such that
$\per_m$ can be written as $p=\tr(A^m)$, where $A$ is an $n \times n$ matrix whose entries are \emph{homogeneous} linear forms.

It is well known (\cite[Lem.~1]{Nisan:Lower_bounds_non_commutative_compts}, see also~\cite[Exe.~5.1]{Sap:LowerBounds_in_arithm_circuits_compl} or \cite[Rem.~4.5]{IkLan:Compl_of_perm_in_various_comp_models})
that $\pc(\per_m)$ and $\dc(\per_m)$ are polynomially equivalent and hence Conj.~\ref{conj:valiantconj} is equivalent to
\begin{equation}\label{eq:valiantconjII}
\text{the sequence $\pc(\per_m)$ grows superpolynomially}.
\end{equation}
Interestingly, the proof of the best known upper bound $\dc(\per_m)\leq 2^m-1$ by Grenet \cite{Gre:11} also works for this measure: $\pc(\per_m)\leq 2^m-1$. Completely analogously to Prop.~\ref{pro:monoids} one can show
\begin{proposition}\label{pro:monoidsII}
If $\overline{\GL_{n^2}\per_m} \not\subseteq \overline{\GL_{n^2}\Pow_n^m}$, then $\pc(\per_m)>n$.
\end{proposition}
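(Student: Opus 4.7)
The plan is to prove the contrapositive: assume $\pc(\per_m) \leq n$ and deduce $\overline{\GL_{n^2}\per_m} \subseteq \overline{\GL_{n^2}\Pow_n^m}$. The argument parallels the standard proof of Prop.~\ref{pro:monoids}, but is cleaner because both $\per_m$ and $\Pow_n^m$ already sit in the same degree-$m$ component $\IA_n^m$, so no padding step is needed.

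By hypothesis, there is an $n \times n$ matrix $A = (A_{i,j})$ whose entries are homogeneous linear forms in the $m^2$ variables $\{X_{i,j} : 1 \leq i,j \leq m\}$ satisfying $\per_m = \tr(A^m)$. First I would identify these $m^2$ variables with $m^2$ of the $n^2$ variables of $\Pow_n^m = \tr(X^m)$, so that each $A_{i,j}$ becomes a linear form on $\IC^{n^2}$. The substitution $X_{i,j} \mapsto A_{i,j}$ then defines a linear endomorphism $M$ of the $n^2$-dimensional space of linear forms. When $M$ is invertible, $M \in \GL_{n^2}$ and the induced action on $\IA_n^m$ sends $\Pow_n^m = \tr(X^m)$ to $\tr(A^m) = \per_m$, placing $\per_m$ directly in $\GL_{n^2}\Pow_n^m$.

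Of course $M$ has rank at most $m^2 < n^2$, so genuine invertibility fails and we must pass to a limit. For this I would pick any $n \times n$ matrix $B$ of homogeneous linear forms in the full $n^2$-variable set whose associated substitution $M_B : \IC^{n^2} \to \IC^{n^2}$ is invertible (e.g.\ $B_{i,j} := X_{i,j}$ in the standard coordinates), set $A_\epsilon := A + \epsilon B$, and observe that $M + \epsilon M_B$ is invertible for all sufficiently small nonzero $\epsilon \in \IC$. Hence $\tr(A_\epsilon^m) \in \GL_{n^2}\Pow_n^m$ for such $\epsilon$, and $\tr(A_\epsilon^m) \to \tr(A^m) = \per_m$ as $\epsilon \to 0$ by continuity of matrix power and trace. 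This places $\per_m$ in $\overline{\GL_{n^2}\Pow_n^m}$.

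To finish, use that $\overline{\GL_{n^2}\Pow_n^m}$ is closed and $\GL_{n^2}$-invariant: it therefore contains $\GL_{n^2}\per_m$ and hence $\overline{\GL_{n^2}\per_m}$, giving the required inclusion. Taking the contrapositive yields Prop.~\ref{pro:monoidsII}. I foresee no real obstacle here; everything reduces to elementary linear algebra, with the only substantive point being that a rank-deficient linear substitution can always be perturbed into an invertible one inside $\GL_{n^2}$.
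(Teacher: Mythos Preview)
Your argument is correct and is exactly the standard contrapositive-plus-perturbation proof the paper has in mind when it writes ``Completely analogously to Prop.~\ref{pro:monoids} one can show'' without giving further details. There is nothing to add; your write-up simply makes explicit the limit argument (approximating the rank-deficient substitution $M$ by invertible $M+\epsilon M_B$) that the paper leaves implicit.
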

We call Prop.~\ref{pro:monoidsII} the \emph{homogeneous setting}.
To study representation theoretic obstructions in the homogeneous setting we consider the splitting of the coordinate rings in the same way as in the padded setting:
$
\IC[\overline{\GL_{n^2}\per_m}]_d = \bigoplus_\la V_\la^{\oplus q_\la(d[m])}$ and $\IC[\overline{\GL_{n^2}\Pow_n^m}]_d = \bigoplus_\la V_\la^{\oplus t_{\la,n}(d[m])}.
$
It follows from \cite[Thm.~6.1.5]{BuLaMaWe:Ov_math_iss_in_GCT} that $q_\la(d[m])$ does not depend on $n$ for $n \geq m$, so the notation is justified.
As in the padded setting Schur's lemma implies:
\begin{corollary}
If $q_\la(d[m])>t_{\la,n}(d[m])$, then $\pc(\per_m)>n$.
\end{corollary}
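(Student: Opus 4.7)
The plan is to mimic verbatim the argument that was used in the padded setting to pass from Prop.~\ref{pro:monoids} to Prop.~\ref{pro:reprthobs}, but this time applied to Prop.~\ref{pro:monoidsII}. I would proceed by contrapositive, i.e.\ I would assume the inclusion $\overline{\GL_{n^2}\per_m} \subseteq \overline{\GL_{n^2}\Pow_n^m}$ and show that this forces $q_\la(d[m]) \leq t_{\la,n}(d[m])$ for every partition $\la$ and every degree $d$; negating this gives the claimed implication via Prop.~\ref{pro:monoidsII}.

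The first step is to observe that both orbit closures are cones in $\IA_n^m$: since $\GL_{n^2}$ contains all nonzero scalar multiples of the identity, and $\per_m$ and $\Pow_n^m$ are both homogeneous of degree $m$, their orbits are already closed under multiplication by any nonzero complex scalar, and taking closures preserves this property. Consequently both coordinate rings $\IC[\overline{\GL_{n^2}\per_m}]$ and $\IC[\overline{\GL_{n^2}\Pow_n^m}]$ inherit a grading from $\IC[\IA_n^m]$, and their degree $d$ components decompose as stated.

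The second step is the restriction map: under the assumed inclusion of varieties, restricting polynomial functions from $\overline{\GL_{n^2}\Pow_n^m}$ to $\overline{\GL_{n^2}\per_m}$ gives a surjective, degree-preserving, $\GL_{n^2}$-equivariant ring homomorphism
\[
\IC[\overline{\GL_{n^2}\Pow_n^m}]_d \twoheadrightarrow \IC[\overline{\GL_{n^2}\per_m}]_d.
\]
Schur's lemma applied isotypic-component-wise to this surjection of finite dimensional $\GL_{n^2}$-representations yields $q_\la(d[m]) \leq t_{\la,n}(d[m])$ for all $\la$. Contrapositively, if some $\la$ violates this inequality then $\overline{\GL_{n^2}\per_m} \not\subseteq \overline{\GL_{n^2}\Pow_n^m}$, so by Prop.~\ref{pro:monoidsII} we conclude $\pc(\per_m)>n$.

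There is essentially no obstacle here: the argument is a formal transcription of the one preceding Prop.~\ref{pro:reprthobs}. The only mild point worth verifying is that the two varieties are cones (so that comparing multiplicities degree by degree is legitimate), and this is immediate from homogeneity of $\per_m$ and $\Pow_n^m$ together with the presence of scalar matrices in $\GL_{n^2}$.
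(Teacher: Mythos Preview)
Your proposal is correct and follows exactly the approach indicated in the paper, which simply states ``As in the padded setting Schur's lemma implies'' before the corollary. You have accurately filled in the details: the contrapositive via Prop.~\ref{pro:monoidsII}, the observation that both orbit closures are cones so the coordinate rings are graded, the degreewise surjective $\GL_{n^2}$-equivariant restriction map, and the appeal to Schur's lemma on isotypic components.
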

In Section~\ref{sec:stabinv_semigroup} we calculate how the coordinate ring of the orbit $\GL_{n^2}\Pow_n^m$ splits.
This is based on knowing the stabilizer of $\Pow_n^m$:
\begin{theorem}\label{thm:stabilizer}
Let $X=(X_{i,j})$ be an $n \times n$ variable matrix. Then
$\tr(X^m) = \tr((X^t)^m)$ and $\tr(X^m) = \tr((gXg^{-1})^m)$, where $g \in \GL_n$, and $\tr(X^m) = \tr((\omega X)^m)$, where $\omega$ is an $m$-th root of unity. Moreover, if $n,m \geq 3$, the whole stabilizer $\calS$ of $\Pow_n^m$ is generated by these symmetries.
\end{theorem}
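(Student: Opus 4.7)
The easy direction follows from standard trace identities: $\tr((X^t)^m)=\tr((X^m)^t)=\tr(X^m)$, $\tr((gXg^{-1})^m)=\tr(gX^mg^{-1})=\tr(X^m)$, and $\tr((\omega X)^m)=\omega^m\tr(X^m)=\tr(X^m)$. For the hard direction, my plan is to identify $\calS$ as a closed algebraic subgroup of $\GL_{n^2}$ by computing its Lie algebra $\mathfrak{s}$ (which pins down $\calS^\circ$) and then its component group $\calS/\calS^\circ$.

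Differentiating the stabilizer condition I obtain $\mathfrak{s}=\{\phi\in\End(\IC^{n\times n}):\tr(X^{m-1}\phi(X))=0\text{ for all }X\}$, which manifestly contains the $(n^2-1)$-dimensional commutator subalgebra $\{X\mapsto[A,X]:A\in\fraksl_n\}$. To prove equality I would decompose $\IC^{n\times n}=\IC\cdot I\oplus\fraksl_n$ under the $\PGL_n$-conjugation action, split $\phi$ into the four corresponding blocks, and write $X=xI+X_0$ with $x\in\IC$ and $X_0\in\fraksl_n$. Expanding $\tr(X^{m-1}\phi(X))\equiv 0$ as a polynomial in $x$, the top three coefficients successively kill the three ``mixed'' blocks; the crucial step (forcing the component $\IC\cdot I\to\fraksl_n$ to vanish) uses that $\{X_0^2:X_0\in\fraksl_n\}$ spans all of $\IC^{n\times n}$ when $n\geq 3$. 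The surviving block $\psi:\fraksl_n\to\fraksl_n$ must satisfy $\tr(X_0^s\psi(X_0))=0$ for all $X_0\in\fraksl_n$ and $s=1,\dots,m-1$. The case $s=1$ places $\psi$ in $\Lambda^2\fraksl_n\subseteq\End(\fraksl_n)$ (antisymmetric for the trace pairing), which as a $\PGL_n$-module decomposes into the adjoint $\fraksl_n$ (image of $\mathrm{ad}:\fraksl_n\to\End(\fraksl_n)$) plus higher-weight irreducible summands. The further condition $s=2$ (available thanks to $m\geq 3$) yields a $\PGL_n$-equivariant linear functional that vanishes on $\mathrm{ad}(\fraksl_n)$ but, by Schur's lemma together with a direct highest-weight calculation, is injective on each of the other irreducible summands. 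Hence $\mathfrak{s}$ consists exactly of the commutator operators $X\mapsto[A,X]$, $A\in\fraksl_n$, and $\calS^\circ=\PGL_n$ acting by conjugation.

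For the component group, every $\Phi\in\calS$ normalizes $\calS^\circ$ and hence induces an automorphism of $\PGL_n$. For $n\geq 3$ one has $\Out(\PGL_n)=\IZ/2$ generated by $g\mapsto(g^{-1})^t$, a class realized in $\calS$ by transposition $X\mapsto X^t$. After multiplying $\Phi$ by a conjugation and possibly by transposition, I may assume $\Phi$ centralizes $\PGL_n$. Schur's lemma applied to $\IC\cdot I\oplus\fraksl_n$ then forces $\Phi(I)=\alpha I$ and $\Phi|_{\fraksl_n}=\beta\cdot\id$. Substituting into $\tr(\Phi(X)^m)=\tr(X^m)$ and matching coefficients of $\tr(X)^{m-k}$ for $k=0,2,3$ gives $\alpha^m=\alpha^{m-2}\beta^2=\alpha^{m-3}\beta^3=1$, forcing $\alpha=\beta$ with $\alpha^m=1$; that is, $\Phi$ is scaling by an $m$-th root of unity. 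Both $m\geq 3$ (for the $k=3$ identity) and $n\geq 3$ (so that $\tr(X_0^3)$ is not identically zero on $\fraksl_n$, which by Cayley--Hamilton it would be for $n=2$) are used here.

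The main obstacle is the Lie algebra computation: identifying the kernel of the family of linear maps $\psi\mapsto(X_0\mapsto\tr(X_0^s\psi(X_0)))_{s=1,\dots,m-1}$ on $\End(\fraksl_n)$ and showing it equals $\mathrm{ad}(\fraksl_n)$. Both hypotheses $n,m\geq 3$ are essential here: for $m=2$ the stabilizer is the full orthogonal group of the non-degenerate trace pairing $\tr(XY)$, and for $n=2$ Cayley--Hamilton reduces $\tr(X^m)$ to a polynomial in $\tr(X)$ and $\det(X)$, producing extra symmetries absent from the claimed list.
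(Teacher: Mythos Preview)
Your overall architecture---compute the Lie algebra to pin down $\calS^\circ$, then pass to the normalizer and use $\Out(\PGL_n)=\IZ/2$ together with Schur's lemma on $\IC I\oplus\fraksl_n$ to determine the component group---is exactly the paper's. Your treatment of the discrete part (transposition realizes the outer automorphism, the centralizer is $(\IC^*)^2$, and matching coefficients forces $\alpha=\beta$ an $m$th root of unity) is essentially identical to the paper's argument, with your $k=0,2,3$ matching making explicit what the paper phrases as ``$\Pow^m_n$ is not bi-homogeneous.''

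Where you diverge is in the Lie-algebra computation. The paper does not decompose $\End(V)$ as a $\PGL_n$-module; it works directly with the basis $\xi^i_j\otimes x^k_\ell$ of $\End(V)$, writes out $(\xi^i_j\otimes x^k_\ell)\cdot\Pow^m_n=m\,x^k_\ell\,(X^{m-1})^i_j$, and for each index pattern $(i,j,k,\ell)$ exhibits a specific monomial that can arise from no other basis element, forcing the corresponding coefficient to vanish. This is elementary and self-contained but case-heavy. Your route via the $\PGL_n$-module structure of $\End(V)$ is more conceptual and, if it works, explains uniformly why $m\ge3$ and $n\ge3$ enter.

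There is, however, a genuine gap in your reduction. Writing $X=xI+X_0$ and expanding $\tr(X^{m-1}\phi(X))$ in $x$, the top coefficients do \emph{not} kill the mixed blocks successively: the $x^{m-1}$-coefficient gives $n\,\ell(X_0)+(m-1)\tr(X_0v)=0$, which only relates $\ell$ to $v$, and the $x^{m-2}$-coefficient gives $\binom{m-1}{2}\tr(X_0^2v)+(m-1)\tr(X_0\psi(X_0))=0$, which couples $v$ to $\psi$ rather than killing $v$. So the spanning fact about $\{X_0^2\}$ does not by itself force $v=0$. The clean fix is to use from the outset that $\mathfrak{s}$ is a $\PGL_n$-submodule of $\End(V)$ (since $\PGL_n\subseteq\calS$ already), so each of the irreducible pieces $\IC$, $\fraksl_n$ (the $v$-block), $\fraksl_n^*$ (the $\ell$-block) is either entirely in $\mathfrak{s}$ or entirely out, and a single substitution disposes of each. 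After that reduction your remaining claim---that on antisymmetric $\psi\in\Lambda^2\fraksl_n$ the condition $\tr(X_0^2\psi(X_0))\equiv0$ already forces $\psi\in\mathrm{ad}(\fraksl_n)$---is plausible but is the real content; the ``direct highest-weight calculation'' needs to be carried out (identify the non-adjoint summands of $\Lambda^2\fraksl_n$ and check the map is nonzero on a highest-weight vector of each), since this is precisely where both hypotheses $n\ge3$ and $m\ge3$ are doing work.
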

Theorem~\ref{thm:stabilizer} is proved in Section~\ref{sec:stab}.
\begin{theorem}\label{thm:sm} For $n,m\geq 3$ we have
$
\IC[\GL_{n^2}\Pow_n^m]_d = \IC[\GL_{n^2}]^{\calS}_d = \bigoplus_\la V_\la^{\oplus \sm(\la,n)},
$
where the sum is over all $\la \partinto md$ and $\sm(\la,n) := \sum_{\mu \partinto[n]dm} \sk(\la,\mu)$ is a sum of symmetric Kronecker coefficients.
\end{theorem}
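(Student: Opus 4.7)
The plan is to combine the stabilizer description from Theorem~\ref{thm:stabilizer} with the algebraic Peter-Weyl theorem and then carry out the representation-theoretic bookkeeping. By Theorem~\ref{thm:stabilizer}, for $n,m\geq 3$ the stabilizer $\calS$ is the reductive group generated by $\PGL_n$ acting by conjugation on $M_n = V\otimes V^*$ (with $V = \IC^n$), the transpose $\tau$, and the group $\{\omega\cdot\Id_{n^2}:\omega^m=1\}$ of scalar $m$-th roots of unity. Its reductivity together with \cite[Sec.~4.1 \& 4.2]{BuIkPa:no_occurrence_obstructions_in_GCT} yields the first equality $\IC[\GL_{n^2}\Pow_n^m] = \IC[\GL_{n^2}]^\calS$, and the algebraic Peter-Weyl theorem gives $\IC[\GL_{n^2}]^\calS = \bigoplus_\la V_\la \otimes (V_\la^*)^\calS$. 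Hence the multiplicity of $V_\la$ in $\IC[\GL_{n^2}\Pow_n^m]_d$ equals $\dim(V_\la^*)^\calS = \dim V_\la^\calS$ (invariants of the dual agree with invariants of the module for reductive groups), and the induced grading restricts the sum to $|\la| = md$ in degree $d$.

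To compute $\dim V_\la^\calS$ for $|\la|=md$, observe first that the scalar $m$-th roots of unity act on $V_\la$ by $\omega^{|\la|}$, which is trivial when $m\mid|\la|$, so they impose no further condition. For the $\PGL_n$-action, Schur-Weyl duality yields the Kronecker decomposition
\[ V_\la = \bbS_\la(V\otimes V^*) = \bigoplus_{\mu,\nu\partinto[n] md} g(\la,\mu,\nu)\,\bbS_\mu V \otimes \bbS_\nu V^*, \]
with multiplicity space canonically identified with $\Hom_{\aS_{md}}([\la],[\mu]\otimes[\nu])$. By Schur's lemma, only the diagonal summands $\mu=\nu$ contribute $\PGL_n$-invariants, each producing a one-dimensional trace line in $\bbS_\mu V\otimes \bbS_\mu V^* = \End(\bbS_\mu V)$, tensored with the full multiplicity space.

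Intersecting next with the $\tau$-invariants, the transpose preserves the trace line inside each $\bbS_\mu V \otimes \bbS_\mu V^*$ and acts on the Kronecker multiplicity space $\Hom_{\aS_{md}}([\la],[\mu]\otimes[\mu])$ by post-composition with the swap $[\mu]\otimes[\mu]\to[\mu]\otimes[\mu]$ of the two tensor factors. The fixed subspace of that swap is the symmetric square $S^2[\mu]$, so the $\tau$-invariants on the $\mu$-summand have dimension $\dim \Hom_{\aS_{md}}([\la], S^2[\mu]) = \sk(\la,\mu)$. Summing over $\mu\partinto[n]md$ yields $\dim V_\la^\calS = \sum_{\mu} \sk(\la,\mu) = \sm(\la,n)$, as claimed.

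The main technical step is to verify that $\tau$ acts on the Kronecker multiplicity space precisely by the swap of the two $[\mu]$-factors. The cleanest route is to identify $V^*\cong V$ via the standard basis, under which $\tau$ becomes the ordinary tensor swap $v\otimes w\mapsto w\otimes v$ on $V\otimes V$. Then $\tau^{\otimes md}$ on $(V\otimes V)^{\otimes md}\cong V^{\otimes md}\otimes V^{\otimes md}$ is the global interchange of the two big tensor factors, which under Schur-Weyl interchanges the Specht factors $[\mu]$ and $[\nu]$ in the isotypic decomposition $\bigoplus(\bbS_\mu V \otimes \bbS_\nu V)\otimes[\mu]\otimes[\nu]$; restricting to $\mu=\nu$ realises this as the factor-swap on $[\mu]\otimes[\mu]$, whose invariants are $S^2[\mu]$. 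All other ingredients (reductivity of $\calS$, the Peter-Weyl decomposition, and the Schur's lemma reduction to diagonal summands) are standard, so the transpose computation is where the real bookkeeping sits.
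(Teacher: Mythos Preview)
Your proposal is correct and follows essentially the same route as the paper: reduce via Peter--Weyl to computing $\dim(\bbS_\la V)^{\calS}$, observe that the $m$-th roots of unity act trivially when $m\mid|\la|$, use the Kronecker decomposition to identify the $\PGL_n$-invariants with $\bigoplus_\mu K_\la^{\mu,\mu}\otimes\Id_{\bbS_\mu E}$, and then check that $\tau$ acts on each $K_\la^{\mu,\mu}$ by the factor swap so that its fixed space has dimension $\sk(\la,\mu)$. Your ``cleanest route'' (identifying $E^*\cong E$ via the standard basis so that $\tau$ becomes the tensor swap) is exactly the paper's computation with the isomorphism~$\delta$, carried out slightly more informally; the paper makes the commutation explicit via a diagram and records separately (Lemma~\ref{lemma: tau restricts to K's}) that $\tau$ fixes the trace line, but the content is the same.
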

Note that $\sm(\la,n)$ does not depend on $d$ and $m$ independently, but only on their product $dm = |\la|$, therefore the notation $\sm(\la,n)$ is justified.

Analogously to the padded setting we have the inclusion $\IC[\overline{\GL_{n^2}\Pow_n^m}]_d \subseteq \IC[\GL_{n^2}\Pow_n^m]_d$ of $\GL_n^2$-representations,
therefore $t_{\la,n}(d[m]) \leq \sm(\la,n)$.

\begin{corollary}\label{cor:hope}
If $\la\partinto dm$ and $\sm(\la,n)=0<q_\la(d[m])$, then $\pc(\per_m)>n$.
\end{corollary}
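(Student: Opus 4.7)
The plan is to derive Corollary \ref{cor:hope} as a direct formal consequence of the two facts recorded immediately before its statement: Theorem \ref{thm:sm}, which identifies the $V_\la$-multiplicity in $\IC[\GL_{n^2}\Pow_n^m]_d$ as $\sm(\la,n)$, and the inclusion $\IC[\overline{\GL_{n^2}\Pow_n^m}]_d \subseteq \IC[\GL_{n^2}\Pow_n^m]_d$ of $\GL_{n^2}$-representations, which gives the bound $t_{\la,n}(d[m]) \leq \sm(\la,n)$.

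First I would observe that the hypothesis $\sm(\la,n) = 0$, combined with this upper bound, forces $t_{\la,n}(d[m]) = 0$. The other hypothesis $q_\la(d[m]) > 0$ then yields the strict inequality $q_\la(d[m]) > t_{\la,n}(d[m])$, which is exactly the hypothesis of the unnumbered corollary stated just before (the homogeneous analogue of Prop.~\ref{pro:reprthobs}, obtained from Prop.~\ref{pro:monoidsII} together with Schur's lemma applied to the restriction map $\IC[\overline{\GL_{n^2}\Pow_n^m}]_d \twoheadrightarrow \IC[\overline{\GL_{n^2}\per_m}]_d$ that would exist if $\overline{\GL_{n^2}\per_m} \subseteq \overline{\GL_{n^2}\Pow_n^m}$). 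Invoking that corollary delivers $\pc(\per_m) > n$.

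There is really no obstacle here: the corollary is a bookkeeping statement that repackages the preceding multiplicity inequality in the form in which it will be used to hunt for \emph{orbit occurrence obstructions} in the homogeneous setting, and all the content has already been discharged in Theorem \ref{thm:sm} and in the embedding of orbit closure functions into orbit functions. The real work lies upstream, in proving Theorem \ref{thm:stabilizer} and translating the $\calS$-invariants of $\IC[\GL_{n^2}]$ into the sum of symmetric Kronecker coefficients $\sm(\la,n) = \sum_{\mu \partinto[n] dm} \sk(\la,\mu)$; once that is in hand, Corollary \ref{cor:hope} follows in two lines.
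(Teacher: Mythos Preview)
Your proposal is correct and matches the paper's implicit argument exactly: the paper does not write out a proof of Corollary~\ref{cor:hope} at all, treating it as an immediate consequence of the bound $t_{\la,n}(d[m]) \leq \sm(\la,n)$ (from the inclusion $\IC[\overline{\GL_{n^2}\Pow_n^m}]_d \subseteq \IC[\GL_{n^2}\Pow_n^m]_d$) combined with the preceding unnumbered corollary. Your two-line derivation is precisely what the reader is meant to supply.
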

We call these $\la$ \emph{orbit occurrence obstructions}.
We prove that no super\emph{linear} lower bounds can be proved with orbit occurrence obstructions:
\begin{theorem}[{\bf Main Result}]\label{thm:main}
Let $m \geq 10$ and $n \geq m+2$.
For every $\la\partinto dm$ that satisfies $q_\la(d[m])>0$ we have $\sm(\la,n)>0$.
\end{theorem}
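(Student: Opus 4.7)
My plan is to produce, for every $\la\partinto dm$ with $q_\la(d[m])>0$, a partition $\mu\partinto[n]dm$ satisfying $\sk(\la,\mu)>0$; by Theorem~\ref{thm:sm} this yields $\sm(\la,n)\geq\sk(\la,\mu)>0$. The strategy combines a structural constraint on $\la$ with a semigroup/building-block argument in the spirit of~\cite{BuIkPa:no_occurrence_obstructions_in_GCT,IkPa:Rectangular_Kron_in_GCT}, adapted to the homogeneous setting where no Kadish--Landsberg-type bound on $\la_1$ is available.

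The structural constraint I would extract first is $\ell(\la)\leq m^2$. Since $\per_m$ involves only the $m^2$ variables $X_{i,j}$ with $1\leq i,j\leq m$, the stabilizer $\calS_{\per}$ of $\per_m$ contains a copy of $\GL_{n^2-m^2}$ acting trivially on the $m^2$-dimensional subspace spanned by those variables; it follows that any $V_\la$ occurring in $\IC[\overline{\GL_{n^2}\per_m}]_d$ must have $\ell(\la)\leq m^2$. This finite range for $\la$ is the homogeneous analog of Prop.~\ref{pro:kadishlandsberg}, though substantially weaker in that it constrains the number of parts rather than the size of the first part.

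Given this constraint, I would use the semigroup property for $\sk$ (obtained by multiplying two highest weight vectors in $S^2[\mu']\otimes S^2[\mu'']\hookrightarrow S^2[\mu'+\mu'']$): if $\sk(\la',\mu')>0$ and $\sk(\la'',\mu'')>0$ then $\sk(\la'+\la'',\mu'+\mu'')>0$, with $\ell(\mu'+\mu'')=\max\{\ell(\mu'),\ell(\mu'')\}$ so that the bound $\ell(\mu)\leq n$ is preserved by addition. Combined with the analogous semigroup property for $q_\la$, the plan is to decompose $\la=\sum_i\la^{(i)}$ into small building blocks each with $q_{\la^{(i)}}(d^{(i)}[m])>0$ and to exhibit compatible $\mu^{(i)}\partinto[n]|\la^{(i)}|$ with $\sk(\la^{(i)},\mu^{(i)})>0$. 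Natural building blocks are single-row partitions $(a)$, for which $\sk((a),\mu)=1$ for every $\mu\partinto a$ because the trivial representation of $\aS_a$ appears in $S^2[\mu]$ with multiplicity one; hooks, whose symmetric Kronecker coefficients admit explicit character formulas; and small rectangles, where the decomposition $[\mu]\otimes[\mu]=S^2[\mu]\oplus\Lambda^2[\mu]$ is tractable.

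The main obstacle is the decomposition step. The semigroup property is only stated in the addition direction, so it is not automatic that every $\la$ with $q_\la(d[m])>0$ can be written as a sum of the chosen building blocks with positive $q$. In the padded setting~\cite{BuIkPa:no_occurrence_obstructions_in_GCT} this was finessed via Prop.~\ref{pro:kadishlandsberg}, which fixed the first part of the building blocks; here I expect one must exploit the full force of $\ell(\la)\leq m^2$ together with a detailed analysis of the $\calS_{\per}$-invariants in $V_\la$ (whose existence is implied by $q_\la>0$) to extract the desired decomposition. A secondary difficulty is that symmetric Kronecker coefficients do not enjoy the combinatorial interpretations available for ordinary Kronecker coefficients, so each $\sk(\la^{(i)},\mu^{(i)})>0$ must be checked by direct character computation or by a tailored argument based on known positivity for $S^2[\mu^{(i)}]$. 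The thresholds $m\geq 10$ and $n\geq m+2$ presumably provide just enough slack to accommodate these constructions uniformly across all admissible $\la$.
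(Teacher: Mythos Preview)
Your proposal has the right high-level shape---extract structural constraints on $\la$ from $q_\la(d[m])>0$, then use a semigroup/building-block argument to force $\sm(\la,n)>0$---but it diverges from the paper's proof at two crucial points and contains a genuine gap.

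First, you only extract $\ell(\la)\leq m^2$. The paper also uses $q_\la(d[m])\leq a_\la(d[m])$ together with a plethysm vanishing result (Prop.~\ref{pro:plethvanish}): if $\la_1<m$ then $a_\la(d[m])=0$, proved via the Kostka-number bound $a_\la(d[m])\leq K_{\la,d\times m}$. This gives $\la_1\geq m\geq 10$, which is exactly what replaces Prop.~\ref{pro:kadishlandsberg} in the homogeneous setting and is what makes the thresholds $m\geq 10$, $n\geq m+2$ work. Without it you cannot rule out the finitely many exceptional partitions (single columns, $(2,1,1)$, $(3,1,1)$, $(2,1^7)$, $(2,2,1^k)$) for which $\sm$ genuinely vanishes or is hard to control.

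Second, your decomposition strategy is aimed in the wrong direction. You propose to write $\la=\sum_i\la^{(i)}$ with each $q_{\la^{(i)}}(d^{(i)}[m])>0$; you correctly note that the semigroup property for $q$ does not furnish such a decomposition, and indeed the paper never attempts this. Once $\ell(\la)\leq m^2$ and $\la_1\geq m$ are in hand, $q$ is discarded entirely and the positivity of $\sm(\la,n)$ is established for \emph{all} such $\la$ by decomposing $\la$ column by column (not into rows or hooks). The semigroup used is for $\sm$ and its antisymmetric companion $\am$ (Prop.~\ref{prop:semigroup}), not for $\sk$ with a fixed second argument; the interplay of $\sm$ and $\am$ is essential because single columns $1^a$ have $\sm(1^a,\ell)=0$ for $a\in\{2,3,4,7,8,12\}$. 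The key technical input is Theorem~\ref{thm:signaction}, which determines whether $\sk(1^D,\mu)$ or $\ak(1^D,\mu)$ equals $1$ for self-conjugate $\mu$ via the sign of $\mu$, together with a finite computer check for small cases (Prop.~\ref{prop:small_values}).
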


This is the first time that the possibility of superlinear lower bounds is ruled out in geometric complexity theory.

Note that in contrast to \cite{IkPa:Rectangular_Kron_in_GCT} we work directly with the multiplicities in the coordinate ring of the orbit and not with any upper bound.

The methods used to prove this result differ greatly from \cite{IkPa:Rectangular_Kron_in_GCT},
in particular \cite{BuIkPa:no_occurrence_obstructions_in_GCT} lifts the result in \cite{IkPa:Rectangular_Kron_in_GCT} to the closure, which appears to be challenging in the homogeneous setting because of the absence of the padding.

\begin{remark}
We remark that even though the homogeneous setting is equivalent to the padded setting in terms of algebraic complexity theory in a very natural way,
$\Pow_n^m$ is \emph{not} characterized by its stabilizer (see Prop.~\ref{prop: space invariant under S}), unlike the determinant.
Obtaining a homogeneous setting in which the computational model is characterized by its stabilizer is also possible:
one has to study the orbit closure of the $m$-factor iterated $n \times n$ matrix multiplication,
a polynomial in $m n^2$ variables,
which seems to be even more challenging.
Its stabilizer has been identified in \cite{Ges:Geometry_of_IMM}.
\end{remark}

\begin{proof}[Proof of Theorem~\ref{thm:main}]
We start with some simple observations on $q_\la(d[m])$.
\begin{lemma}\label{lem:easyrestronla}
$q_\la(d[m]) \leq a_\la(d[m])$.
Moreover, if $q_\la(d[m])>0$, then $\ell(\la)\leq m^2$ and $|\la|=md$.
\end{lemma}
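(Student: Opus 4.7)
The plan is to prove the three claims separately, all using standard representation-theoretic arguments about orbit closures.

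For the inequality $q_\la(d[m]) \leq a_\la(d[m])$, I would use that $\overline{\GL_{n^2}\per_m}$ is a closed, $\GL_{n^2}$-invariant subvariety of $\IA_n^m$. The restriction of polynomial functions yields a surjective $\GL_{n^2}$-equivariant map $\IC[\IA_n^m]_d \twoheadrightarrow \IC[\overline{\GL_{n^2}\per_m}]_d$. Since both are completely reducible $\GL_{n^2}$-modules, Schur's lemma forces the multiplicity of every isotypic component to be non-increasing under this surjection, giving $q_\la(d[m]) \leq a_\la(d[m])$.

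For the weight identity $|\la| = md$, I would observe that $\IC[\IA_n^m]_d \cong S^d(S^m \IC^{n^2})$ as a $\GL_{n^2}$-module, and the central one-parameter subgroup $\IC^\times \cdot \Id \subseteq \GL_{n^2}$ acts on $S^d(S^m \IC^{n^2})$ by the character $t \mapsto t^{-md}$ (each of the $d$ tensor factors contributes weight $-m$, or $+m$ depending on conventions). Hence every irreducible constituent $V_\la$ of this module must satisfy $|\la| = md$, and this property is inherited by every subquotient, in particular by $\IC[\overline{\GL_{n^2}\per_m}]_d$.

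For the length bound $\ell(\la) \leq m^2$, the key observation is that $\per_m$ only involves the $m^2$ variables $X_{i,j}$ with $1 \le i,j \le m$, so $\per_m \in S^m(\IC^{m^2}) \subseteq S^m(\IC^{n^2})$. Therefore $\GL_{n^2} \per_m \subseteq \GL_{n^2} \cdot S^m(\IC^{m^2})$, and the latter orbit closure is the subspace variety $\mathrm{Sub}_{m^2}(S^m \IC^{n^2})$ consisting of degree-$m$ forms that can be written using only $m^2$ linear forms. Since $\mathrm{Sub}_{m^2}$ is closed, we have $\overline{\GL_{n^2} \per_m} \subseteq \mathrm{Sub}_{m^2}(S^m \IC^{n^2})$, and the restriction $\IC[\mathrm{Sub}_{m^2}] \twoheadrightarrow \IC[\overline{\GL_{n^2}\per_m}]$ is a $\GL_{n^2}$-surjection. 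The claim then follows from the classical fact (see e.g.\ Weyman's book on cohomology of vector bundles) that only $V_\la$ with $\ell(\la) \leq m^2$ occur in $\IC[\mathrm{Sub}_{m^2}(S^m \IC^{n^2})]$.

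None of the three parts presents real difficulty: the whole lemma is a routine collection of facts that are already standard in the GCT literature, invoked here only to constrain the shape of partitions that matter for the main theorem. The only point worth being careful about is phrasing the length bound via the subspace variety rather than trying to read it off directly from a parabolic stabilizer of $\per_m$.
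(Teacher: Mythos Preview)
Your proposal is correct and follows essentially the same route as the paper. The paper's proof is a three-line affair: it gets $q_\la(d[m])\le a_\la(d[m])$ from the surjection $\IC[\IA_n^m]_d\twoheadrightarrow\IC[\overline{\GL_{n^2}\per_m}]_d$, cites \cite[Lem.~4.3.3]{Ike:PhDthesis} for $|\la|=md$, and cites \cite{BuLaMaWe:Ov_math_iss_in_GCT} (the subspace-variety bound) for $\ell(\la)\le m^2$; you have simply unpacked those two citations into the central-character and $\mathrm{Sub}_{m^2}$ arguments, which is exactly what lies behind them.
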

\begin{proof}
The coordinate ring $\IC[\IA_n^m]$ splits according to the plethysm coefficients
$
\IC[\IA_n^m]_d = \bigoplus_{\la \partintonosmash[n^2]md} V_\la^{\oplus a_\la(d[m])}.
$
Therefore $q_\la(d[m]) \leq a_\la(d[m])$, because the orbit closure $\overline{\GL_{n^2}\per_m}$ is an affine subvariety of $\IA_n^m$.
See e.g.~\cite[Lem.~4.3.3]{Ike:PhDthesis} for the classical $|\la|=md$.
Lastly, $\ell(\la)\leq m^2$ is ensured by \cite{BuLaMaWe:Ov_math_iss_in_GCT}, just because the $\per_m$ has only $m^2$ variables.
\end{proof}

In order to prove Theorem~\ref{thm:main} we assume that $q_\la(d[m])>0$ for some partition $\la$.
By Lemma~\ref{lem:easyrestronla} this implies $|\la|=dm$, $\ell(\la)\leq m^2$, and $a_\la(d[m])>0$.
The following Prop.~\ref{pro:plethvanish} ensures that $\la_1 \geq m$.
\begin{proposition}\label{pro:plethvanish}
If $\la_1 < m$, then $a_\la(d[m])=0$.
\end{proposition}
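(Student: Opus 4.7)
The plan is to bound the plethysm $\Sym^d(\Sym^m \IC^{n^2})$ from above by a tensor power and then rule out irreducibles $V_\la$ with $\la_1 < m$ via Pieri's rule.

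First, I would observe that $\Sym^m \IC^{n^2}$ is itself an irreducible $\GL_{n^2}$-representation, namely $V_{(m)}$. Consequently $\Sym^d(\Sym^m \IC^{n^2}) = \Sym^d V_{(m)}$ sits as a $\GL_{n^2}$-subrepresentation of $V_{(m)}^{\otimes d}$, and so the multiplicity $a_\la(d[m])$ is bounded above by the multiplicity of $V_\la$ in $V_{(m)}^{\otimes d}$.

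Second, I would decompose $V_{(m)}^{\otimes d}$ by iterated Pieri's rule. Tensoring an irreducible $V_\mu$ with $V_{(m)}$ yields $\bigoplus_\nu V_\nu$ summed over partitions $\nu \supseteq \mu$ such that $\nu/\mu$ is a horizontal $m$-strip; in particular $\nu_1 \geq \mu_1$. Starting from the base case $V_{(m)}$, where the first part is $m$, and inducting on $d$, every $V_\nu$ appearing in $V_{(m)}^{\otimes d}$ satisfies $\nu_1 \geq m$. Combined with the inclusion from the first step, this yields $a_\la(d[m]) = 0$ whenever $\la_1 < m$.

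I do not anticipate a serious obstacle: this is a classical observation about plethysms, and an essentially equivalent route proceeds via Schur--Weyl duality, where the multiplicity of $V_\la$ in $(\Sym^m \IC^{n^2})^{\otimes d}$ equals the Kostka number $K_{\la,(m^d)}$, which vanishes unless $\la$ dominates $(m^d)$---and this dominance forces $\la_1 \geq m$.
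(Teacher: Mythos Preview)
Your proof is correct and is essentially the same approach as the paper's. The paper bounds $a_\la(d[m])$ by the Kostka number $K_{\la,d\times m}$ (i.e., the multiplicity of $V_\la$ in $V_{(m)}^{\otimes d}$) and then shows this vanishes when $\la_1<m$ via a direct pigeonhole argument on semistandard tableaux; your Pieri argument and your alternative dominance remark are equivalent reformulations of that same vanishing, and indeed your final paragraph is exactly the paper's route.
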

We prove Prop.~\ref{pro:plethvanish} in Section~\ref{sec:vanishingpleth}.
Since Lemma~\ref{lem:easyrestronla} implies $\ell(\la)\leq m^2$,
we can conclude the proof of Thm.~\ref{thm:main} with the following positivity proposition.

\begin{proposition}\label{pro:pos}
Let $m \geq 10$, $d \geq 1$. Further let $\la \partinto md$, $\la_1 \geq 3$, $\ell(\la)\leq m^2$.
If $n \geq m+2$, then $\sm(\la,n)>0$.
\end{proposition}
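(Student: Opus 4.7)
My plan is to prove Proposition~\ref{pro:pos} by a semigroup-decomposition argument in the spirit of \cite{IkPa:Rectangular_Kron_in_GCT} and \cite{BuIkPa:no_occurrence_obstructions_in_GCT}, but carried out for symmetric Kronecker coefficients rather than plain Kronecker or plethysm coefficients. The starting point is a semigroup property for $\sk$: if $\sk(\alpha,\beta)>0$ and $\sk(\alpha',\beta')>0$, then $\sk(\alpha+\alpha',\beta+\beta')>0$, proved by multiplying a highest weight vector for $[\alpha]$ in $S^2[\beta]$ with one for $[\alpha']$ in $S^2[\beta']$; because each factor is symmetric under swapping its two tensor factors, so is the product, landing it in $S^2[\beta+\beta']\subseteq [\beta+\beta']\otimes [\beta+\beta']$. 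Together with the identity $\ell(\mu^{(1)}+\cdots+\mu^{(k)})=\max_i\ell(\mu^{(i)})$, this reduces the problem to exhibiting a decomposition $\la=\la^{(1)}+\cdots+\la^{(k)}$ together with partitions $\mu^{(i)}\partinto|\la^{(i)}|$ of length at most $n=m+2$ satisfying $\sk(\la^{(i)},\mu^{(i)})>0$.

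The decomposition I would use exploits the hypotheses $\la_1\geq 3$ and $\ell(\la)\leq m^2$ as follows. First, peel off the first row: the single-row piece $(\la_1)$ can be paired with $\mu^{(1)}=(\la_1)$ via $\sk((a),(a))=1$, yielding a trivial contribution of the right weight. Second, slice the tail $(\la_2,\la_3,\dots)$ into at most $m$ horizontal strips of at most $m$ consecutive rows each (possible since $\ell(\la)\leq m^2$). For each strip, construct $\mu^{(i)}$ as a rectangle (or a rectangle with a small corner modification) of height at most $m$ and width chosen to match $|\la^{(i)}|$; the total height of $\mu=\sum_i\mu^{(i)}$ is then at most $m<m+2=n$, with the two slack rows absorbing the odd boundary cases that arise when the strips do not tile perfectly.

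The main obstacle is verifying positivity of the \emph{symmetric} Kronecker coefficient, not just of the Kronecker coefficient, for each building block; symmetric positivity is genuinely more delicate because a partition may appear in $[\mu]\otimes[\mu]$ only on the antisymmetric side. To control this, I would use a doubling device: if $g(\alpha,\beta,\beta)>0$ then $[2\alpha]$ appears in $S^2[2\beta]$, by taking any highest weight vector $v\in[\alpha]\subseteq[\beta]\otimes[\beta]$ and noting that $v\otimes v$ is symmetric in the two outer tensor factors of $[2\beta]\otimes[2\beta]\subseteq[\beta]^{\otimes 4}$. Hence whenever Kronecker positivity is known for a small piece, doubling that piece gives symmetric Kronecker positivity at no increase in row count. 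I would therefore arrange the decomposition so each building block either appears in even multiplicity (and can be doubled into the symmetric part) or is small enough to be handled by explicit computation, the latter made possible by $\la_1\geq 3$ to avoid the pathological first-row cases. The hypothesis $m\geq 10$ provides enough slack in block sizes and in the row budget $m+2$ to make both the doubling and the explicit cases fit simultaneously.
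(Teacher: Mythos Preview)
Your semigroup property for $\sk$ and the doubling device are both correct once interpreted via $\GL$-highest weight vectors (not inside Specht modules, where neither ``highest weight vector'' nor multiplication makes direct sense), and the overall architecture --- decompose, prove positivity on pieces, recombine --- is indeed what the paper does. The genuine gap is at the building-block step. You pair each horizontal strip $\lambda^{(i)}$ (an arbitrary partition of height $\leq m$) with a rectangle $\mu^{(i)}$ and implicitly assume that $g(\lambda^{(i)},\mu^{(i)},\mu^{(i)})>0$, so that doubling applies. No argument is offered, and this is precisely the rectangular Kronecker positivity problem for arbitrary $\lambda^{(i)}$ --- the main content of \cite{IkPa:Rectangular_Kron_in_GCT}, established there only under the padding-induced constraint $\lambda_1\gg 0$ of Prop.~\ref{pro:kadishlandsberg}, which is deliberately absent in the homogeneous setting. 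So your reduction lands on a problem as hard as the original. The ``even multiplicity'' escape does not help either: the strips are distinct chunks of $\lambda$, not repeated pieces, and doubling converts $g>0$ into $\sk>0$ but does not manufacture $g>0$.

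The paper takes a genuinely different route. It decomposes $\lambda$ into \emph{columns} $1^a$, not horizontal strips, and pairs each column with a \emph{self-conjugate} $\mu$, not a rectangle. Building-block positivity then comes from an explicit new criterion (Theorem~\ref{thm:signaction}): for $\pi=1^D$ and self-conjugate $\mu\vdash D$, exactly one of $\sk(\pi,\mu)$, $\ak(\pi,\mu)$ equals $1$, determined by the parity of the number of boxes of $\mu$ above its main diagonal. This yields $\sm(1^a,\ell)>0$ for all $a$ outside the finite set $X_s=\{2,3,4,7,8,12\}$ and $\am(1^a,\ell)>0$ outside $X_a=\{1,2,5,6,10,14\}$ (Prop.~\ref{prop:columns}). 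The exceptional column lengths are then handled by combining two columns at once via the mixed semigroup rules $\am+\am\to\sm$ and $\sm+\am\to\am$ (Prop.~\ref{prop:semigroup}), together with a finite computer verification (Prop.~\ref{prop:small_values}) and one Kronecker positivity input for the shape $(2,2,1^a)$. The hypothesis $\lambda_1\geq 3$ is used exactly to guarantee at least three columns, which is what allows routing around the $(2,2,1^k)$ exception in Prop.~\ref{prop:sm_positive}; it is not about ``pathological first-row cases''.
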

We prove a slightly more general result in Section~\ref{sec:smpositivity} (Proposition~\ref{prop:sm_positive}, where $L=m^2$).
\end{proof}

\section*{Acknowledgments}
We thank Neeraj Kayal, Michael Forbes, and Pierre Lairez for helpful discussions on arithmetic circuits.
We thank Peter B\"urgisser for helpful insights on the subgroup restriction problem at hand. The third author was partially supported by NSF.

\section{Occurrence in the coordinate ring of the orbit}\label{sec:smpositivity}
Here we prove that the relevant multiplicities $\sm(\la,n)$ are positive in all cases of interest, and in particular we prove Proposition~\ref{pro:pos}.
We list the necessary facts, their proofs appear in the corresponding sections.

Analogously to $\sk(\lambda,\mu)=\dim ([\lambda]\otimes S^2[\mu])^{\aS_D}$ let 
$\ak(\lambda,\mu)=\dim ([\lambda]\otimes \Lambda^2[\mu])^{\aS_D}$ denote the multiplicity of $[\lambda]$ in $\Lambda^2[\mu]$.
Since $S^2[\mu] \oplus \Lambda^2[\mu]= [\mu] \otimes [\mu]$, we trivially have 
\begin{align}\label{eq:g=sk+ak} 
g(\lambda,\mu,\mu) = \sk(\lambda,\mu)+\ak(\lambda,\mu).
\end{align}
Moreover, for any positive integer $a$ and any partition $\lambda$ we set
$$\sm(\lambda,a) := \sum_{\mu: \ell(\mu) \leq a} \sk(\lambda,\mu) \quad \text{ and } \quad \am(\lambda,a) := \sum_{\mu: \ell(\mu) \leq a} \ak(\lambda,\mu).$$

A crucial property to prove positivity is the semigroup property. Informally it follows from multiplying highest weight vectors in invariant spaces.

\begin{proposition}[Semigroup properties]\label{prop:semigroup}
Let $\la$ and $\nu$ be partitions. We have
\begin{compactenum}[(1)]
\item If $\sm(\la,n)>0$ and $\sm(\nu,n)>0$, then $\sm(\la+\nu,n) \geq \max(\sm(\la,n),\sm(\nu,n))$.
\item If $\am(\la,n)>0$ and $\am(\nu,n)>0$, then $\sm(\la+\nu,n) \geq \max(\am(\la,n),\am(\nu,n))$.
\item If $\sm(\la,n)>0$ and $\am(\nu,n)>0$, then $\am(\la+\nu,n) \geq \max(\sm(\la,n),\am(\nu,n))$.
\end{compactenum}
\end{proposition}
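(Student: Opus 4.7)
The three cases encode the $\IZ_2$-sign rule $\epsilon_1\cdot\epsilon_2\to\epsilon_1\epsilon_2$ (with $+$ for $\sk/\sm$ and $-$ for $\ak/\am$), reflecting the $\IZ_2$-graded tensor decompositions
\begin{align*}
S^2(A\otimes B) &\cong (S^2 A\otimes S^2 B)\oplus(\Lambda^2 A\otimes\Lambda^2 B),\\
\Lambda^2(A\otimes B) &\cong (S^2 A\otimes\Lambda^2 B)\oplus(\Lambda^2 A\otimes S^2 B).
\end{align*}
The plan is to first reduce each statement to a refined one at the level of individual Kronecker coefficients: for example in case (2), to show $\sk(\la+\la',\mu+\mu')\geq\ak(\la,\mu)$ whenever $\ak(\la',\mu')>0$, and symmetrically. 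Summing over $\mu,\mu'$ with $\ell(\mu),\ell(\mu')\leq n$ and using $\ell(\mu+\mu')\leq\max(\ell(\mu),\ell(\mu'))\leq n$, together with the injectivity of the map $\mu\mapsto\mu+\mu'$ for fixed $\mu'$, then recovers the $\max$-bound for $\sm$ and $\am$.

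To prove the refined statement (illustrating case (2); the other cases are analogous with appropriate $S^2/\Lambda^2$ swaps), fix an $\aS_{D'}$-equivariant embedding $\phi'\colon[\la']\hookrightarrow\Lambda^2[\mu']$ and a basis $\phi_1,\ldots,\phi_k$ of $\Hom_{\aS_D}([\la],\Lambda^2[\mu])$ with $k=\ak(\la,\mu)$. The tensor products $\phi_i\otimes\phi'\colon[\la]\otimes[\la']\to\Lambda^2[\mu]\otimes\Lambda^2[\mu']$ land inside $S^2([\mu]\otimes[\mu'])$ by the first displayed identity above. Applying the functor $S^2$ to the multiplicity-one Littlewood--Richardson inclusion $[\mu]\otimes[\mu']\hookrightarrow[\mu+\mu']|_{\aS_D\times\aS_{D'}}$ (from $c^{\mu+\mu'}_{\mu,\mu'}=1$) yields $S^2([\mu]\otimes[\mu'])\hookrightarrow S^2[\mu+\mu']|_{\aS_D\times\aS_{D'}}$; composing produces $k$ linearly independent $\aS_D\times\aS_{D'}$-equivariant maps $[\la]\otimes[\la']\to S^2[\mu+\mu']$. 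Frobenius reciprocity converts these to $k$ $\aS_{D+D'}$-equivariant maps $\mathrm{Ind}([\la]\otimes[\la'])\to S^2[\mu+\mu']$, and the key claim is that each restricts nontrivially to the top-weight summand $[\la+\la']\subseteq\mathrm{Ind}([\la]\otimes[\la'])$ (multiplicity one from $c^{\la+\la'}_{\la,\la'}=1$) and that the $k$ restrictions remain linearly independent; irreducibility of $[\la+\la']$ then forces each to be an embedding, yielding $\sk(\la+\la',\mu+\mu')\geq k$.

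The crucial last step is verified by performing the entire construction inside a concrete ambient polynomial ring, where the Specht modules are realized via Schur--Weyl duality inside $V^{\otimes(D+D')}$ for a large vector space $V$, so that HWVs are actual polynomials and HWV multiplication is polynomial multiplication: the products $w_i\cdot w'$ (of the HWVs of weight $\la$ and $\la'$ corresponding to $\phi_i$ and $\phi'$) are nonzero HWVs of weight $\la+\la'$, hence generate the $[\la+\la']$-summand, and they remain linearly independent by the integral-domain property of the polynomial ring. This is exactly the ``multiplication of highest weight vectors in invariant spaces'' heuristic alluded to by the paper, and it constitutes the main technical obstacle: absent this concrete realization, the induced map could \emph{a priori} be nonzero only on some smaller LR-summand of $\mathrm{Ind}([\la]\otimes[\la'])$, and the $k$-fold linear independence could collapse on restriction to $[\la+\la']$.
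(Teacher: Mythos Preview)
Your proposal is correct and rests on the same key mechanism as the paper---multiplication of highest weight vectors in a polynomial ring carrying a commuting $\IZ_2$-action, so that $\aS_2$-parities multiply---but you organize the argument differently. The paper builds one polynomial ring $\Sym(V\otimes V^*\otimes V)$ (with $V=\IC^n$) in which, after taking $H$-invariants, the $\aS_2$-invariant (resp.\ skew-invariant) $\GL$-highest-weight space of weight $\la$ has dimension \emph{exactly} $\sm(\la,n)$ (resp.\ $\am(\la,n)$); multiplying HWVs then gives the three inequalities directly, without ever decomposing the sums over~$\mu$. You instead prove the finer per-$\mu$ inequality $\sk(\la+\la',\mu+\mu')\geq\ak(\la,\mu)$ (etc.) and then sum, which yields a genuinely stronger intermediate statement at the price of a more elaborate setup. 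One technical caveat: the ambient space you name, $V^{\otimes(D+D')}$, is not a ring, so ``polynomial multiplication'' is not available there. What you actually need to isolate a specific $\mu$ is a ring such as $\Sym(V\otimes W\otimes W)$ for an auxiliary $W$, tracking $\GL(V)\times\GL(W)\times\GL(W)$-HWVs of weight $(\la,\mu,\mu)$ together with their $\aS_2$-type; this makes your key step go through. The paper's choice of $V\otimes V^*\otimes V$ with $H$-invariants is precisely the device that collapses the $\mu$-bookkeeping and lands directly on $\sm$ and $\am$.
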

If $|\la|$ is a multiple of some $m \geq 3$, i.e., $|\la|=dm$, then by Theorem~\ref{thm:sm} $\sm(\la,n)$ is the multiplicity of $\la$ in
$\IC[\GL_{n^2}\Pow_n^m]_d$ and thus part (1) holds by multiplying two highest weight vector functions,
provided both $|\la|$ and $|\nu|$ are divisible by the same number $m \geq 3$.
The approach to proving the general case is very similar, as we will see next.
\begin{proof}[Proof of Proposition~\ref{prop:semigroup}]
We write $\la\partinto[n] d$ to denote that $\la$ is a partition of $d$ into at most $n$ parts.
Let $V \simeq \IC^n$ and let $V^*$ be its dual.
The space $V^* \otimes V$ is naturally isomorphic to $V^* \otimes V^{**} = V^* \otimes V$.
This gives rise to a natural automorphism on $V^* \otimes V$ that has order 2,
i.e., this gives an $\aS_2$ action on $V^* \otimes V$.
Thus we get an $\aS_2$ action on $V \otimes V^* \otimes V$ keeping the first tensor factor fixed.
This induces and $\aS_2$ action on $\tensor^d (V \otimes V^* \otimes V)$.
Since the $\aS_2$ action commutes with the natural action of $\aS_d$ on $\tensor^d (V \otimes V^* \otimes V)$,
we have an $\aS_2$ action on the $\aS_d$ invariant space $\Sym^d (V \otimes V^* \otimes V)$.
Let $\GL := \GL(V)$ and $G := \GL^2$.
Embed $G \hookrightarrow \GL^3$ via $(g,h) \mapsto (g,h,h)$.
In this way $G$ acts of $\Sym^d (V \otimes V^* \otimes V)$ and 
the actions of $\aS_2$ and $G$ commute.
Thus we have an action of $\aS_2$ on the highest weight vector space
$\HWV_{\la,\mu}(\Sym^d \tensor^3 V)$.

Schur-Weyl duality says
\[
\tensor^d V \simeq \bigoplus_{\la\partinto[n] d} [\la] \otimes \{\la\}.
\]
Thus we have
\begin{eqnarray*}
\tensor^d (V \otimes V^* \otimes V) \simeq \bigoplus_{\la,\mu,\nu \partinto[n] d} [\la] \otimes \{\la\} \otimes [\mu] \otimes \{\mu^*\} \otimes [\nu] \otimes \{\nu\},
\end{eqnarray*}
as $\GL^3$-modules, where $\mu^*=(-\mu_n,-\mu_{n-1},\ldots,-\mu_1)$, so that $\{\mu^*\}$ is the $\GL$-module dual to $\{\mu\}$.
We write $\rho \vDash_n k$ for a nonincreasing sequence of $n$ integers that sum up to $k$.
As $G$-modules we have
\begin{eqnarray*}
\tensor^d (V \otimes V^* \otimes V) \simeq \bigoplus_{\la,\mu,\nu\vdash d, \ \rho\vDash 0} c_{\mu,\nu^*}^\rho [\la] \otimes [\mu] \otimes [\nu] \otimes \{\la\} \otimes \{\rho\},
\end{eqnarray*}
where $c_{\mu,\nu^*}^\rho$ is the Littlewood-Richardson coefficient (which is naturally defined not only for partitions, but for nonincreasing sequences of integers).
Recall $G = \GL\times\GL$. We want to distinguish between the left and the right factor and therefore we denote by $H$ the right factor, i.e. $G = \GL\times H$.
Going to $H$-invariants we see that the Littlewood-Richardson coefficients are either 1 or 0, so
\begin{eqnarray*}
\tensor^d (V \otimes V^* \otimes V)^{H} \simeq \bigoplus_{\la,\mu\partinto[n] d} [\la] \otimes [\mu] \otimes [\mu] \otimes \{\la\}.
\end{eqnarray*}

Since $\{\la\}$ contains a unique highest weight vector line of type $\la$ and no other highest weight vector,
going to $\GL$-highest weight vector spaces yields
\begin{eqnarray*}
\HWV_{\la} \Big(\tensor^d (V \otimes V^* \otimes V)^{H} \Big) \simeq \bigoplus_{\la,\mu\partinto[n] d} [\la] \otimes [\mu] \otimes [\mu],
\end{eqnarray*}
because there is a unique HWV in every $\GL$-representation.

$\aS_2$ acts on this space and we take invariants:

\begin{eqnarray*}
\HWV_{\la} \Big(\tensor^d (V \otimes V^* \otimes V)^{H} \Big)^{\aS_2} \simeq \bigoplus_{\la,\mu\partinto[n] d} [\la] \otimes \Sym^2([\mu]).
\end{eqnarray*}

Since the action of $\aS_d$ commutes with the actions of $\GL^3$ and $\aS_2$, we can take $\aS_d$ invariants and obtain
\begin{eqnarray*}
\HWV_{\la} \Big(\Sym^d (V \otimes V^* \otimes V)^{H} \Big)^{\aS_2} \simeq \bigoplus_{\la,\mu\partinto[n] d} \IC^{\oplus \sk(\la,\mu)}.
\end{eqnarray*}

Completely analogously we can take $\aS_d$ skew-invariants (denoted by $\textup{skew-}\aS_2$) and obtain
\begin{eqnarray*}
\HWV_{\la} \Big(\Sym^d (V \otimes V^* \otimes V)^{H} \Big)^{\textup{skew-}\aS_2} \simeq \bigoplus_{\la,\mu\partinto[n] d} \IC^{\oplus \ak(\la,\mu)}.
\end{eqnarray*}

We conclude the proof by analyzing what happens when we multiply two
highest weight vector polynomials, 
$f \in \HWV_{\la} \Big(\Sym^d (V \otimes V^* \otimes V)^{H} \Big)$
and
$g \in \HWV_{\nu} \Big(\Sym^d (V \otimes V^* \otimes V)^{H} \Big)$.
The product $fg$ is a highest weight vector in
$\HWV_{\la+\nu} \Big(\Sym^d (V \otimes V^* \otimes V)^{H} \Big)$.
If $f$ and $g$ are both $\aS_2$-invariant, then their product $fg$ is $\aS_2$-invariant.
But if $f$ and $g$ are both $\aS_2$-skew-invariant, then $fg$ is also $\aS_2$-invariant.
Moreover, if $f$ is $\aS_2$-invariant and $g$ is $\aS_2$-skew-invariant, then $fg$ is $\aS_2$-skew-invariant.
The inequality that we need to show follows from multiplying a basis of
$\HWV_{\la} \Big(\Sym^d (V \otimes V^* \otimes V)^{H} \Big)^{\aS_d}$
with a single polynomial in
$\HWV_{\nu} \Big(\Sym^d (V \otimes V^* \otimes V)^{H} \Big)^{\aS_d}$
and observing that the resulting vectors are still linearly independent.
Clearly we can also switch the roles of $\la$ and $\nu$, so part (1) is proved.
We proceed completely analogously for part (2) and (3).
\end{proof}

Next, in order to prove the positivity of $\sm$ we need some positivity results for particular symmetric and skew-symmetric Kronecker coefficients.

Let $\la^t$ denote the partition corresponding to the Young diagram of $\la$ reflected on the main diagonal.
For example, $(5,4,4)^t=(3,3,3,3,1)$.
Partitions that satisfy $\la = \la^t$ are called \emph{self-conjugate}.
Using character theory it is easy to show that $g(\pi,\la,\la^t) = 1$ for $\pi = |\la| \times 1$.
Using eq.~\eqref{eq:g=sk+ak} we know that only one of two cases can occur: Either
$\sk(\pi,\la) = 1$ and $\ak(\pi,\la)=0$
or
$\sk(\pi,\la) = 0$ and $\ak(\pi,\la)=1$.
Theorem~\ref{thm:signaction} below tells us in which case we are.

For a self-conjugate partition $\la$ we consider the number of boxes that are not on the main diagonal of its Young diagram.
Since $\la$ is self-conjugate, this number is even. Half of them are above the main diagonal and half of them below.
For a self-conjugate partition define its \emph{sign} $\sgn(\la)$ to be 1 if the number of boxes above the main diagonal is even, $-1$ otherwise.

\begin{theorem}\label{thm:signaction}
Let $\pi=(D \times 1)$ and let $\la \partinto D$ be self conjugate. Then
$\sk(\pi,\la) = 1$ and $\ak(\pi,\la)=0$ if $\sgn(\la)=1$, and $\sk(\pi,\la) = 0$ and $\ak(\pi,\la)=1$ if $\sgn(\la)=-1$.
\end{theorem}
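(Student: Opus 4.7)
The material preceding the theorem already reduces the statement to this: for self-conjugate $\lambda \vdash D$ we know $\sk((1^D), \lambda) + \ak((1^D), \lambda) = 1$, so exactly one summand is $1$, and we must decide which. The plan is to compute the difference $I(\lambda) := \sk((1^D), \lambda) - \ak((1^D), \lambda)$, recognize it as the sign-twisted Frobenius-Schur indicator of $[\lambda]$, and evaluate it in closed form.

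Using the identity $\chi_{S^2[\lambda]}(g) - \chi_{\Lambda^2[\lambda]}(g) = \chi_\lambda(g^2)$ together with orthogonality against the sign character, one immediately obtains
$$I(\lambda) \;=\; \frac{1}{D!} \sum_{g \in \aS_D} \sgn(g)\, \chi_\lambda(g^2).$$
This has a transparent operator interpretation as $I(\lambda) = \tr(\tau \circ e_{\sgn})$, where $\tau$ is the swap on $[\lambda] \otimes [\lambda]$ and $e_{\sgn}$ is the projector onto the $\sgn$-isotypic component: the key input here is the identity $\tr(\tau \circ (g \otimes g)) = \chi_\lambda(g^2)$. Since that isotypic component is one-dimensional and $\tau$ commutes with the diagonal $\aS_D$-action, $\tau$ restricts to $\pm 1$ on the line, giving $I(\lambda) = +1$ exactly when the unique $\sgn$-copy sits inside $S^2[\lambda]$. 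In particular $I(\lambda) \in \{\pm 1\}$ automatically, matching the paper's dichotomy.

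The crux is the closed-form evaluation $I(\lambda) = \sgn(\lambda)$. I would pass to symmetric functions via the Frobenius characteristic map, rewriting $I(\lambda) = \langle e_D, F_\lambda\rangle_{\mathrm{Hall}}$ with $F_\lambda := \sum_\mu z_\mu^{-1}\,\chi_\lambda(\mu^{(2)})\,p_\mu$, where $\mu^{(2)}$ is the cycle type of $g^2$ when $g$ has cycle type $\mu$ (odd parts of $\mu$ are preserved, and each even part $2k$ splits into two parts of length $k$). Exploiting the involution $\omega$ on symmetric functions, which acts as $\sgn$ on the $p_\mu$-basis and fixes $s_\lambda$ by self-conjugacy, the power-sum expansion of $F_\lambda$ collapses to a product indexed by the Frobenius coordinates $\lambda = (\alpha_1, \ldots, \alpha_r \mid \alpha_1, \ldots, \alpha_r)$, yielding $I(\lambda) = (-1)^{\alpha_1 + \cdots + \alpha_r}$. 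Since $\sum_i \alpha_i$ is exactly the count of boxes strictly above the main diagonal of $\lambda$, this equals $\sgn(\lambda)$ by definition.

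The principal technical obstacle is organizing that last computation: carefully tracking how the sign factor $\sgn(\mu) = (-1)^{D - \ell(\mu)}$ interacts with the cycle-type reshaping $\mu \mapsto \mu^{(2)}$ and with $\omega$ to produce the clean Frobenius-coordinate expression. A combinatorial alternative avoids the plethystic bookkeeping by invoking the ribbon-tableau theory of $2$-cores and $2$-quotients: for self-conjugate $\lambda$, known formulas express $\chi_\lambda(g^2)$ as a signed sum over domino tableaux, and a sign-reversing involution on non-diagonal pairs reduces the sum to the diagonal-parity statement.
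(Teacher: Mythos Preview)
Your approach is correct and takes a genuinely different route from the paper's. The paper works directly in the tableau model of Specht modules: it constructs an explicit nonzero $\aS_D$-invariant $P(T_1\otimes T_2\otimes T_3)$ in $[\pi]\otimes[\lambda]\otimes[\lambda]$ (with $T_1$ the standard column, $T_2$ row-standard, $T_3$ column-standard of shape~$\lambda$), and then exhibits a permutation $\sigma$ of sign $\sgn(\lambda)$ that carries $T_2$ to $T_3$ and vice versa; acting by $\sigma$ and applying the Grassmann relation in $T_1$ gives $T_1\otimes T_2\otimes T_3=\sgn(\lambda)\,T_1\otimes T_3\otimes T_2$, which places the one-dimensional invariant space inside $S^2[\lambda]$ or $\Lambda^2[\lambda]$ according to $\sgn(\lambda)$. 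Your route instead reduces to evaluating the sign-twisted Frobenius--Schur indicator $\tfrac{1}{D!}\sum_g \sgn(g)\chi_\lambda(g^2)$ and identifying it with $(-1)^{\sum_i\alpha_i}$ via symmetric-function manipulations. The paper's argument is entirely elementary and self-contained, using nothing beyond the Grassmann relation and a pigeonhole step to guarantee $P(T_1\otimes T_2\otimes T_3)\neq 0$; your argument is more conceptual and situates the statement in the general framework of twisted indicators, but the key closed-form evaluation (the collapse to the Frobenius-coordinate product) is only sketched and would require either a careful plethystic bookkeeping or an appeal to the $2$-core/$2$-quotient theory you allude to. Both are legitimate; yours would port more readily to other groups carrying an order-two automorphism, while the paper's produces an explicit witness to the nonvanishing.
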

This is proved in Section~\ref{sec:skcof} using the tableaux basis for the irreducible representations of the symmetric group $\aS_D$.

\medskip
We now consider the positivity of $\sm$ and show that it is positive for almost all cases. First, we prove it when $\lambda$ is a single column. When $\lambda$ has more columns we apply the semigroup property to the sum of its columns to derive positivity.

Set $X_s :=\{ 2,3,4,7,8,12\}$ and $X_a :=\{ 1,2,5,6,10,14 \}$, as the next statement shows these are exactly the sets of exceptional column lengths, for which $\sm$, respectively $\am$, is 0. 

\begin{proposition}\label{prop:columns}
 Let $\ell := \max\{\lfloor \sqrt{a} \rfloor+2, 12\}$. We have that $\sm(1^a,\ell ) >0$ if and only if $a \not \in X_s$ and $\am(1^a,\ell) >0$ if and only if $a \not \in X_a$.
\end{proposition}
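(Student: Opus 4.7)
The plan is to reduce the computation of $\sm(1^a, \ell)$ and $\am(1^a, \ell)$ to a combinatorial question about partitions of $a$ into distinct odd parts, and then verify positivity by direct case analysis.

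The reduction uses two ingredients. Since $[1^a] \otimes [\mu] \simeq [\mu^t]$ as $\aS_a$-modules, Schur's lemma gives $g(1^a, \mu, \mu) = [\mu = \mu^t]$; combined with \eqref{eq:g=sk+ak} and Theorem~\ref{thm:signaction} this forces $\sk(1^a, \mu) = 1$ exactly when $\mu$ is self-conjugate with $\sgn(\mu) = +1$, $\ak(1^a, \mu) = 1$ exactly when self-conjugate with $\sgn(\mu) = -1$, and both vanish otherwise. Using $\ell(\mu) = \mu_1$ for self-conjugate $\mu$, this identifies $\sm(1^a, \ell)$ (resp.\ $\am(1^a, \ell)$) with the number of self-conjugate $\mu \vdash a$ having $\mu_1 \leq \ell$ and $\sgn(\mu) = +1$ (resp.\ $-1$). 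The classical Frobenius bijection then sends self-conjugate $\mu$ to a partition $p_1 > p_2 > \cdots > p_k$ of $a$ into distinct odd parts (the diagonal hook lengths), with $\mu_1 = (p_1 + 1)/2$ and $\sgn(\mu) = (-1)^{(a - k)/2}$; consequently $\sm(1^a, \ell)$ counts distinct-odd partitions of $a$ with $p_1 \leq 2\ell - 1$ and $k \equiv a \pmod 4$, while $\am(1^a, \ell)$ imposes $k \equiv a + 2 \pmod 4$.

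After subtracting the staircase $(2k - 1, 2k - 3, \ldots, 1)$, partitions of $a$ into exactly $k$ distinct odd parts each $\leq 2\ell - 1$ are in bijection with partitions of $(a - k^2)/2$ fitting inside a $k \times (\ell - k)$ rectangle; hence such a partition exists iff $a \equiv k \pmod 2$ and $k^2 \leq a \leq k(2\ell - k)$. For $a \leq 14$ I will enumerate directly and check that the obstructed cases---those $a$ for which no feasible $k$ has the required mod-$4$ residue---are precisely $X_s$ (for $\sm$) and $X_a$ (for $\am$); for example at $a = 12$ only $k = 2$ is feasible, giving $\am(1^{12}, \ell) > 0$ but $\sm(1^{12}, \ell) = 0$ since $2 \not\equiv 0 \equiv 12 \pmod 4$, placing $12 \in X_s \setminus X_a$. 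The remaining exceptional values in $X_s$ and $X_a$ are verified identically.

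For $a \geq 15$ it remains to exhibit, for each target residue $r \in \{a,\, a + 2\} \pmod 4$, some $k \equiv r \pmod 4$ in the feasibility interval $I_a := [\lceil \ell - \sqrt{\ell^2 - a}\,\rceil,\ \lfloor \sqrt a\,\rfloor]$. I will show $I_a$ contains at least two integers of parity $\equiv a \pmod 2$: since such integers differ by $2$, they automatically cover both mod-$4$ residues compatible with $a$, supplying valid $k$ for both $\sm$ and $\am$. This is the main technical step. Using $\ell \geq \max(\lfloor \sqrt a\,\rfloor + 2,\, 12)$, one checks that $|I_a| = \sqrt a - \ell + \sqrt{\ell^2 - a} \geq 3$ for all $a \geq 15$, so $I_a$ contains at least three integers. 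A direct enumeration shows that the only $a \geq 15$ for which $I_a$ has exactly three integers are $a \in \{15, 24, 120\}$, and in each such case the majority parity inside $I_a$ matches that of $a$ (for example $I_{120} = \{8, 9, 10\}$ furnishes the two even values $8, 10$ for the even integer $a = 120$); all remaining $a \geq 15$ give $|I_a| \geq 4$, automatically providing two integers of each parity. The main obstacle is therefore this uniform length-and-parity check on $I_a$, which splits cleanly between the regime $a \leq 99$ (where $\ell = 12$ gives ample slack) and $a \geq 100$ (where the inequality $\sqrt{\ell^2 - a} \geq \sqrt{2\lfloor\sqrt a\rfloor + 4}$ drives the estimate), with the three exceptional tight cases absorbed by a finite numerical verification.
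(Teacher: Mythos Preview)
Your approach is correct and in fact cleaner than the paper's. Both proofs rest on Theorem~\ref{thm:signaction}: one must exhibit, for each $a\notin X_s$ (resp.\ $a\notin X_a$), a self-conjugate $\mu\vdash a$ with $\mu_1\le\ell$ and the prescribed sign. The paper does this by writing down explicit self-conjugate partitions case by case (square shapes with small adjustments, with separate constructions for $a\le 14$, $15\le a\le 99$, and $a\ge 100$). You instead pass through the Frobenius bijection to distinct-odd-parts partitions, which converts the sign condition into a mod-$4$ congruence on the Durfee size $k$ and the length bound into $p_1\le 2\ell-1$; the existence question then becomes whether the integer interval $I_a=[\lceil\ell-\sqrt{\ell^2-a}\,\rceil,\lfloor\sqrt a\rfloor]$ meets the required residue class. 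Showing that $I_a$ contains at least two integers of parity $\equiv a\pmod 2$ (so both relevant mod-$4$ classes are hit) is a uniform replacement for the paper's explicit constructions, and your identification of the exceptional sets $X_s,X_a$ drops out of the same combinatorics for $a\le 14$ rather than requiring a separate computer check. The only cosmetic wrinkle is that you use $|I_a|$ both for the continuous length $\sqrt a-\ell+\sqrt{\ell^2-a}$ and for the integer cardinality of $I_a$; the claim ``all remaining $a\ge 15$ give $|I_a|\ge 4$'' is true for the latter but not the former (e.g.\ $a=16$ has continuous length $\approx 3.31$ yet $I_{16}=\{1,2,3,4\}$), so make the distinction explicit. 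With that clarification, and noting that the bound $\ell^2-a\ge 2\lfloor\sqrt a\rfloor+4\ge 26$ for $a\ge 121$ forces at least four integers while $15\le a\le 120$ is a finite check, your argument goes through.
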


\begin{proof}

First, a direct calculation shows that $\sm(1^a,\ell)=0$ for $a \in X_s$ and $\am(1^a,\ell)=0$ for $a \in X_a$.

To prove positivity, we apply Theorem~\ref{thm:signaction}.
For each $a$ we will find self-conjugate partitions $\mu,\nu \vdash a$, such that $\ell(\mu),\ell(\nu) \leq \ell$ and $\sgn(\mu)=1$, $\sgn(\nu)=-1$. Then $1= \sk(1^a,\mu) \leq \sm(1^a,\ell(\mu)) \leq \sm(1^a,\ell)$, and $1 = \ak(1^a,\nu) \leq \ak(1^a, \ell(\nu)) \leq \ak(1^a,\ell)$. We consider three separate cases: $a \leq 14$, $a \in [15,99]$ and $a \geq 100$.

For $a \leq 14, a \not \in X_s$ the corresponding $\mu$ partitions are $(1)$, $(3,1,1)$, $(3,2,1)$, $(5,1^4)$, $(5,2,1^3)$, $(4,3,3,1)$, $(7,1^6)$, $(7,2,1^5)$ and for $\nu$ we have $(2,1)$, $(2,2)$, $(4,1^3)$, $(4,2,1^2)$, $(3,3,3)$, $(6,1^5)$, $(6,2,1^4)$, $(5,3,3,1,1)$.

Let $a \geq 100$ and set $b := \lfloor \sqrt{a} \rfloor$, which is the maximal possible diagonal length in a self-conjugate partition of $a$ (and only if $2 | a-b$). Note that $b \geq 10$. 
Let $r:=a-b^2$, we have that $r=2r_1+c_1$, where $c_1=0,1$ is the residue of $r$ modulo 2. If $c_1=0$, since $(b+1)^2>a$, we have that $r_1 \leq b$, and if $c_1=1$ we have $(b+1)^2> 2r_1+1 +b^2$, so $b>r_1$. 

Let first $c_1=0$, and consider the partitions $\alpha := (b^b + 1^{r_1}, r_1)\vdash a$ with $\ell(\alpha) =b+1$, and $\beta := ( b^{b-2} + 1^{r_1}+1^2, b-2,b-2, r_1,2)$ for $r_1\leq b-2$ or $\beta:= ( (b+1)^{b-2} +1^{r_1 - b+4}, b-2,b-2,b-2, r_1-b+4)$ for $r_1 >b-2$ (note that $r_1 \leq b$, so $r_1-(b-2) \in \{1,2\}$).  We have that $\beta \vdash a$, $\ell(\beta) \leq b+2$ and both $\alpha=\alpha^t$ and $\beta=\beta^t$. We also have that $\alpha$ has $\frac12( a - b)$ boxes above the diagonal, and $\beta$ has $\frac12(a -(b-2) ) = \frac12(a-b) +1$ boxes, so exactly one of $\alpha$ and $\beta$ is odd, and one even, and these are our $\nu$ and $\mu$, respectively. 

Let now $c_1=1$, and set $d:=b-1$. Let $\gamma := ( (d+1)^d + 1^{r_1}+1, d,r_1,1)$, which is self-conjugate since $r_1 \leq b-1=d$, and $\gamma \vdash a$, $\ell(\gamma) \leq d+3 =b+2$. Let $\delta := ( (d+1)^{d-2} + 1^{r_1} +1^5, d-2,d-2,d-2, r_1,5)$ (sorting the last 2 parts $5,r_1$ in decreasing order if $r_1\leq 4$  ) if $r_1 \leq d-2$, and set $\delta:= ( (d+2)^{d-2} + 1^{r_1 - d +7 }, d-2,d-2,d-2,d-2, r_1-d+7)$ if $r_1 >d-2$ (note again that $r_1-d+7 \leq b-1 -d+7 =7\leq d-2$ since $b \geq 10$). We have that $\delta \vdash a$, $\ell(\delta) \leq d+4=b+3$, and $\delta = \delta^t$. Moreover the number of boxes above the diagonal of these partitions is $ \frac12 (a -d)$ and $\frac12(a-d+2)=\frac12(a-d)+1$, so again one is even and one odd, and we set them to $\mu$ and $\nu$ respectively.

Finally, when $a \leq 99$, so $b\leq 9$, we treat the cases as above, noting that the problematic places arise when $c_1=1$ and some of the inequalities $r_1 -d+7 \leq d-2$  or $5 > d-2$ fails. In these cases we replace the problematic $1^{r_1-d+7}$ or $1^5$ by thicker partitions with at most $12 - (d+2) =11-b$ parts. 
\end{proof}

\begin{proposition}\label{prop:small_values}
Let $\lambda$ be a partition of length $\ell \leq 14$ and $\lambda\not\in \{ (1^r): r\in X_s\} \cup \{(2,1,1), (3,1,1), (2,1^7)\}$. Then $\sm(\lambda,7)>0$.
\end{proposition}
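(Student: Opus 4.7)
The plan is to combine Proposition~\ref{prop:columns} with the semigroup property (Proposition~\ref{prop:semigroup}) applied to a column decomposition of $\lambda$.

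First I would sharpen Proposition~\ref{prop:columns} at level $n=7$: a direct inspection of the self-conjugate partitions $\mu,\nu$ constructed there for each $a\le 14$ shows that all of them actually have length $\le 7$, so we obtain
\begin{align*}
\sm(1^a,7)>0 &\quad\text{for } a\in\{1,5,6,9,10,11,13,14\},\\
\am(1^a,7)>0 &\quad\text{for } a\in\{3,4,7,8,9,11,12,13\}.
\end{align*}
The only column height $c\le 14$ for which neither quantity is known to be positive is $c=2$.

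Next, I would write $\lambda=\sum_{i=1}^{\lambda_1}1^{c_i}$ as a partition sum of its columns (so $c_i=\lambda^t_i\le 14$) and apply parts (1)--(3) of Proposition~\ref{prop:semigroup}. Using the pattern $\sm+\sm\to\sm$, $\am+\am\to\sm$, $\sm+\am\to\am$, an even count of $\am$-only summands pairs up to become an $\sm$-contribution, which then combines with the $\sm$-positive columns to yield $\sm(\lambda,7)>0$. This settles $\sm(\lambda,7)>0$ whenever (i) no $c_i$ equals $2$, and (ii) the number of indices with $c_i\in\{3,4,7,8,12\}$ is even, and handles the vast majority of $\lambda$ at once.

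The remaining task is to deal with the two failure modes (i) and (ii). For a height-$2$ column violating (i), I would merge it with one or two adjacent columns into a small building block such as $(2,1)$, $(2,2)$, $(3,2,1)$, $(4,4)$, or a short hook, and verify $\sm(\cdot,7)>0$ or $\am(\cdot,7)>0$ directly by exhibiting a single $\mu$ with $\ell(\mu)\le 7$ and $\sk(\cdot,\mu)>0$ or $\ak(\cdot,\mu)>0$ (for instance $\sk((2,2),(3,1))=1$). For an odd-parity violation of (ii), I would instead combine one $\am$-only column with its neighbour into a short two-column block whose $\sm$-positivity is verifiable by hand, thereby flipping the parity. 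A careful check shows that these regroupings succeed outside the stated exceptions: the column cases $(1^r)$ with $r\in X_s$ fall under Proposition~\ref{prop:columns} itself, while $(2,1,1)=1^3+1^1$, $(3,1,1)=1^3+2\cdot 1^1$, and $(2,1^7)=1^8+1^1$ are exactly the small partitions for which there is not enough room to correct the $\am$-parity.

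The principal obstacle is the systematic case bookkeeping rather than any single hard estimate. On the positive side, one must confirm that \emph{every} non-exceptional $\lambda$ with $\ell(\lambda)\le 14$ admits at least one such valid decomposition, which requires a finite case analysis on the multiplicities of height-$2$ columns and of $\am$-only column heights in $\lambda$. On the negative side, one has to verify by an explicit enumeration of $\sk(\lambda,\mu)$ over all $\mu\vdash|\lambda|$ with $\ell(\mu)\le 7$ that the three listed small partitions $(2,1,1)$, $(3,1,1)$, $(2,1^7)$ genuinely satisfy $\sm(\lambda,7)=0$, so that no further exceptions have been overlooked.
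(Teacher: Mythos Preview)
Your overall strategy—column decomposition combined with the semigroup rules $\sm+\sm\to\sm$, $\am+\am\to\sm$, $\sm+\am\to\am$—is the same as the paper's, and your sharpening of Proposition~\ref{prop:columns} to $n=7$ is correct. The divergence is in how the base cases are handled: the paper does \emph{not} attempt to resolve them by hand. It runs a computer verification of $\sm(\lambda,7)$ over all $\lambda$ with $\lambda_1\le 3$ and $\ell(\lambda)\le 12$ (and a few further boundary ranges), and then for $\lambda_1\ge 4$ it peels off pairs or triples of non-singleton columns, each of which is a partition already certified positive by that computation.

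The gap in your plan is the claim that ``these regroupings succeed outside the stated exceptions'' and that $(2,1,1)$, $(3,1,1)$, $(2,1^7)$ are \emph{exactly} the cases where the $\am$-parity cannot be corrected. Your own parity heuristic does not single out those three. Consider the two-column partitions $1^{c_1}+1^{c_2}$ with $c_1\in\{3,4,7,8,12\}$ ($\am$-only) and $c_2\in\{1,5,6,10,14\}$ ($\sm$-only): the semigroup gives only $\am>0$ for every such pair, not $\sm>0$. This family contains, for instance, $(2,1^3)=1^4+1^1$, $(2,1^6)=1^7+1^1$, $(2,1^{11})=1^{12}+1^1$, and $(2^3,1^2)=1^5+1^3$, none of which is on the exception list, yet structurally they are indistinguishable from the listed exceptions $(2,1,1)=1^3+1^1$ and $(2,1^7)=1^8+1^1$. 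Your proposed fix—merge the lone $\am$-column with a neighbour into an $\sm$-positive two-column block—\emph{is} the whole partition in these cases, so you are back to having to verify $\sm(\lambda,7)>0$ directly. The same issue recurs for hooks $(k,1^{r-1})$ with $r\in\{3,4,7,8,12\}$ and $k\ge 4$: you would need, e.g., $\sm((4,1,1),7)>0$ as a new base case before semigrouping on further singletons.

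In short, the ``systematic case bookkeeping'' you flag as the principal obstacle is not merely bookkeeping: it is a substantial list of positivity statements about specific symmetric Kronecker sums that your column-only building blocks cannot reach, and that the paper establishes by machine computation rather than by exhibiting individual $\mu$'s. Without carrying out (or replacing) that computation, the argument does not close.
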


\begin{proof}
We use a program written by Harm Derksen and adjusted by Jesko H\"uttenhain that was already used to generate the computational data in \cite{Ike:PhDthesis}.
A direct computation for partitions $\lambda$ with $\ell(\lambda)\leq 12$ and $\lambda_1 \leq 3$ shows that $\sm(\lambda,7) >0$ except for the cases listed above. We also verify that $\sm(\lambda,7)>0$ for the partitions with $\ell(\lambda)=13,14$ and $\lambda_2 =1,2$  or $\lambda_1=4$ and $\ell (\lambda) \leq 4$. 
If $\ell(\lambda)=13,14$ and $\lambda_1 = 3$, then the cases when the second 2 columns of $\lambda$ form one of the exceptional partitions listed above, we have $|\lambda| \leq 23$ and we check by direct computation. Otherwise the second 2 columns have positive $\sm$ and adding them to the first column by the semigroup property we have $\sm(\lambda,7)>0$.

Let $\lambda_1 \geq 4$ and set $c:= \lambda_1 - \lambda_2$ (the number of singleton boxes). 
Let first $c \neq \lambda_1-1$, i.e. $\lambda_2 \neq 1$. If $\lambda_2=0$, then since $\sm((1),1) = \sk( (1), (1) ) =1$ the semigroup applied $c$ times gives $\sm((c),1) >0$. (One can also observe that since $(c)$ is the trivial representation of $\aS_c$, we have $S^2[(c)] =[(c)]$ and so $\sk((c),(c))=1$.)
 If $\lambda_2 \neq 0$, so $\lambda_2 \geq 2$, we can write $\lambda_2 = 2j $ or $\lambda_2= 2j+ 3$. Then we can write $\lambda = (c)+ \sum_i \alpha^i$, where $\alpha^i$ are partitions with all columns longer than 1 and at least two columns each: let $\alpha^i$ consist of the $2i+1,2i+2$ columns of $\lambda$ and $\alpha^{j+1}$ is the last 3 nonsingleton columns if $\lambda_2=2j+3$. Since the calculation showed that all partitions of 2 or 3 columns, each of lengths $\in [2,14]$ have positive $\sm$, we have $\sm(\alpha^i, 7)>0$. Since $\sm( (c), 7)>0$, the semigroup property for $\sm$ gives $\sm(\lambda,7)>0$.
 
 In the case when $c =\lambda_1 -1$ we must have $\lambda=1^k + (c)$. The calculation showed that $\sm(1^k + (3), k)>0$ and since $c = \lambda_1 -1 \geq 3$, by the semigroup property for $1^k+(3)$ and $(c-3)$ we have $\sm(\lambda,7)>0$.
\end{proof}

Finally, we consider the positivity of the classical Kronecker coefficients, as they are needed to derive $\sm$ positivity in some other exceptional situations.

\begin{proposition}\label{cor:column_2}
We have that at least one of the two quantities is positive: $\sm( (2,2,1^a), \ell)>0$ or $\am((2,2,1^a),\ell)>0$, where $\ell =\max\{7, \lceil \sqrt{a+2} \rceil \}$
\end{proposition}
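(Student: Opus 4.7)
The plan is to pass from $\sm + \am$ to the positivity of a single Kronecker coefficient via equation~\eqref{eq:g=sk+ak}. Since $g(\la,\mu,\mu) = \sk(\la,\mu) + \ak(\la,\mu)$, we have
\[
\sm((2,2,1^a),\ell) + \am((2,2,1^a),\ell) = \sum_{\mu \vdash a+4,\ \ell(\mu) \leq \ell} g((2,2,1^a),\mu,\mu),
\]
so it is enough to exhibit a single partition $\mu$ of $a+4$ with $\ell(\mu) \leq \ell$ for which $g((2,2,1^a),\mu,\mu) > 0$; the nonnegativity of $\sk$ and $\ak$ then forces at least one of $\sm$ or $\am$ to be positive.

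The minimal-length natural candidate is $\mu = (a+2,2)$, the conjugate of $\la$, of length $2 \leq 7 \leq \ell$. Applying the classical symmetry $g(\la,\mu,\nu) = g(\la,\mu^t,\nu^t)$ with $(a+2,2)^t = (2,2,1^a)$ yields
\[
g\bigl((2,2,1^a),(a+2,2),(a+2,2)\bigr) = g\bigl((2,2,1^a),(2,2,1^a),(2,2,1^a)\bigr),
\]
reducing the proposition to the positivity of the diagonal Kronecker square $g(\la,\la,\la)$ for $\la = (2,2,1^a)$. The base case $a=0$ is immediate from the character table of $\aS_4$: $[(2,2)]^{\otimes 2} = [(4)] \oplus [(2,2)] \oplus [(1^4)]$, so $g((2,2),(2,2),(2,2)) = 1$.

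The main obstacle is propagating diagonal positivity to all $a \geq 1$. The Kronecker semigroup $g(\la_1+\la_2,\mu_1+\mu_2,\nu_1+\nu_2) \geq \min\{g(\la_1,\mu_1,\nu_1),g(\la_2,\mu_2,\nu_2)\}$ is blocked here: the only partition-sum decomposition of $(2,2,1^a)$ is $(1,1)+(1^{a+2})$, and both $g((1,1),(1,1),(1,1))$ and $g((1^{a+2}),(1^{a+2}),(1^{a+2}))$ vanish. I would combine a bounded-range direct computation (in the style of Prop~\ref{prop:small_values}) for $a$ below a small cutoff with an inductive branching argument for larger $a$ based on $[(2,2,1^{a+1})]\!\downarrow_{\aS_{a+4}} = [(2,2,1^a)] \oplus [(2,1^{a+2})]$ applied to both factors of the tensor square, lifting positivity from $\aS_{a+4}$ to $\aS_{a+5}$ via Frobenius reciprocity. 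A possibly cleaner alternative is to change the test partition altogether: taking $\mu = (a+2,1,1)$ of length $3 \leq \ell$ and using the identification $[(n-2,1,1)] = \wedge^2[(n-1,1)]$ provides an explicit decomposition of $[(a+2,1,1)]^{\otimes 2}$ from which $[(2,2,1^a)]$ can be shown to appear uniformly in $a$, bypassing the rigidity of the diagonal-square route.
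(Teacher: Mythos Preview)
Your first reduction is correct and matches the paper exactly: via equation~\eqref{eq:g=sk+ak}, it suffices to exhibit a single $\mu \vdash a+4$ with $\ell(\mu) \leq \ell$ and $g((2,2,1^a),\mu,\mu) > 0$.

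The gap is precisely where you place it. Your choice $\mu = (a+2,2) = \la^t$ correctly reduces the question to the diagonal coefficient $g(\la,\la,\la)$ for $\la = (2,2,1^a)$, but you do not prove this is positive. The semigroup is indeed blocked for the reason you state, and neither of your sketched workarounds (restriction/induction along $a$, or switching to $\mu = (a+2,1,1)$ and decomposing $\wedge^2[(a+3,1)]^{\otimes 2}$) is actually carried out. The diagonal positivity $g(\la,\la,\la)>0$ for all $a$ is a nontrivial statement not supplied by any result available in the paper; in particular Proposition~\ref{prop:kron_2row} does not apply to $\nu = (2,2,1^a)$ since its Durfee size is~$2$.

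The paper avoids this entirely by choosing $\mu$ \emph{self-conjugate} rather than $\mu=\la^t$. For $\mu=\mu^t$ the same transposition symmetry you invoke gives $g((2,2,1^a),\mu,\mu) = g((a+2,2),\mu,\mu)$, a two-row Kronecker coefficient. Corollary~\ref{cor:kron_2_columns} (equivalently Proposition~\ref{prop:kron_2row}) then yields positivity whenever $d(\mu) \geq 7$, and self-conjugate $\mu \vdash a+4$ of the required length are supplied by the explicit constructions in the proof of Proposition~\ref{prop:columns}. The point is that passing to a self-conjugate $\mu$ with a large Durfee square lands you in a regime where the semigroup property \emph{does} work: inside the proof of Proposition~\ref{prop:kron_2row} one starts from the strict unimodality result $g((r^2-2,2),r^r,r^r)>0$ and builds out to $\mu$ by adding the arms and legs via the semigroup. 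Your choice $\mu=\la^t$ has Durfee size~$2$, so this machinery is unavailable, and you are left with exactly the ad hoc case analysis you were trying to avoid.
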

\begin{proof}
Let $\lambda=(2,2,1^a)$. Using equation~\eqref{eq:g=sk+ak} for the Kronecker coefficients
we have that 
$$\sum_{\mu: \ell(\mu) \leq \ell} g(\lambda,\mu,\mu) = \sm(\lambda,\ell)+\am(\lambda,\ell)$$
Let $\mu$ be a self-conjugate partition of $a+4$ and length at most $\ell$, as constructed for example in the proof of Proposition~\ref{prop:columns}. Then Corollary~\ref{cor:kron_2_columns} applies and $g(\lambda,\mu,\mu)$ is strictly positive, implying that at least one of $\sm$ and $\am$ above must also be positive.  
\end{proof}

\begin{proposition}\label{prop:sm_positive}
Let $\lambda$ be  partition of length at most $L$ and $\lambda \not \in \{ (1^2),(1^3),(1^4), (1^7), (1^8), (1^{12}), (2,1^2), (3,1^2), (2,1^7)\}$ and also $\lambda \neq (2,2,1^{k})$ for any $k$. Let $\ell :=\max\{ \lceil \sqrt{L} \rceil +2,12\}$. Then $\sm(\lambda, \ell) >0$.
\end{proposition}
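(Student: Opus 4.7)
The plan is to induct on $|\lambda|$, handling short partitions directly via Proposition~\ref{prop:small_values} and long partitions by peeling off the tallest column and recombining via the semigroup property of Proposition~\ref{prop:semigroup}.

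For the base case $\ell(\lambda) \leq 14$, the hypothesis of Proposition~\ref{prop:sm_positive} is strictly stronger than that of Proposition~\ref{prop:small_values}: the nine small exceptions coincide after identifying $(2,1,1)=(2,1^2)$ and $(3,1,1)=(3,1^2)$, while Proposition~\ref{prop:sm_positive} additionally excludes every $(2,2,1^k)$. Hence Proposition~\ref{prop:small_values} gives $\sm(\lambda,7)>0$, and by monotonicity of $\sm(\lambda,\cdot)$ in the second argument, $\sm(\lambda,\ell)>0$ since $\ell\geq 12\geq 7$.

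For the inductive step $c_1:=\ell(\lambda)\geq 15$: since $X_s\cup X_a\subseteq\{1,\dots,14\}$, Proposition~\ref{prop:columns} yields both $\sm(1^{c_1},\ell)>0$ and $\am(1^{c_1},\ell)>0$. Decompose $\lambda=1^{c_1}+\lambda'$ by removing the tallest column (so $\lambda'$ has column heights $c_2,\ldots,c_k$). Applying Proposition~\ref{prop:semigroup}(1) when $\sm(\lambda',\ell)>0$ and Proposition~\ref{prop:semigroup}(2) when $\am(\lambda',\ell)>0$ each yields $\sm(\lambda,\ell)>0$, so it suffices to show $\sm(\lambda',\ell)+\am(\lambda',\ell)>0$. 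I would argue by cases on $\lambda'$: if $\lambda'=(1^2)$ then $\lambda=(2,2,1^{c_1-2})$, contradicting $\lambda\neq (2,2,1^k)$; if $\lambda'=(2,2,1^k)$ for some $k\geq 0$, then the corresponding bound of Proposition~\ref{cor:column_2} is at most $\ell$ because $k+2\leq c_1\leq L$, and that proposition gives $\sm(\lambda',\ell)+\am(\lambda',\ell)>0$; if $\lambda'$ is one of the other small excluded partitions---$(1^r)$ for $r\in X_s\setminus\{2\}$, or $(2,1^2)$, $(3,1^2)$, $(2,1^7)$---then $\am(\lambda',\ell)>0$ follows by decomposing $\lambda'$ into its columns and iterating Proposition~\ref{prop:semigroup}(3), using that $\am(1^r,\ell)>0$ for $r\in X_s\setminus X_a=\{3,4,7,8,12\}$ (for example, $(2,1^2)=1^3+1^1$ combines $\am(1^3,\ell)>0$ with $\sm(1^1,\ell)>0$ to give $\am((2,1^2),\ell)>0$); and in every remaining case $\lambda'$ satisfies the hypothesis of Proposition~\ref{prop:sm_positive}, so the inductive hypothesis (valid because $|\lambda'|=|\lambda|-c_1<|\lambda|$) gives $\sm(\lambda',\ell)>0$.

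The main obstacle will be the finite case analysis for excluded small tails: one must verify for each excluded $\lambda'\neq (1^2)$ that $\am(\lambda',\ell)>0$ by an explicit column decomposition and an application of Proposition~\ref{prop:semigroup}(3). The key structural observation that makes the whole argument close is that $(1^2)$ is the only partition among the possible tails for which both $\sm$ and $\am$ vanish, and the hypothesis $\lambda\neq (2,2,1^k)$ is precisely what forbids $\lambda'=(1^2)$.
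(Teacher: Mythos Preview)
Your argument is correct and takes a genuinely different route from the paper. The paper partitions the columns of $\lambda$ into those of exceptional length (in $X_s$) and the rest, then does a case analysis on the \emph{number} $x$ of exceptional columns: when $x\neq 1$ it combines Proposition~\ref{prop:small_values} for the exceptional block with the column-wise semigroup property on the rest; when $x=1$ it pairs the single exceptional column with another column and does further sub-cases. Your proof instead inducts on $|\lambda|$, using Proposition~\ref{prop:small_values} wholesale when $\ell(\lambda)\leq 14$ and otherwise peeling off the tallest column $1^{c_1}$ with $c_1\geq 15$. The key observation driving your induction is that $X_s\cup X_a\subseteq\{1,\dots,14\}$, so this tallest column satisfies \emph{both} $\sm(1^{c_1},\ell)>0$ and $\am(1^{c_1},\ell)>0$; by Proposition~\ref{prop:semigroup}(1)--(2) it can therefore absorb any $\lambda'$ with $\sm(\lambda',\ell)+\am(\lambda',\ell)>0$, reducing the problem to a short finite check on the excluded tails. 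This is structurally cleaner than the paper's flat decomposition: it replaces the $x=0/x\geq 2/x=1$ trichotomy and its nested sub-cases by a single uniform inductive step, and it makes transparent why $(1^2)$ is the only genuinely problematic tail (the unique element of $X_s\cap X_a$), hence why the hypothesis $\lambda\neq(2,2,1^k)$ is exactly what is needed. Both proofs rest on the same four ingredients (Propositions~\ref{prop:semigroup}, \ref{prop:columns}, \ref{prop:small_values}, \ref{cor:column_2}), so neither is more elementary, but yours has less case-work.
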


\begin{proof}

Let $X:=(2^{a_2},3^{a_3},4^{a_4},7^{a_7},8^{a_8},12^{a_{12}})$ be the multiset of columns in $\lambda$ which are of the exceptional lengths $X_s$, and let $\beta$ be the partition formed by them. Let $x :=a_2+ a_3+a_4+a_7+a_8+a_{12}$, and let $\alpha$ be the partition formed by the nonexceptional columns of $\lambda$, so $\lambda=\alpha+\beta$. By Proposition~\ref{prop:columns} we have that each column $1^k$ in $\alpha$, $\sm(1^k,\ell)>0$ and so by the semigroup property adding these columns we get  $\sm(\alpha,\ell)  >0$.

Suppose that $x \geq 2$ or $x=0$.  By Proposition~\ref{prop:small_values}  we have that $\sm(\beta,7)>0$. Thus, by the semigroup property we have that $\sm(\lambda,\ell) = \sm(\alpha+\beta,\ell)>0$. 

Suppose for the rest of the proof that $x=1$, so there is exactly one column of length $r\in X_s$. Since $\lambda$ is not one of the exceptional partitions, it must have at least one more column $k$ and since $x=1$, we must have $k \not \in X_s$. 

Let first $r \neq 2$, then $r \not \in X_a$.
Suppose that $k \not \in X_a$ as well. By Proposition~\ref{prop:columns} we have $\am(1^k,\ell)>0$ and by the $\am$ semigroup property we have $\sm(1^k+1^r, \ell)>0$. The remaining columns of $\lambda$ are $\not \in X_s$, so also have positive $\sm$, and we can add them all to obtain $\sm(\lambda,\ell)>0$ by the $\sm$-semigroup.
If $k \in X_a$, then $k\leq 14$ and so $\sm(1^k + 1^r,\ell)>0$ by Proposition~\ref{prop:small_values}. 

Let now $r=2$. Since $\lambda \neq (2,2,1,1,\ldots)$, there must be at least 2 other columns, say of lengths $k_1, k_2 \not \in X_s$. If $k_i \leq 14$ for some $i$, then $\sm(1^{k_i}+1^r,\ell)>0$ by Proposition~\ref{prop:small_values} and adding this partition to the remaining nonexceptional columns we get $\sm(\lambda,\ell)>0$ by the semigroup. If $k_i >14$, then by Proposition~\ref{cor:column_2} at least one of the following holds:
\begin{itemize}
\item  $\sm(1^r+1^{k_1},\ell)>0$: then adding the remaining nonexceptional columns of $\lambda$ by the semigroup property we get $\sm(\lambda,\ell)>0$.
\item $\am(1^r+1^{k_1},\ell)>0$: then since $k_2 \not \in X_a$, we also have $\am(1^{k_2},\ell)>0$, so by the semigroup property we get $\sm(1^r+1^{k_1}+1^{k_2},\ell)>0$. Adding the remaining nonexceptional columns of $\lambda$ we have $\sm(\lambda,\ell)>0$.
\end{itemize}

This exhausts all cases and completes the proof. 
\end{proof}
We can now derive the proof of Proposition~\ref{pro:pos}.
\begin{proof}[Proof of Proposition~\ref{pro:pos}]
This follows directly from Proposition~\ref{prop:sm_positive}: 
since $\lambda_1 \geq 3$, we have that $\lambda$ is not a partition of $1$ or $2$ columns, and since $\lambda \vdash dm \geq 10$, we have that $\lambda$ is not any of the exceptional partitions. We have that  $\ell(\lambda)\leq m^2=L$, and thus $\ell=\max\{m+2,12\}=m+2\leq n$. So $\sm(\lambda,n)\geq \sm(\lambda,\ell)>0$ by Proposition~\ref{prop:sm_positive}.
\end{proof}

\section{Stabilizer-invariants in the Schur modules}\label{sec:stabinv_semigroup}
In this section, we prove Thm.~\ref{thm:sm}.

We introduce the notation that we will need in this section. Let $E$ be a vector space of dimension $n$, let $E^*$ be its dual space. Define $V = E^* \otimes E = \End(E)$. We have that $\Pow^m_n \in S^m V$ is defined by $\Pow^m_n (X) = \tr(X^m)$ for any $X \in V^*$. For any two vector spaces $W,W'$ and any invertible linear map $f : W \to W'$, $f^{-T} : W^* \to W'^*$ denotes its transpose inverse.

We are interested in the stabilizer of $\Pow^m_n$ in $\GL(V)$, that is
$
\calS := \{ g\in \GL(V) \mid g \cdot \Pow^m_n = \Pow^m_n \}.
$
It is characterized by the following theorem.
\begin{theorem}\label{thm: stabilizer of Pow}
If $n,m \geq 3$, The stabilizer of $\Pow^m_n$ in $\GL(V)$ is
\[
\calS = (\PGL(E) \times \langle \omega_m \cdot \Id_V \rangle) \rtimes \langle \tau \rangle
\]
where $\PGL(E) = ad(\GL(E))$ is the image of the adjoint representation $ad: \GL(E) \to \GL(V)$, $\omega_m$ is a primitive $m$-th root of $1$ and $\tau :V \to V$ is defined via $\tau : E^* \otimes E \to E^* \otimes E$, $\eta \otimes e \mapsto \delta^{-1} (e) \otimes \delta(\eta)$, where $\delta : E^* \xto{\sim} E$ is a vector space isomorphism identifying a basis of $E$ with its dual basis.
\end{theorem}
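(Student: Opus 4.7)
The plan is to establish the two inclusions separately. The easy direction $\supseteq$ is direct verification: conjugation by $\PGL(E)$ preserves $\tr(X^m)$ by cyclicity of the trace; scaling by $\omega_m$ multiplies $\Pow^m_n$ by $\omega_m^m=1$; and $\tau$ implements matrix transposition under the natural identification of $V$ with $\End(E)$, so $\tr((X^t)^m)=\tr((X^m)^t)=\tr(X^m)$. For the reverse inclusion I will first identify the identity component $\calS^\circ$ via its Lie algebra, and then compute the component group $\calS/\calS^\circ$ via a normalizer argument.

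For the identity component, differentiating the stabilizer condition $g\cdot \Pow^m_n=\Pow^m_n$ at the identity shows that $\mathfrak{s}:=\text{Lie}(\calS)\subseteq \End(V)$ consists of those $Y$ satisfying $\tr(X^{m-1}\,Y(X))=0$ for every $X\in \End(E)$. This condition is $\GL(E)$-equivariant with respect to the action of $\GL(E)$ on $V$ by conjugation, so $\mathfrak{s}$ is a $\GL(E)$-subrepresentation of $\End(V)\simeq \End(E)\otimes \End(E)$ (using the trace pairing $V^*\simeq V$). I would decompose $(\fraksl(E)\oplus \IC)^{\otimes 2}$ into its irreducible $\GL(E)$-summands and, for each summand, plug an explicit representative $Y$ into $\tr(X^{m-1}Y(X))$ to detect whether it lies in $\mathfrak{s}$. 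The expected outcome is that only the copy of $\fraksl(E)$ arising from the adjoint representation $ad:\mathfrak{gl}(E)\to \mathfrak{gl}(V)$ survives, which gives $\mathfrak{s}=ad(\mathfrak{gl}(E))$ and hence $\calS^\circ=\PGL(E)$.

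For the component group I use that every $g\in \calS$ normalizes $\calS^\circ=\PGL(E)$, so $\calS\subseteq N_{\GL(V)}(\PGL(E))$. This normalizer fits into an exact sequence with kernel the centralizer $Z_{\GL(V)}(\PGL(E))$ and image inside $\Aut(\PGL(E))$. Since $n\geq 3$, we have $\Out(\PGL(E))=\IZ/2$, generated by the transpose-inverse involution, which is implemented inside $\GL(V)$ precisely by $\tau$; the centralizer is a two-dimensional torus of scalars on the two $\PGL(E)$-isotypic components $V=\fraksl(E)\oplus \IC\cdot \Id_E$, by Schur's lemma. Intersecting with the stabilizer condition forces both scalars to agree (otherwise $\tr$ of $m$th powers fails to match on a generic $X$) and to equal an $m$th root of unity, while $\tau$ itself is already in $\calS$. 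Assembling these pieces yields the claimed description $\calS=(\PGL(E)\times \langle \omega_m\cdot \Id_V\rangle)\rtimes\langle \tau\rangle$.

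The main obstacle is the Lie algebra identification: effectively ruling out every ``non-adjoint'' $\GL(E)$-summand of $\End(V)$ requires checking that a polynomial of degree $m$ in $X$ is not identically zero on $\End(E)$, and these checks can become delicate (indeed the conclusion genuinely uses $m\geq 3$). A cleaner alternative I would attempt is to exploit reductivity of $\calS$ (via Matsushima, since the orbit is affine as noted earlier): then $\calS/\calS^\circ$ is finite automatically, and one can read off $\mathfrak{s}$ from the differential of the orbit map at a single generic $X$, reducing the computation to finite-dimensional linear algebra inside $\End(V)$.
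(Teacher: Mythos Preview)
Your global strategy is exactly the paper's: verify the easy inclusion; identify $\calS^\circ$ via its Lie algebra; use $\calS\subseteq N_{\GL(V)}(\calS^\circ)$, compute the normalizer as $(\calS^\circ \times C_{\GL(V)}(\calS^\circ))\rtimes\langle\tau\rangle$ via Schur's lemma and $\Out(\PGL(E))\simeq\IZ/2$; and finally intersect with the stabilizer condition to cut the centralizer $(\IC^*)^2$ down to $\langle\omega_m\cdot\Id_V\rangle$. The only substantive difference is in how you propose to compute the Lie algebra: the paper does a direct monomial analysis in coordinates, writing $L=\sum c^{k\ell}_{ij}\,\xi^i_j\otimes x^k_\ell$ and tracking which monomials of $x^k_\ell\cdot(X^{m-1})^i_j$ can cancel against each other, whereas you propose to exploit the $\GL(E)$-module structure of $\End(V)\simeq(\fraksl(E)\oplus\IC)^{\otimes 2}$ and test one representative per isotypic component. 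Your route is cleaner conceptually but, as you note, there are two copies of $\fraksl(E)$ in that decomposition, so you must still determine which linear combination is annihilated, and the ``delicate'' checks you mention for the pieces of $\fraksl(E)\otimes\fraksl(E)$ are where the hypothesis $m\geq 3$ actually enters; the paper's monomial argument handles this uniformly by exhibiting specific monomials that cannot cancel. Your alternative via Matsushima and a dimension count at a generic $X$ would also work and is arguably the quickest way to confirm $\mathfrak{s}=ad(\mathfrak{gl}(E))$ once you already know this candidate.
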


The proof of Theorem~\ref{thm: stabilizer of Pow} is given in Section~\ref{sec:stab}. Denote $\calS_0 = \PGL(E) \subseteq \calS$.

Let $\pi$ be a partition, $\pi \partinto{d}$ with length $\ell(\pi) \leq n^2$. The space of $\calS$-invariants in the Schur module $\bbS_\pi V$ will be determined in two steps. First, we will determine the space of $\calS_0 \times \langle \omega_m \Id_V \rangle$ invariants in $\bbS_\pi V$: this space is $0$ if $d$ is not a multiple of $m$ and it is the space of $\calS_0$-invariants, $\left[ \bbS_\pi V \right]^{\calS_0}$, if $d$ is a multiple of $m$. Afterwards, we determine the space of $\langle \tau \rangle$-invariants $ \left[[\bbS_\pi V]^{\calS_0}\right]^{\langle \tau \rangle}$.

It is immediate that, if $d$ is not a multiple of $m$, then $\bbS_\pi V$ does not contain non-zero invariants, because $\omega_m \Id_V$ acts on $\bbS_\pi V$ by multiplication by $\omega_m^d$, that is $1$ if and only if $d$ is a multiple of $m$.

We proceed as in \cite{BuIk:GCT_and_tensor_rank} to determine the space of $\calS_0$-invariants.

\begin{proposition}
Let $\pi \partinto[n^2]d$. Then $[ \bbS_\pi V ]^{\calS_0} = \sum_{\substack{\mu \partinto{d} \\ \ell(\mu) \leq n}} g(\pi,\mu,\mu)$.
\end{proposition}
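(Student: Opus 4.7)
The plan is to apply Schur--Weyl duality separately to the two tensor factors of $V = E^* \otimes E$, decompose $\bbS_\pi V$ as a $\GL(E)$-representation under the adjoint action, and then read off the $\calS_0$-invariants with Schur's lemma, following the template of \cite{BuIk:GCT_and_tensor_rank} cited just above the statement.

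First I would observe that the $\aS_d$ action on $\tensor^d V$ permutes the $d$ copies of $E^*$ in lockstep with the $d$ copies of $E$, which yields
$$\tensor^d V \;\simeq\; \tensor^d E^* \;\otimes\; \tensor^d E$$
as $\aS_d \times \GL(E)$-modules, where $\GL(E)$ acts by $g\cdot(\eta\otimes e) = g^{-T}\eta \otimes ge$. Applying Schur--Weyl duality to each factor and combining gives
$$\tensor^d V \;\simeq\; \bigoplus_{\nu,\mu\partinto[n] d} \bigl(\Specht{\nu}\otimes\Specht{\mu}\bigr) \otimes \bigl(\Weyl{\nu}^*\otimes\Weyl{\mu}\bigr),$$
with $\aS_d$ acting diagonally on $\Specht{\nu}\otimes\Specht{\mu}$. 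The Schur functor $\bbS_\pi(-) = \Hom_{\aS_d}(\Specht{\pi},-)$ then extracts the $\pi$-isotypic part, and the definition of the Kronecker coefficient $g(\pi,\nu,\mu) = \dim\Hom_{\aS_d}(\Specht{\pi},\Specht{\nu}\otimes\Specht{\mu})$ produces
$$\bbS_\pi V \;\simeq\; \bigoplus_{\nu,\mu\partinto[n] d} g(\pi,\nu,\mu)\,\Weyl{\nu}^* \otimes \Weyl{\mu}$$
as $\GL(E)$-modules.

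To finish I would extract the $\calS_0=\PGL(E)$-invariants from this decomposition. Since $\calS_0$ is, by definition, the image of $\GL(E)$ in $\GL(V)$ under the adjoint representation, a vector is $\calS_0$-fixed iff it is $\GL(E)$-fixed for the action above; equivalently, the central torus of $\GL(E)$ acts on each summand $\Weyl{\nu}^*\otimes\Weyl{\mu}$ by the character $t \mapsto t^{-|\nu|+|\mu|}$, which is trivial because $|\nu|=|\mu|=d$. Hence passing from $\GL(E)$-invariants to $\PGL(E)$-invariants introduces nothing new, and Schur's lemma gives $\dim(\Weyl{\nu}^*\otimes\Weyl{\mu})^{\GL(E)} = \dim\Hom_{\GL(E)}(\Weyl{\nu},\Weyl{\mu}) = \delta_{\nu,\mu}$. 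Collecting the diagonal contributions $\nu=\mu$ yields the claimed identity $\dim[\bbS_\pi V]^{\calS_0} = \sum_{\mu\partinto[n] d} g(\pi,\mu,\mu)$.

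No step is a substantive obstacle: the computation is a direct two-sided Schur--Weyl expansion followed by Schur's lemma. The only point to track is the alignment between the two actions of $\GL(E)$ on $E^*$ and on $E$ and the single conjugation action of $\calS_0$ on $V$, which is built into the identification $V = E^*\otimes E$ with $g$ acting as $g^{-T}\otimes g$, and the vanishing of the central character that makes the $\GL(E)$-invariants coincide with the $\PGL(E)$-invariants.
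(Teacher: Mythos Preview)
Your proposal is correct and follows essentially the same approach as the paper: decompose $\bbS_\pi(E^*\otimes E)$ via Schur--Weyl duality into $\bigoplus_{\mu,\nu} K_\pi^{\mu,\nu}\otimes \bbS_\mu E^*\otimes \bbS_\nu E$ with $\dim K_\pi^{\mu,\nu}=g(\pi,\mu,\nu)$, identify $\calS_0$-invariants with $\GL(E)$-invariants, and apply Schur's lemma to pick out the diagonal $\mu=\nu$ terms. The only differences are expository: you spell out the two-factor Schur--Weyl step and the central-torus argument for why $\PGL(E)$- and $\GL(E)$-invariants coincide, whereas the paper cites a reference for the decomposition and states the latter directly.
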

\begin{proof}
$\calS_0$ is the image of $\GL(E)$ in $\GL(V)$ via the adjoint representation, so the $\calS_0$-invariant subspace in $\bbS_\pi V$ coincides with the $\GL(E)$-invariant subspace. We have the following decomposition under the action of $\GL(E)$ (see e.g. \cite[Sec. 4.4]{Ike:PhDthesis}):
\begin{equation}\label{eqn: split with kron. spaces}
 \bbS_\pi V = \bbS_\pi (E^* \otimes E) = \bigoplus_{\mu,\nu \partinto{d}} K_\pi^{\mu,\nu} \otimes \bbS_\mu E^* \otimes \bbS_\nu E,
\end{equation}
where $ K_\pi^{\mu,\nu} $ is a multiplicity space whose dimension is the Kronecker coefficient $g(\pi,\mu,\nu)$. In particular, the action of $\GL(E)$ on $K_\pi^{\mu,\nu}$ is trivial. Moreover, if $\ell(\mu) > n$ or $\ell(\nu) >n$, then $\bbS_\mu E^* \otimes \bbS_\nu E = 0$.

If $\ell(\mu),\ell(\nu) \leq n$ then, by Schur's Lemma, it is immediate that $\left[ \bbS_\mu E^* \otimes \bbS_\nu E \right]^{\GL(E)} = 0$ if $\mu \neq \nu$ and $\left[ \bbS_\mu E^* \otimes \bbS_\mu E \right]^{\GL(E)} = \langle \Id_{\bbS_\mu E} \rangle$. We obtain
\begin{equation}\label{eqn: GL(E)-invariants in SpiV}
[ \bbS_\pi V ] ^{\calS_0} = [ \bbS_\pi V ] ^{\GL(E)}  = \bigoplus_{\substack{\mu,\nu \partinto{d} \\ \ell(\mu) ,\ell(\nu) \leq n}} K_\pi^{\mu,\nu} \otimes [\bbS_\mu E^* \otimes \bbS_\nu E]^{\GL(E)} = \bigoplus_{\substack{\mu \partinto{d} \\ \ell(\mu) \leq n}} K_\pi^{\mu,\mu} \otimes \Id_{\bbS_\mu E}.
\end{equation}
The dimension of this space is $\sum_{\substack{\mu \partinto{d} \\ \ell(\mu) \leq n}} g(\pi,\mu,\mu)$.
\end{proof}

In order to determine the space of $\langle \tau \rangle$-invariants in $[\bbS_\pi V]^{\calS_0}$, we will study the action of $\tau$ on the right-hand side of \eqref{eqn: split with kron. spaces}. We follow the discussion of \cite[Sec. 5.2]{BuLaMaWe:Ov_math_iss_in_GCT}. 

If $W$ is a vector space of dimension $n$ and $\lambda$ is a partition $\lambda \partinto{d}$, $\ell(\lambda) \leq n$, then, by Schur-Weyl duality
$
 \bbS_\lambda W = \Hom_{\frakS_d} ( [\lambda], V^{\otimes d}),
$
where $[\lambda]$ is the Specht module associated to $\lambda$.

Given partitions $\pi,\mu,\nu \partinto{d}$, by definition of the Kronecker coefficient
$
 K_\pi^{\mu,\nu} = \Hom_{\frakS_d} ([\pi] , [\mu] \otimes [\nu]).
$

For every $\pi,\mu,\nu$, the following $\GL(E)$-equivariant map realizes a summand on the right-hand side of \eqref{eqn: split with kron. spaces} as submodule of $\bbS_\pi (E^* \otimes E)$:
\begin{equation}\label{eqn: kron. summands embedding}
\begin{aligned}
  K_\pi^{\mu,\nu} \otimes \bbS_\mu E^* \otimes \bbS_\nu E &\to \bbS_\pi (E^* \otimes E) \\
  \phi \otimes \alpha \otimes \beta &\mapsto (\alpha \otimes \beta) \circ \phi
\end{aligned}
\end{equation}
where we use the reordering $ (E^* \otimes E)^{\otimes d} \simeq E^{*\otimes d} \otimes E^{\otimes d}$ (maintaining the relative order of the copies of $E$ and of the copies of $E^*$).

Notice that the isomorphism $\delta: E^* \xto{\sim} E$ induces a vector space isomorphism $E^{* \otimes d} \xto{\sim} E^{\otimes d}$ and that restricts to $\bbS_\lambda E^* \xto{\sim} \bbS_\lambda E$ for every $\lambda \partinto{d}$. Similarly, the map $\tau \in \GL(V)$ acts on $(E^* \otimes E)^{\otimes d}$: its action commutes with the action of $\frakS_d$, so it passes to the components $\bbS_{\pi}(E^* \otimes E)$. More precisely, if $\psi \in \bbS_{\pi}(E^* \otimes E) = \Hom_{\frakS_d}([\pi], (E^*\otimes E)^{\otimes d})$ then $\tau(\psi) = \tau^{\otimes d} \circ \psi$, that is the composition
\begin{equation}\label{eqn: tau tensor d on Spi}
 \begin{aligned}
 {[\pi]} \xrightarrow{\phantom{a}\psi\phantom{a}} (E^*\otimes E)^{\otimes d} &\xrightarrow{\tau^{\otimes d}} (E^*\otimes E)^{\otimes d} \\
 \otimes_j (\beta^j \otimes u_j)& \longmapsto \otimes_j (\delta^{-1}(u_j) \otimes \delta(\beta^j)),
 \end{aligned}
\end{equation}
for $\beta^j \in E^*,u_j \in E$.

For every $\pi,\mu,\nu$, there is an isomorphism $\sigma^\pi_{\mu,\nu} : K_\pi^{\mu,\nu} \to K_\pi^{\nu,\mu}$ obtained via the composition of an element $\psi$ with the canonical isomorphism $[\mu] \otimes [\nu] \simeq [\nu] \otimes [\mu]$; in particular $\sigma^\pi_{\mu,\nu}$ is the inverse of $\sigma^\pi_{\nu,\mu}$ and $\sigma^\pi_{\lambda,\lambda}$ is an element of order $2$ acting on $K_\pi^{\lambda,\lambda}$.

Consider the diagram 
\[
\xymatrix@R-1pc{
 K^{\mu,\nu}_\pi \otimes \bbS_{\mu} E^* \otimes \bbS_{\nu} E \ar@<.5ex>[r] \ar[d]& \bbS_\pi (E^* \otimes E) \ar@<.5ex>[l] \ar[d]\\
 K^{\nu,\mu}_\pi \otimes \bbS_{\nu} E^* \otimes \bbS_{\mu} E \ar@<.5ex>[r] & \bbS_\pi (E^* \otimes E) \ar@<.5ex>[l]\\
 }
\]
where the horizontal arrows from left to right are the $\GL(E^*) \times \GL(E)$-equivariant embeddings as in \eqref{eqn: kron. summands embedding}, the horizontal arrows from right to left are the corresponding projections, the vertical arrow on the right is the $\tau^{\otimes d}$ as in \eqref{eqn: tau tensor d on Spi} and the vertical arrow on the left is the map sending $\phi \otimes \alpha \otimes \beta \in K^{\mu,\nu}_\pi \otimes \bbS_{\mu} E^* \otimes \bbS_{\nu} E$ to 
$
 \sigma_\pi^{\mu,\nu} (\phi) \otimes ( (\delta^{-1})^{\otimes d} \circ \beta) \otimes (\delta^{\otimes d} \circ \alpha).
$
A straightforward calculation shows that the diagram commutes.

In particular, the action of $\tau$ restricts to the summands of \eqref{eqn: split with kron. spaces} where $\mu = \nu$ as
\begin{equation}\label{eqn: restriction of tau to GLE decomp}
\begin{aligned}
 K_\pi^{\mu,\mu} \otimes \bbS_\mu E^* \otimes \bbS_\mu E &\to  K_\pi^{\mu,\mu} \otimes \bbS_\mu E^* \otimes \bbS_\mu E \\
 (\alpha \otimes \beta) \circ \phi &\mapsto ((\delta^{-1} \circ \beta) \otimes (\delta \circ \alpha)) \circ \sigma_\pi^{\mu,\mu}(\phi)
\end{aligned}
\end{equation}

\begin{lemma}\label{lemma: tau restricts to K's}
 The action of $\tau$ restricts to the $\GL(E)$-invariant subspace in $K^{\mu,\mu}_\pi \otimes \bbS_\mu E^* \otimes \bbS_\mu E$.
\end{lemma}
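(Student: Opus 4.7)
The plan is to locate the one-dimensional $\GL(E)$-invariant line inside $\bbS_\mu E^* \otimes \bbS_\mu E$ and verify that the operator induced by $\tau$ preserves it. Since $\GL(E)$ acts trivially on the multiplicity space $K_\pi^{\mu,\mu}$, the $\GL(E)$-invariant subspace of $K_\pi^{\mu,\mu} \otimes \bbS_\mu E^* \otimes \bbS_\mu E$ equals $K_\pi^{\mu,\mu} \otimes \bigl[\bbS_\mu E^* \otimes \bbS_\mu E\bigr]^{\GL(E)}$, and by Schur's lemma (cf.\ \eqref{eqn: GL(E)-invariants in SpiV}) the second factor is the line spanned by $\Id_{\bbS_\mu E}$ under the canonical identification $\bbS_\mu E^* \otimes \bbS_\mu E = \End(\bbS_\mu E)$.

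Next, the formula \eqref{eqn: restriction of tau to GLE decomp} exhibits the action of $\tau$ on this summand as the tensor product $\sigma_\pi^{\mu,\mu} \otimes T$, where $T : \bbS_\mu E^* \otimes \bbS_\mu E \to \bbS_\mu E^* \otimes \bbS_\mu E$ is the map $\alpha \otimes \beta \mapsto \delta_\mu^{-1}(\beta) \otimes \delta_\mu(\alpha)$ and $\delta_\mu : \bbS_\mu E^* \to \bbS_\mu E$ is the isomorphism obtained from $\delta^{\otimes d}$ by $\frakS_d$-equivariant projection onto the $[\mu]$-isotypic components. Since $\sigma_\pi^{\mu,\mu}$ acts only on the multiplicity space $K_\pi^{\mu,\mu}$, the statement reduces to showing that $T$ stabilizes the line $\langle \Id_{\bbS_\mu E}\rangle$.

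The key input is that $\delta$ is a \emph{symmetric} isomorphism: viewed as an element of $E \otimes E$ it equals $\sum_a e_a \otimes e_a \in \Sym^2 E$, so the bilinear form $(\alpha,\beta) \mapsto \alpha(\delta\beta)$ on $E^*$ is symmetric. Tensoring this form $d$ times produces a symmetric bilinear form on $(E^*)^{\otimes d}$; restricting to the $\bbS_\mu E^*$-component of the $\frakS_d$-isotypic decomposition yields that $\delta_\mu$ is itself symmetric. Choosing dual bases $\{v_i\}$ of $\bbS_\mu E$ and $\{v^i\}$ of $\bbS_\mu E^*$ and writing $\delta_\mu(v^i) = \sum_k M_{ki} v_k$ with $M = M^T$, the direct expansion of $\Id_{\bbS_\mu E} = \sum_i v^i \otimes v_i$ gives $T(\Id_{\bbS_\mu E}) = \sum_{i,k} (M^{-1} M^T)_{ik}\, v^i \otimes v_k = \Id_{\bbS_\mu E}$, as desired. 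I expect the main obstacle to be the clean verification that symmetry of $\delta$ descends to $\delta_\mu$ on the Schur component; this is formal but must be spelled out via the $\frakS_d$-equivariance of $\delta^{\otimes d}$, whereas everything else is routine bookkeeping with Schur's lemma and the explicit formula for $\tau$.
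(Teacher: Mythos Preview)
Your proof is correct and follows the same outline as the paper: identify the $\GL(E)$-invariant subspace as $K_\pi^{\mu,\mu}\otimes\langle\Id_{\bbS_\mu E}\rangle$ and then check that the map induced by $\tau$ fixes $\Id_{\bbS_\mu E}$. The only difference is in this last check: the paper observes in one line that $\tau(\Id_E)=\Id_E$ (immediate from $\delta(\eta^i)=e_i$), hence $\tau^{\otimes d}$ fixes $\Id_E^{\otimes d}$ and therefore its Schur component $\Id_{\bbS_\mu E}$, whereas you unpack the equivalent fact that $\delta$ is symmetric and carry out an explicit matrix computation --- correct, but more work than needed.
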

 \begin{proof}
The $\GL(E)$-invariant subspace is $K_\pi^{\mu,\mu} \otimes \Id _{\bbS_\mu E}$. In particular, we need to show that $\tau^{\otimes d} (\Id _{\bbS_\mu E}) =  \Id _{\bbS_\mu E}$ (up to scale). But this is clear as $\tau$, by definition, preserves $\Id _E$ and so $\Id_{\bbS_\mu E}$.
 \end{proof}

 Now, we can conclude
 \begin{theorem}\label{thm:mult}
If $\pi \partinto{d}$, and $d$ is a multiple of $m$, then the space of $\calS$-invariants in $\bbS_\pi V$ is
\[
\left[ \bbS_\pi V \right]^\calS = \textstyle\bigoplus_{\substack{\lambda \partinto{d} \\ \ell(\lambda) \leq n}} sK^{\lambda,\lambda}_{\pi} \otimes \Id_{\calS_\lambda E}
\]
where $ sK^{\lambda,\lambda}_{\pi} = \Hom_{\frakS_d} ([\pi], S^2 [\lambda])$. In particular, its dimension is $\sm(\pi, n) = \sum_{\mu \vdash d, \ell(\mu) \leq n} sk(\pi,\mu)$.
\end{theorem}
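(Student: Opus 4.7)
The plan is to build on the two ingredients already assembled: the decomposition of the $\calS_0$-invariants (with the $\omega_m$-piece handled automatically by the divisibility hypothesis on $d$) and the commutative diagram describing how $\tau$ acts on the isotypic decomposition of $\bbS_\pi V$ under $\GL(E)$. The theorem will follow by successively taking invariants with respect to the three generating pieces of $\calS$.

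First I would reduce to $\calS_0 = \PGL(E)$ invariants. The element $\omega_m \Id_V \in \calS$ acts on $\bbS_\pi V$ by multiplication by $\omega_m^d$, which is trivial precisely when $m \mid d$; under this assumption the $\langle \omega_m \Id_V \rangle$-invariants coincide with all of $\bbS_\pi V$, and the previous proposition gives
\[
[\bbS_\pi V]^{\calS_0 \times \langle \omega_m\Id_V\rangle} = \bigoplus_{\substack{\mu \partinto{d} \\ \ell(\mu) \leq n}} K_\pi^{\mu,\mu} \otimes \Id_{\bbS_\mu E}.
\]

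Next I would take $\langle \tau \rangle$-invariants inside this space. By Lemma~\ref{lemma: tau restricts to K's}, $\tau$ preserves each summand $K_\pi^{\mu,\mu} \otimes \Id_{\bbS_\mu E}$, so it suffices to identify its action on a single summand. Specializing the commutative square \eqref{eqn: restriction of tau to GLE decomp} to $\mu = \nu$ and evaluating on an element of the form $\phi \otimes \Id_{\bbS_\mu E}$, the factor $(\delta^{-1} \otimes \delta)^{\otimes d}$ fixes $\Id_{\bbS_\mu E}$ (since $\tau$ preserves $\Id_E$ tautologically, and tensor powers preserve the identity on Schur functors), so $\tau$ acts on this summand by $\sigma_\pi^{\mu,\mu} \otimes \id$, where $\sigma_\pi^{\mu,\mu}$ is the involution on $K_\pi^{\mu,\mu} = \Hom_{\frakS_d}([\pi], [\mu]\otimes[\mu])$ induced by swapping the two tensor factors.

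Therefore $[\bbS_\pi V]^\calS$ is the direct sum, over $\mu \partinto{d}$ with $\ell(\mu)\leq n$, of the $\sigma_\pi^{\mu,\mu}$-invariant subspace of $K_\pi^{\mu,\mu}$ tensored with $\Id_{\bbS_\mu E}$. By definition of the swap, this invariant subspace is exactly $\Hom_{\frakS_d}([\pi], S^2[\mu]) = sK^{\mu,\mu}_\pi$, whose dimension is the symmetric Kronecker coefficient $\sk(\pi,\mu)$. Summing over $\mu$ yields the claimed decomposition and the dimension formula $\sm(\pi,n) = \sum_{\mu \vdash d,\, \ell(\mu)\leq n} \sk(\pi,\mu)$.

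The main obstacle — already addressed by the diagram \eqref{eqn: restriction of tau to GLE decomp} and by Lemma~\ref{lemma: tau restricts to K's} — is verifying that the action of $\tau$ on the Kronecker multiplicity space $K_\pi^{\mu,\mu}$ is genuinely the tensor-swap $\sigma_\pi^{\mu,\mu}$ and not twisted by some sign or by the identification $\delta$. Once one notes that the $\delta$-dependent part lands on $\Id_{\bbS_\mu E}$ and is absorbed there, the identification becomes clean and the passage from $\GL(E)$-invariants to $\calS$-invariants is the plain $\aS_2$-symmetrization turning $[\mu]\otimes[\mu]$ into $S^2[\mu]$.
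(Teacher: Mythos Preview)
Your proposal is correct and follows essentially the same approach as the paper: first absorb the $\langle\omega_m\Id_V\rangle$-action via the divisibility hypothesis, then pass to $\calS_0$-invariants using the previous proposition, and finally take $\langle\tau\rangle$-invariants summand by summand using Lemma~\ref{lemma: tau restricts to K's} and the description~\eqref{eqn: restriction of tau to GLE decomp}. You are in fact slightly more explicit than the paper about why the $\delta$-dependent factor acts trivially on $\Id_{\bbS_\mu E}$ and hence why the induced action on $K_\pi^{\mu,\mu}$ is exactly the swap $\sigma_\pi^{\mu,\mu}$, with no extraneous twist.
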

\begin{proof}
The entire $\bbS_\pi V$ is invariant under the cyclic group $\langle \omega _m \rangle \subseteq \calS$, therefore the space of $\calS$-invariants in $\bbS_\pi V$ coincides with the subspace of $\langle \tau \rangle$-invariants in $[\bbS_\pi V]^{\calS_0}$.

Restricting to the space of $\GL(E)$-invariants, from Lemma \ref{lemma: tau restricts to K's}, $\tau$ acts on each summand $K^\pi_{\lambda,\lambda} \otimes \Id_{\bbS_\mu E}$ as in \eqref{eqn: restriction of tau to GLE decomp}.

We deduce $[\bbS_\pi V ] ^{\calS} = \bigoplus \left[K^{\lambda,\lambda}_\pi \right]^{\langle \tau \rangle} \otimes \Id_{\bbS_\lambda E}$, where the direct sum ranges over $\lambda \partinto{d}$ with $\ell(\lambda) \leq n$.

The space of $\langle \tau \rangle$-invariants in $K_\pi^{\lambda,\lambda}$ is the space of $\frakS_d$-equivariant maps $[\pi] \to [\lambda]\otimes [\lambda]$ that are fixed by the permutation of the two factors $[\lambda]$. The module $[\lambda] \otimes [\lambda]$ splits under the action of $\tau$ as $[\lambda] \otimes [\lambda] = S^2[\lambda] \oplus \Lambda^2 [\lambda]$. Hence
$
 K^\pi_{\lambda,\lambda} = \Hom_{\frakS_d} ([\pi], [\lambda]\otimes [\lambda]) = \Hom_{\frakS_d} ([\pi],S^2[\lambda]) \oplus \Hom_{\frakS_d} ([\pi],\Lambda^2[\lambda]).
$
The space $sK^{\lambda,\lambda}_\pi = \Hom_{\frakS_d} ([\pi], S^2[\lambda])$ is the invariant subspace under the action of $\tau$ and by definition its dimension is $ sk(\pi,\mu)$.
\end{proof}

Theorem \ref{thm:sm} follows from Theorem \ref{thm:mult} via Peter-Weyl Theorem, as explained in Section \ref{section: introduction}.

\section{Proof of the stabilizer Theorem~\protect\ref{thm: stabilizer of Pow}}\label{sec:stab}

In this section $m,n \geq 3$. Fix a basis $e_1 \vvirg e_n$ of $E$ and its dual basis $\eta^1 \vvirg \eta^n$. Write $x^i_j = \eta^i \otimes e_j \in E^* \otimes E = V$. The expression of $\Pow^m_n$ in coordinates is 
\[
 \Pow^m_n = \sum_{i_1 \vvirg i_m} x^{i_1}_{i_2} x^{i_2}_{i_3}\cdots x^{i_m}_{i_1} \in S^m V.
\]
Write $\xi^j_i$ for the dual basis of $x^i_j$: we can identify $\xi^j_i$ with the differential operator $\frac{\partial}{\partial x^i_j}$.

If $G$ is a group and $H$ is a subgroup, we denote by $N_G(H) = \{g \in G : g Hg^{-1} = H \}$ and $C_G(H) = \{ g \in G : \forall h \in H \ g h g^{-1} = h \}$, respectively, the normalizer and the centralizer of $H$ in $G$. For a group $G$, let $\Aut(G)$ denote the group of automorphisms of $G$. There is a natural group homomorphism $G \to \Aut(G)$, given by $h \mapsto (\phi_h :  g \mapsto h gh^{-1})$; the kernel of this homomorphism is $Z(G)$, the center of $G$; the image of this homomorphism is denoted by $\Inn (G)$, the group of inner automorphisms of $G$. $\Inn(G)$ is a normal subgroup of $\Aut(G)$: let $\Out(G) = \Aut(G) / \Inn (G)$ be the quotient group, called the group of outer automorphisms of $G$.

The stabilizer $\calS$ of $\Pow^m_n$ inherits the Zariski topology of the space $\End(V)$; let $\calS_0$ denote the connected component of the identity in $\calS$.

In this section we prove Thm.~\ref{thm: stabilizer of Pow}. First, we state the following standard fact:

\begin{lemma}[e.g. \cite{Ges:Geometry_of_IMM}, Lemma 2.1]\label{lemma: discrete part normalizes}
 Let $f \in S^d W$ be a polynomial and let $G$ be a connected Lie group acting on $W$. Let $G_f$ be the stabilizer of $f$ in $G$ and let $G^0_f$ be the connected component of the identity in $G_f$. Then $G_f \subseteq N_G (G_f^0)$.
\end{lemma}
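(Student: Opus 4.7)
The plan is to observe that this is really an instance of a very general topological-group fact: in any topological group $H$, the connected component of the identity $H^0$ is automatically a normal subgroup of $H$. The hypothesis that $G$ is a Lie group acting on a polynomial $f \in S^d W$ is essentially decoration; all we will actually use is that $G_f$ is a closed (hence topological) subgroup of a Lie group, so that it inherits a topology in which multiplication and inversion are continuous and so that $G_f^0$ makes sense as a connected topological-group-theoretic object.

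Concretely, I would proceed as follows. Fix $g \in G_f$ and consider the conjugation map
\[
c_g \colon G_f \to G_f, \qquad x \mapsto g x g^{-1}.
\]
Since multiplication and inversion in $G$ are continuous and $G_f$ is a subgroup, $c_g$ is a continuous self-map of $G_f$ sending the identity to the identity. Therefore the image $c_g(G_f^0)$ is a connected subset of $G_f$ containing $\id$. By definition, $G_f^0$ is the largest connected subset of $G_f$ containing the identity, so $c_g(G_f^0) \subseteq G_f^0$, i.e.\ $g G_f^0 g^{-1} \subseteq G_f^0$. Applied to both $g$ and $g^{-1}$, this gives $g G_f^0 g^{-1} = G_f^0$, so $g \in N_G(G_f^0)$. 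As $g \in G_f$ was arbitrary, $G_f \subseteq N_G(G_f^0)$.

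There is no real obstacle here; the only thing one should be slightly careful about is to justify that $G_f$ is a Lie subgroup of $G$ (hence locally path-connected, so that $G_f^0$ is both the connected component and the path-component of the identity, and is a closed normal Lie subgroup). This follows because $G_f$ is the zero-set, inside $G$, of the polynomial system $g \cdot f - f = 0$, and in characteristic zero such a closed algebraic condition cuts out a closed (algebraic) subgroup of the Lie group $G$. From that point on the argument above is purely formal and uses nothing about $f$ or $d$.
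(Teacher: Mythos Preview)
Your proof is correct: the identity component of any topological group is automatically normal in that group, and this is exactly the standard argument you give via conjugation being a continuous self-map sending the identity to itself. The paper itself does not actually prove this lemma; it simply states it as a ``standard fact'' with a citation to \cite{Ges:Geometry_of_IMM}, so there is no paper-proof to compare against. Your last paragraph about $G_f$ being a closed Lie subgroup is more than you need---the continuity argument already works for any subgroup of a topological group with the subspace topology---but it does no harm.
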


Applying Lemma \ref{lemma: discrete part normalizes} to $f = \Pow^m_n$ (in the setting of the lemma we have $G = \GL(V)$, $W=V$, $G_f = \calS$ and $G_f^0 = \calS_0$), we deduce
\begin{equation}\label{eqn: S0 in S in NS0}
\calS_0 \subseteq \calS \subseteq N_{\GL(V)} (\calS_0).
\end{equation}

The outline of the proof is as follows: first we will determine the connected subgroup $\calS_0$ of $\calS$; the second step is determining $N_{\GL(V)} (\calS_0)$ that will be obtained by studying the action of $N_{\GL(V)} (\calS_0)$ on $\calS_0$ via conjugation; finally we will determine $\calS$ exploiting its action on $\calS_0$ via conjugation.

The following observation is important to determine the connected subgroup $\calS_0$.

\begin{observ}\label{obs: identity component from annihilator}
Let $f \in S^dW$. Let $G$ be a connected Lie group acting on $W$ and let $\frakg$ be the Lie algebra of $G$; let $\mathfrak{ann}_\frakg(f) := \{ L \in \frakg : L . f = 0\}$. Then $\mathfrak{ann}_\frakg(f)$ is the Lie algebra of $G_f^0$ and $G_f^0$ is the unique connected subgroup of $G$ with this property. See \cite[Sec. 1.2]{Pro:LieGroups} for details.
\end{observ}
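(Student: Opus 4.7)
The plan is to identify $\mathfrak{ann}_\frakg(f)$ with the Lie algebra of the stabilizer $G_f$ and then invoke the standard correspondence between connected Lie subgroups and Lie subalgebras.

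First, I would verify that $G_f = \{g \in G : g \cdot f = f\}$ is a closed subgroup of $G$: the orbit map $G \to S^dW$, $g \mapsto g \cdot f$ is continuous (indeed algebraic), so $G_f$ is the preimage of the point $\{f\}$, hence closed. By Cartan's theorem, $G_f$ is a Lie subgroup of $G$ with a well-defined Lie algebra $\mathrm{Lie}(G_f) \subseteq \frakg$, and by definition the identity component $G_f^0$ is a connected Lie subgroup sharing this same Lie algebra.

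Next, I would show $\mathrm{Lie}(G_f) = \mathfrak{ann}_\frakg(f)$ by a standard exponential-derivative argument. If $L \in \mathrm{Lie}(G_f)$, then the one-parameter subgroup $\exp(tL)$ lies in $G_f$ for all $t \in \mathbb{R}$, so $\exp(tL) \cdot f = f$; differentiating at $t=0$ yields $L \cdot f = 0$, hence $L \in \mathfrak{ann}_\frakg(f)$. Conversely, if $L \in \mathfrak{ann}_\frakg(f)$, consider the smooth curve $\gamma(t) := \exp(tL) \cdot f$ in $S^dW$. Since the action is linear on $S^dW$ (induced from the $G$-action on $W$), the derivative satisfies $\gamma'(t) = L \cdot \gamma(t)$ where $L$ acts by the induced derivation on $S^dW$. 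The condition $L \cdot f = 0$ together with $\gamma(0) = f$ means $\gamma \equiv f$ solves this linear ODE with the given initial datum, so $\exp(tL) \cdot f = f$ for all $t$, showing $\exp(tL) \in G_f$ and thus $L \in \mathrm{Lie}(G_f)$.

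Finally, for uniqueness, I would cite the standard fact from Lie theory (see e.g.\ \cite[Ch.~2, Thm.~8.2]{Pro:LieGroups} or the Lie correspondence between connected subgroups and subalgebras): given a Lie group $G$, any Lie subalgebra $\frakh \subseteq \frakg$ integrates to a unique connected immersed Lie subgroup with Lie algebra $\frakh$. Applying this with $\frakh = \mathfrak{ann}_\frakg(f)$ identifies $G_f^0$ as the unique such connected subgroup. The only subtle step is the converse direction $L \cdot f = 0 \Rightarrow \exp(tL) \cdot f = f$, which is straightforward here because the $G$-representation on $S^dW$ is linear (algebraic), so the analytic ODE argument above goes through without difficulty; this is the reason the observation holds with no further hypotheses on $f$.
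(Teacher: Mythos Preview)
Your argument is correct: the stabilizer is closed, the two inclusions for $\mathrm{Lie}(G_f)=\mathfrak{ann}_\frakg(f)$ follow from the one-parameter subgroup/ODE argument you give (which works precisely because the induced action on $S^dW$ is linear), and uniqueness is the standard Lie correspondence. The paper does not supply its own proof of this observation; it simply records the statement and points to \cite[Sec.~1.2]{Pro:LieGroups}, so your write-up is strictly more detailed than what the paper offers.
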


The subgroup $\calS_0$ will be given by the image of the adjoint representation of $\GL(E)$, that is the homomorphism $ad: \GL(E) \to \GL(E^* \otimes E)$, defined by
\begin{align*}
 ad(g) : E^* \otimes E &\to E^* \otimes E \\
 \eta \otimes e &\mapsto g^{-T}(\eta) \otimes g(e).
\end{align*}
The kernel of $ad$ is the center of $\GL(E)$ and its image is denote by $\PGL(E) \subseteq \GL(E^* \otimes E)$. 

\begin{proposition}
The subgroup $\PGL(E) \subseteq \GL(V)$ coincides with $\calS_0$.
\end{proposition}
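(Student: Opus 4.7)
The plan is to establish $\PGL(E) = \calS_0$ by first checking the easy containment and then matching dimensions via the Lie algebra. The adjoint representation $\mathrm{ad} : \GL(E) \to \GL(V)$ has image $\PGL(E)$ and kernel equal to the central torus $\IC^\times \cdot \Id_E$, so $\PGL(E)$ is a closed, connected, irreducible subgroup of $\GL(V)$ of dimension $n^2 - 1$. It is contained in $\calS$ because $\tr((gXg^{-1})^m) = \tr(gX^m g^{-1}) = \tr(X^m)$ by cyclicity of the trace, hence in $\calS_0$ by connectedness. Since $\PGL(E)$ is closed and irreducible of dimension $n^2 - 1$, equality with the irreducible variety $\calS_0$ follows as soon as $\dim \calS_0 \leq n^2 - 1$.

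By Observation~\ref{obs: identity component from annihilator} the Lie algebra $\mathrm{Lie}(\calS_0)$ equals $\mathfrak{ann}_{\mathfrak{gl}(V)}(\Pow^m_n)$, so it is enough to show $\mathfrak{ann}_{\mathfrak{gl}(V)}(\Pow^m_n) = \mathrm{ad}(\mathfrak{gl}(E))$. The inclusion $\supseteq$ is the differential of the group inclusion: for $g \in \mathfrak{gl}(E)$ one computes $(\mathrm{ad}(g)\cdot \Pow^m_n)(X) = m\,\tr([g,X]X^{m-1}) = 0$ by cyclicity. For the reverse inclusion I would write the derivation action of a general $A \in \mathfrak{gl}(V) = \End(\End(E))$ on $\Pow^m_n$ explicitly. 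After identifying $V^* \simeq \End(E)$ via the trace form, a direct computation in the basis $x^i_j$, together with the formula $\partial \Pow^m_n / \partial x^i_j = m\,(X^{m-1})_{ji}$, yields
\[
(A \cdot \Pow^m_n)(X) \;=\; m\,\tr\bigl(A(X)\cdot X^{m-1}\bigr),
\]
so the annihilator condition becomes the polynomial identity $\tr(A(X)\,X^{m-1}) = 0$ for all $X \in \End(E)$.

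To extract linear conditions on the $n^4$ coefficients of $A$, I would polarize: substituting $X \mapsto X + tY$ and reading the coefficient of $t$ gives the bilinear identity $\tr(A(Y)\,X^{m-1}) + \sum_{k=0}^{m-2} \tr(A(X)\,X^k Y X^{m-2-k}) = 0$ for all $X,Y$, and iterating produces a fully multilinear identity in $m$ matrix arguments. Specializing $X$ to a generic diagonal matrix and $Y$ to the elementary matrices $E_{ij}$ converts the condition into an explicit linear system, whose solution space I would compute to have dimension exactly $n^2 - 1$. The hypothesis $m \geq 3$ is essential here; the $m = 2$ case is instructive, since the condition then degenerates to antisymmetry of the bilinear form $(X,Y) \mapsto \tr(A(X)Y)$, whose solution space in $\End(V)$ has the much larger dimension $\binom{n^2}{2}$.

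The main obstacle is the combinatorial bookkeeping in the polarization step. A cleaner backup route is representation-theoretic: decompose $\End(E) = \mathfrak{sl}(E) \oplus \IC \cdot \Id_E$ as a $\GL(E)$-module and use this to split $\mathfrak{gl}(V) = \End(\End(E))$ into $\GL(E)$-isotypic components under the conjugation action. The map $A \mapsto A\cdot \Pow^m_n$ is $\GL(E)$-equivariant (both source and target are $\PGL(E)$-modules), so by Schur's lemma its kernel is a sum of isotypic components; one then checks irreducible by irreducible, using $m \geq 3$, that only the component corresponding to $\mathrm{ad}(\mathfrak{gl}(E)) \simeq \mathfrak{sl}(E)$ is annihilated. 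Either way, once the Lie-algebra equality is proved, the dimension argument from the first paragraph closes the proof.
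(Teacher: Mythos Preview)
Your overall strategy --- reduce to the Lie algebra and show $\mathfrak{ann}_{\mathfrak{gl}(V)}(\Pow^m_n)=\mathrm{ad}(\mathfrak{gl}(E))$ --- is exactly the paper's, and your easy inclusion and the formula $(A\cdot\Pow^m_n)(X)=m\,\tr(A(X)X^{m-1})$ are correct. The entire content of the proposition, however, is the reverse inclusion, and here your proposal remains a plan: both routes you describe are left unexecuted, and each hides more work than you indicate.

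For route A, the paper does essentially this, but not via polarization and specialization to diagonal $X$; it reads off coefficients of carefully chosen monomials in $x^k_\ell\,(X^{m-1})^i_j$ and runs a long case analysis on the index pattern $(i,j,k,\ell)$ to show every basis element $\xi^i_j\otimes x^k_\ell$ outside $\Im(Ad)$ must have coefficient zero. Your suggested specialization is risky: with $X$ diagonal, $X^{m-1}$ is diagonal, so $\tr(A(Y)X^{m-1})$ sees only the diagonal of $A(Y)$, and the remaining terms see only one entry of $A(X)$; it is not clear this yields enough independent linear conditions without further specializations you do not spell out. For route B, note that $\Im(Ad)\simeq\mathfrak{sl}(E)$ is \emph{not} the full $\mathfrak{sl}(E)$-isotypic component of $\End(V)$: already $\Hom(\mathfrak{sl}(E),\IC\,\Id)$ and $\Hom(\IC\,\Id,\mathfrak{sl}(E))$ contribute two more copies, and $\End(\mathfrak{sl}(E))$ contributes further ones for $n\geq 3$. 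So ``check irreducible by irreducible'' is not a list of yes/no tests but a computation of the kernel of a linear map on a multiplicity space of dimension $\geq 4$, which still requires the kind of explicit calculation you are trying to avoid. Either route can be made to work, but neither is carried out, and the paper's monomial argument is precisely the missing computation.
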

\begin{proof}
Let $Ad : \End(E) \to \End(V)$ be the differential of $ad$. 

We will prove that $\mathfrak{ann}_{\End(V)} (\Pow^m_n) = \Im (Ad)$. Observation \ref{obs: identity component from annihilator} and the universality of the exponential map (see e.g. \cite[Prop. 3.28]{Hall:LieGroups}) will allow us to conclude that $\calS_0 = ad(\GL(E)) = \PGL(E)$. 

If $L \in \End(E)$, via Leibniz rule, we have 
\[
Ad(L) = -L^T \otimes \Id_E + \Id_{E^*} \otimes L \in \End(E) \otimes \End(E^*) \simeq \End(V). 
\]

It is useful to determine this image in terms of the basis $x^i_j$ and its dual basis.

 The identification $\End(V) \simeq \End(E^*) \otimes \End(E)$ is made explicit via the reordering isomorphism, as follows:
\begin{align*}
 \End(V) &= V^* \otimes V = (E^* \otimes E)^* \otimes (E^* \otimes E) \simeq \\
 &\simeq E \otimes E^* \otimes E^* \otimes E \underset{swap \ E^*}{\simeq} (E \otimes E^*) \otimes (E^* \otimes E) = \End(E^*) \otimes \End(E).
\end{align*}

Therefore, if $\eta^k \otimes e_j \in E^*\otimes E = \End(E)$, we have
\[
 Ad(\eta^k \otimes e_j) = - e_j \otimes \eta^k \otimes \left(\textsum \eta^i \otimes e_i\right) + \left( \textsum e_i \otimes \eta^i \right) \otimes \eta^k \otimes e_j
\]
as an element of $\End(E^*) \otimes \End(E)$; under the reordering isomorphism we obtain
\begin{align*}
 Ad(\eta^k \otimes e_j) &= \textsum_i (e_i \otimes \eta^k \otimes \eta^i \otimes e_j - e_j \otimes \eta^i \otimes \eta^k \otimes e_i) = \\
&= \textsum_i ( \xi^k_i \otimes x^i_j - \xi^i_j \otimes x^k_i).
\end{align*}

The image of $\Pow^m_n$ under the action of basis element $L = \xi^i_j \otimes x^k_\ell$ is given by
\[
L \cdot \Pow^m_n =  x^k_\ell \cdot \frac{\partial}{\partial x^j_i} \Pow^m_n = m \ x^k_\ell \ (X^{m-1})^i_j  .
\]

We will exploit the form of the monomials in this expression: in general, we have
\[
 (X^p)^i_j = \sum_{i_1 \vvirg i_{p-1}}x^{i}_{i_1} x^{i_1}_{i_2} \cdots x^{i_{p-2}}_{i_{p-1}}x^{i_{p-1}}_j .
\]
We will use the following properties of the monomials occurring in $(X^p)^i_j$:
\begin{enumerate}[(i)]
 \item if $k \notin \{i,j\}$, then $k$ appears as upper index the same number of times that it appears as lower index;
 \item if $k \notin \{ i,j \}$ appears as upper index in one variable, then there is at least another variable (possibly equal) where it appears as lower index (and viceversa);
 \item if $i\neq j$, the index $i$ appears as upper index one time more than the number of times it appears as lower index;
 \item if $i\neq j$, the index $j$ appears as lower index one time more than the number of times it appears as upper index;
 \item if $i$ has only one occurrence as upper index, and $j$ has only one occurrence as lower index, then the variable $x^i_j$ does not appear in the monomial. \end{enumerate}

It is easy to show that if $L \in \Im (Ad)$ then $L \cdot \Pow^m_n = 0$. Now, let $L$ be an element in the annihilator of $\Pow^m_n$.

Consider $4$ indices $(i,j,k,\ell)$. We may assume without loss of generality $i,j,k,\ell \in \{1,2,3,4\}$. In all the following cases, we argue that the coefficient of $\xi^i_j \otimes x^k_\ell$ in $L$ has to be $0$; these short technical proofs are based on the fact that the monomials that we consider in $(\xi^i_j \otimes x^k_\ell) \cdot \Pow^m_n$ can only be generated by the basis element $\xi^i_j \otimes x^k_\ell$:

\begin{itemize}
 \dotitem $(i,j,k,\ell) = (1,2,3,4)$. Notice that $(\xi^1_2 \otimes x^3_4) \cdot \Pow^m_n = x^3_4 (X^{m-1})^1_2$ contains the monomial $x^3_4 (x^1_1)^{m-2} x^1_2$. Suppose $x^3_4 (x^1_1)^{m-2} x^1_2$ occurs in $(\xi^\alpha_\beta \otimes x^\gamma_\delta) \cdot \Pow^m_n$. The possibilities for the pair $(\gamma,\delta)$ are $(1,1),(1,2)$ or $(3,4)$. If $(\gamma,\delta) = (1,1)$ then $x^3_4 (x^1_1)^{m-3} x^1_2 = (X^{m-1})^\alpha_\beta$ for some $\alpha,\beta$, but this provides a contradiction with property (i) above, since there are two lower indices ($4$ and $2$) not having the same occurrences as lower and upper index; if $(\gamma,\delta) = (1,2)$ then $x^3_4 (x^1_1)^{m-2} = (X^{m-1})^\alpha_\beta$, for some $(\alpha,\beta)$, providing a contradiction with property (v), since $3$ and $4$ are the only indices not having matching indices; finally if $(\gamma,\delta) = (3,4)$ then $(x^1_1)^{m-2} x^1_2$ occurs in $(X^{m-1})^\alpha_\beta$ and this is possible only if $(\alpha,\beta) = (1,2)$, providing $\xi^\alpha_\beta \otimes x^\gamma_\delta = \xi^1_2 \otimes x^3_4$. The same argument applies to every case where $i,j,k,\ell$ are distinct.
 
\dotitem $(i,j,k,\ell) = (1,1,3,4)$. Notice that $(\xi^1_1 \otimes x^3_4) \cdot \Pow^m_n = x^3_4 (X^{m-1})^1_1$ contains the monomial $ x^3_4 (x^1_1)^{m-1}$. Suppose $(x^1_1)^{m-1} x^3_4$ occurs in $(\xi^\alpha_\beta \otimes x^\gamma_\delta) \cdot \Pow^m_n$. The possibilities for the pair $(\gamma,\delta)$ are $(1,1)$ or $(3,4)$. If $(\gamma,\delta) = (1,1)$, we obtain a contradiction similarly to the second case in the previous part; if $(\gamma,\delta) =(3,4)$, we obtain $(\alpha,\beta) = (1,1)$ namely $ \xi^\alpha_\beta \otimes x^\gamma_\delta = \xi^1_1 \otimes x^3_4$. The same argument applies to every case where $i = j$ and $i,k,\ell$ are distinct. 

\dotitem $(i,j,k,\ell) = (1,2,3,3)$. Notice that $(\xi^1_2 \otimes x^3_3) \cdot \Pow^m_n = x^3_3 (X^{m-1})^1_2$ contains the monomial $x^3_3(x^1_1)^{m-2}x^1_2$. Suppose $x^3_3(x^1_1)^{m-2}x^1_2$ occurs in $(\xi^\alpha_\beta \otimes x^\gamma_\delta) \cdot \Pow^m_n$. The possibilities for $(\gamma,\delta)$ are $(1,1),(1,2)$ and $(3,3)$. If $(\gamma,\delta) = (1,1)$ then $x^3_3(x^1_1)^{m-3}x^1_2 = (X^{m-1})^\alpha_\beta$: since the lower index $2$ does not occur as upper index we have $\beta = 2$ and since $1$ occurs as upper index once more than as lower index we have $\alpha = 1$, but this provides a contradiction with properties (i) and (ii) above, since the index $3$ only occurs in $x^3_3$. If $(\gamma,\delta) = (1,2)$ then $x^3_3(x^1_1)^{m-3}$ occurs in $(X^{m-1})^\alpha_\beta$: if $(\alpha,\beta) = (1,1)$, we obtain a contradiction with (i) and (ii) as in the previous part; if $(\alpha,\beta) = (3,3)$, we obtain a contradiction with property (v). Finally, if $(\gamma,\delta) = (3,3)$, then $(x^1_1)^{m-2}x^1_2 = (X^{m-1})^\alpha_\beta$ and we obtain $(\alpha,\beta) = (1,2)$, namely $\xi^\alpha_\beta \otimes x^\gamma_\delta = \xi^1_2 \otimes x^3_3$. The same argument applies to every case where $i,j,k$ are distinct and $k = \ell$.

\dotitem $(i,j,k,\ell) = (1,1,2,2)$. This case can be solved similarly to the previous one.

\dotitem $(i,j,k,\ell) = (1,2,1,3)$. Notice that $(\xi^1_2 \otimes x^1_3) \cdot \Pow^m_n = x^1_3 (X^{m-1})^1_2$ contains the monomial $x^1_3 x^1_2( x^2_2)^{m-2}$. Suppose $x^1_3 x^1_2( x^2_2)^{m-2}$ occurs in $(\xi^\alpha_\beta \otimes x^\gamma_\delta) \cdot \Pow^m_n$. The possibilities for $(\gamma,\delta)$ are $(1,2),(2,2)$ and $(1,3)$. If $(\gamma,\delta) = (1,2)$ then $x^1_3 (x^2_2)^{m-2}$ occurs in $(X^{m-1})^\alpha_\beta$, but this easily provides a contradiction. A similar contradiction is obtained if $(\gamma,\delta) = (2,2)$. Therefore $(\gamma,\delta)= (1,3)$ and $(\alpha,\beta) = (1,2)$, namely $\xi^\alpha_\beta \otimes x^\gamma_\delta = \xi^1_2 \otimes x^1_3$.

\dotitem $(i,j,k,\ell) = (1,3,2,3)$. This case can be solved similarly to the previous one.

\dotitem $(i,j,k,\ell) = (1,2,1,2)$. This case can be solved similarly to the previous one.

\dotitem $(i,j,k,\ell) = (1,1,1,1)$. Notice that $(\xi^1_1 \otimes x^1_1) \cdot \Pow^m_n = x^1_1 (X^{m-1})^1_1$ contains the monomial $(x^1_1)^m$. It is clear that this can be obtained only from the element $\xi^1_1 \otimes x^1_1$.
\end{itemize}

After this analysis, we observe that the only basis elements of $\End(V)$ that can have non-zero coefficient in $L$ are $\xi^i_j \otimes x^k_i$ and $\xi^i_j \otimes x^j_\ell$.

Now, suppose $\xi^1_2 \otimes x^2_3$ appears in $L$ (and up to rescaling suppose its coefficient is $1$). We have $(\xi^1_2 \otimes x^2_3) \cdot \Pow^m_n = x^2_3 (X^{m-1})^1_2$, that contains, for instance, the monomials $x^2_3 x^1_\ell  x^\ell_1 (x^1_1)^{m-4} x^1_2$ if $m \geq 4$ and $x^2_3 x^1_\ell  x^\ell_2$ if $m = 3$. An argument similar to the ones used above shows that, for every $\ell$, the monomial $x^2_3 x^1_\ell  x^\ell_1 (x^1_1)^{m-4} x^1_2$ can only appear in $(\xi^1_2 \otimes x^2_3 )\cdot \Pow^m_n$ and in $(\xi^\ell_3 \otimes x^1_\ell) \cdot \Pow^m_n$. Therefore, if the basis element $\xi^1_2 \otimes x^2_3$ appears in $L$ with coefficient $1$, then, for every $\ell$, the basis element $\xi^\ell_3 \otimes x^1_\ell$ appears in $L$ with coefficient $-1$. In particular, for $\ell = 2$, $\xi^2_3 \otimes x^1_2$ appears in $L$ with coefficient $-1$. But an argument similar to the one we just used shows that if $\xi^2_3 \otimes x^1_2$ appears in $L$ with coefficient $-1$ then $\xi^1_\ell \otimes x^\ell_3$ appears in $L$ with coefficient $1$.

We just saw that, if $\xi^1_2 \otimes x^2_3$ appears in $L$, then every term of $Ad( \eta^1 \otimes e_3)$ appears in $L$: this shows that, if $L$ is generated by basis elements of the form $\xi^i _j \otimes x^j_k$ with $i \neq k$, then $L$ is contained in the image of $Ad$.

Finally, suppose $\xi^1_2 \otimes x^2_1$ appears in $L$. In $(\xi^1_2 \otimes x^2_1) \cdot \Pow^m_n$, we obtain monomials of the form $x^2_1 x^1_\ell x^\ell_1 (x^1_1)^{m-4} x^1_2$ (or $x^2_1 x^1_\ell x^\ell_2$ if $m = 3$). We observe that the only other basis elements that can generate this monomial are $\xi^1_\ell \otimes x^\ell_2$ and $\xi^\ell_1 \otimes x^1_\ell$. In the first case, we already saw that $L$ has to contain a term generated by elements in the image of $Ad$. In the second case, we can repeat the argument as we did above, and we observe that $L$ contains all the terms in $Ad(\eta^1 \otimes e_1)$.

This concludes the proof that $\Im (Ad) = \mathfrak{ann}_{\End(V)} (\Pow^m_n)$ and so the proof of the Proposition.
\end{proof}

Recall from \eqref{eqn: S0 in S in NS0} that $\calS_0 \subseteq \calS \subseteq N_{\GL(V)}(\calS_0)$. The next step toward the proof of Theorem \ref{thm: stabilizer of Pow} is to determine $N_{\GL(V)}(\calS_0)$.

We will prove that, as an abstract group,
\begin{equation}\label{eqn: abstract NS0}
 N_{\GL(V)}(\calS_0) \simeq (\PGL(E) \times \bbC^{* \times 2} ) \rtimes \langle \tau \rangle,
\end{equation}
where, $\PGL(E) = \calS_0$, $\bbC^{*\times 2}$ is the centralizer $C_{\GL(V)}(\calS_0)$ and $\tau$ is an element of order $2$ acting on $\calS_0$ as in the statement of Theorem \ref{thm: stabilizer of Pow} and on $\bbC^{*\times 2}$ via $(c_1,c_2) \mapsto (c_1^{-1},c_2^{-1})$.

In order to determine the factors of $ N_{\GL(V)}(\calS_0) $, the following general observation will be useful.

\begin{observ}
If $H \subseteq G$ is a subgroup, then $N_G(H)$ acts on $H$ via conjugation, namely there is a group homomorphism 
\begin{align*}
 N_{G}(H) &\to \Aut(H) \\
 g &\mapsto (\phi_g : h \mapsto g h g^{-1}).
\end{align*}

The kernel of this homomorphism is the centralizer $C_G(H)$. The product subgroup $HC_G(H)$ is the kernel of the composition
\[
 N_{G}(H) \to \Aut(H) \to \Out(H)
\]
where the second map is the projection modulo $\Inn(G)$. In particular $N_G(H) / (HC_G(H))$ is (isomorphic to) a subgroup of $\Out(H)$.
\end{observ}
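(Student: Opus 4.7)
The plan is to verify each assertion in the observation by unwinding definitions and concluding with the first isomorphism theorem. First I would check that conjugation gives a well-defined homomorphism $N_G(H)\to\Aut(H)$: if $g\in N_G(H)$, the inner automorphism $\phi_g\colon x\mapsto gxg^{-1}$ of $G$ restricts to a bijection $H\to H$ because $gHg^{-1}=H$ by definition of the normalizer, and the assignment $g\mapsto\phi_g|_H$ is multiplicative because $\phi_{gg'}=\phi_g\circ\phi_{g'}$.

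The two kernel computations then reduce to translating the conditions into elementwise statements about $g$. For the first, $\phi_g|_H=\id_H$ says precisely that $g$ commutes with every $h\in H$, which is the defining property of $C_G(H)$. For the composed map to $\Out(H)=\Aut(H)/\Inn(H)$, an element $g$ lies in the kernel iff $\phi_g|_H$ is inner on $H$, iff there exists $h\in H$ with $gxg^{-1}=hxh^{-1}$ for all $x\in H$, iff $h^{-1}g$ centralizes $H$, iff $g\in h\cdot C_G(H)\subseteq H\cdot C_G(H)$; the reverse inclusion is immediate since any $hc\in HC_G(H)$ yields $\phi_{hc}|_H=\phi_h|_H\in\Inn(H)$. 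The final claim is then a direct application of the first isomorphism theorem.

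The only mild subtlety worth flagging is that $HC_G(H)$ is automatically a subgroup of $N_G(H)$: every element of $C_G(H)$ commutes with every element of $H$, so $H$ normalizes $C_G(H)$ and the set product $HC_G(H)=C_G(H)H$ is closed under multiplication and inversion. There is no substantive obstacle here; this observation is a general-purpose packaging of definitions, being set up to analyze $N_{\GL(V)}(\calS_0)$ via its conjugation action on $\calS_0=\PGL(E)$. The actual work in the subsequent argument will lie in identifying $C_{\GL(V)}(\PGL(E))$ concretely (which by Schur-type arguments on the adjoint representation should yield the factor $\bbC^{*\times 2}$) and in showing that the induced image in $\Out(\PGL(E))$ is exactly the order-two subgroup generated by the transpose-inverse involution $\tau$.
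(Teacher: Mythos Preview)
Your proof is correct and is the standard verification; the paper itself states this observation without proof, treating it as a routine unwinding of definitions, so there is nothing to compare your argument against. One minor note: the paper writes ``projection modulo $\Inn(G)$,'' which is evidently a typo for $\Inn(H)$, and you have (correctly) read it that way.
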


This allows us to determine $N_{\GL(V)}(\calS_0)$ by determining first its centralizer and then realizing the outer automorphisms of $\calS_0$ via conjugation by an element of $\GL(V)$.

The next result characterizes the centralizer $C_{\GL(V)}(\calS_0)$:

\begin{lemma}
 The centralizer $C_{\GL(V)}(\calS_0)$ is isomorphic, as an abstract group, to $(\bbC ^*)^{\times 2}$.
\end{lemma}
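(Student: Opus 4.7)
My plan is to reduce the claim to a direct application of Schur's lemma, exploiting the fact that $\calS_0 = \PGL(E)$ acts on $V = E^* \otimes E = \End(E)$ precisely as the adjoint representation of $\GL(E)$. The entire computation of $C_{\GL(V)}(\calS_0)$ is then governed by the isotypic decomposition of this adjoint representation.

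First I would decompose $V$ into $\PGL(E)$-irreducibles. Since $n = \dim E \geq 3$, the adjoint representation splits as
\[
V \;=\; \fraksl(E) \,\oplus\, \IC \cdot \Id_E,
\]
with $\fraksl(E) \subset \End(E)$ the trace-zero subspace and $\IC \cdot \Id_E$ the scalar line. It is classical that $\fraksl(E)$ is irreducible under $\GL(E)$ (indeed already under $\SL(E)$), while $\IC \cdot \Id_E$ is the trivial one-dimensional representation; in particular, the two summands are non-isomorphic as $\PGL(E)$-modules.

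Next I would invoke Schur's lemma. Any element $\phi \in \End(V)$ commuting with the full image $\PGL(E) \subset \GL(V)$ of the adjoint representation must preserve this isotypic decomposition and act by a scalar on each irreducible constituent. Hence
\[
C_{\End(V)}(\calS_0) \;=\; \{\, c_1 \cdot \Id_{\fraksl(E)} \oplus c_2 \cdot \Id_{\IC\,\Id_E} \,:\, c_1, c_2 \in \IC \,\},
\]
and intersecting with $\GL(V)$ (invertibility) forces $c_1, c_2 \in \IC^*$, which yields the desired isomorphism $C_{\GL(V)}(\calS_0) \simeq (\IC^*)^{\times 2}$.

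I do not expect any genuine obstacle: the only nontrivial input is the irreducibility of the adjoint representation $\fraksl(E)$, which is standard for $n \geq 2$. This same two-component description will also be useful later for making the action of $\tau$ on $C_{\GL(V)}(\calS_0)$ explicit (it should simply swap or rescale the two scalars appropriately), so the decomposition produced here is worth recording in a form that remembers the two projectors, not just the abstract group.
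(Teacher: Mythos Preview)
Your proposal is correct and follows exactly the paper's approach: decompose $V=\End(E)$ under $\calS_0=\PGL(E)$ as $\IC\,\Id_E \oplus \fraksl(E)$ and apply Schur's lemma to conclude that any $\calS_0$-equivariant invertible endomorphism is a pair of nonzero scalars. Your explicit remark that the two summands are non-isomorphic (ruling out off-diagonal intertwiners) is a point the paper uses implicitly.
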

\begin{proof}
Every element $s \in \calS_0 \subseteq \GL(V)$ is a linear map $V \to V$. The space $V = E^* \otimes E \simeq \End(E)$ splits under the action of $\calS_0$ as $V = \bbC \Id_E  \oplus \fraksl(E)$, where $\fraksl(E)$ is the subspace of traceless endomorphisms in $\End(E)$. The fact that $g \in C_{\GL(V)}(\calS_0)$ is equivalent to the fact that $g : V \to V$ is $\calS_0$-equivariant. By Schur's Lemma, $g$ acts by non-zero scalars on the irreducible components of $V$ under the action of $\calS_0$: we conclude $C_{\GL(V)}(\calS_0) = \bbC^* \Id_{\bbC \Id_E } \times \bbC^* \Id_{\fraksl(E)}$.
 \end{proof}

Since $\calS_0 \simeq \PGL(E)$ has trivial center, we have $\calS_0 \cap C_{\GL(V)}(\calS_0) = \{ id_{V}\}$, so $\calS_0 \times C_{\GL(V)}(\calS_0) \subseteq N_{\GL(V)}(\calS_0)$.

Moreover, it is known that, if $n \geq 3$, then $\Out(\calS_0) \simeq \bbZ_2$ and an outer automorphism can be realized as follows. Consider the automorphism of $\SL(E)$ defined as follows:
\begin{align*}
\tilde{ \tau}_0 : \SL(E) &\to \SL(E) \\
      g &\mapsto \delta^{-1}\circ g^{-T}\circ \delta,
\end{align*}
where $\delta: E^* \xto{\sim} E$ is the isomorphism that identifies $\eta^i \mapsto e_i$. It is easy to observe that $\tilde{\tau}_0$ is an isomorphism. If we fix coordinates and we identify $\SL(E)$ with the group of $n\times n$ matrices whose determinant is $1$, then $\tilde{\tau}_0 : A \mapsto A^{-T}$. In particular, it maps the center of $\SL(E)$ to itself and therefore it descends to the quotient, defining an isomorphism
\[
\tau_0 : \PGL(E) \to \PGL(E) 
\]
It turns out that $\tau_0$ is an outer automorphism and that it is unique up to conjugation by an inner automorphism (corresponding to the choice of the identification $\delta$). See \cite[Ch. 3]{Dieud:Geom_Groupes_Classiques} for details.

Now, we can characterize $ N_{\GL(V)} (\calS_0)$.
\begin{proposition}
The normalizer $ N_{\GL(V)} (\calS_0)$ is 
\[
(\calS_0 \times \bbC^{*\times 2} ) \rtimes \langle \tau \rangle.
\]
An element $(c_1,c_2) \in \bbC^{* \times 2}$ acts as $c_1 \Id _{\langle \Id_E \rangle} \times c_2 \Id_{\Id_{\fraksl(E)}}$ and $\tau$ acts via $\tau : \eta \otimes e \mapsto \delta^{-1}(e) \otimes \delta(\eta)$.
\end{proposition}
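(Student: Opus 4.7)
The plan is to identify $N := N_{\GL(V)}(\calS_0)$ by studying its conjugation action on $\calS_0$. Write $C := C_{\GL(V)}(\calS_0)$; by the preceding lemma $C \cong (\bbC^*)^{\times 2}$, and since $\calS_0 \cong \PGL(E)$ has trivial center, $\calS_0 \cap C = \{\id_V\}$, so $\calS_0 \cdot C$ is an internal direct product $\calS_0 \times C$ sitting inside $N$. By the general observation recalled just above, the conjugation homomorphism $N \to \Aut(\calS_0)$ has kernel $C$ and factors through an injection $N/(\calS_0 \cdot C) \hookrightarrow \Out(\calS_0)$. For $n \geq 3$ the outer automorphism group of $\PGL(E)$ is $\bbZ_2$, generated by the class of $\tau_0 : [g] \mapsto [\delta \circ g^{-T} \circ \delta^{-1}]$. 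It therefore suffices to exhibit a single element of $N$ whose induced automorphism of $\calS_0$ is $\tau_0$: the index-$2$ structure will then follow.

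The natural candidate is $\tau$ itself. First, $\tau^2 = \id_V$, since $\tau^2(\eta \otimes e) = \delta^{-1}\delta(\eta) \otimes \delta\delta^{-1}(e) = \eta \otimes e$; in particular $\tau \in \GL(V)$. Next, for $g \in \GL(E)$, unwinding the definitions yields
\[
(\tau \circ ad(g) \circ \tau^{-1})(\eta \otimes e) = \tau\bigl(g^{-T}\delta^{-1}(e) \otimes g\delta(\eta)\bigr) = (\delta^{-1} g \delta)(\eta) \otimes (\delta g^{-T} \delta^{-1})(e),
\]
which equals $ad(h)(\eta \otimes e)$ for $h := \delta \circ g^{-T} \circ \delta^{-1} \in \GL(E)$. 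Hence $\tau \calS_0 \tau^{-1} = \calS_0$, so $\tau \in N$, and the induced automorphism of $\calS_0 = \PGL(E)$ is precisely $\tau_0$; in particular $\tau \notin \calS_0 \cdot C$.

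Combining the previous two paragraphs, the injection $N/(\calS_0 \cdot C) \hookrightarrow \Out(\calS_0) \cong \bbZ_2$ is surjective, hence an isomorphism, and therefore $N = (\calS_0 \times C) \cdot \langle \tau \rangle$. To identify this as a semidirect product it remains to verify (i) $\tau$ normalizes $C$ and (ii) $\langle \tau \rangle \cap (\calS_0 \times C) = \{\id_V\}$. For (i), under $V \simeq \End(E)$ the map $\tau$ is matrix transposition, which preserves the isotypic decomposition $V = \bbC \Id_E \oplus \fraksl(E)$ (fixing $\Id_E$ and stabilizing traceless matrices); since any element of $C$ acts as a scalar on each summand, conjugation by $\tau$ leaves it unchanged. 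Part (ii) follows from $\tau^2 = \id_V$ together with $\tau \notin \calS_0 \cdot C$. This yields the stated decomposition $N = (\calS_0 \times C) \rtimes \langle \tau \rangle$, with $\tau$ acting on $\calS_0$ as the outer automorphism $\tau_0$ and centralizing $C$.

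The one genuinely substantive step is the middle calculation verifying that conjugation by $\tau$ realizes the nontrivial outer automorphism of $\PGL(E)$; this is short but demands careful tracking of the identification $\delta : E^* \xto{\sim} E$ and of the distinction between the actions on $V$ and on $\End(E)$. All the remaining reductions are formal consequences of the general normalizer/centralizer machinery together with the known structure of $\Out(\PGL(E))$ for $n \geq 3$.
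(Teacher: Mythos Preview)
Your proof is correct and follows essentially the same route as the paper: verify that $\tau$ has order~$2$ and that conjugation by $\tau$ on $\calS_0$ realizes the outer automorphism $\tau_0$, then use the injection $N/(\calS_0 \cdot C) \hookrightarrow \Out(\calS_0) \cong \bbZ_2$ to conclude. Your write-up is somewhat more thorough than the paper's (which leaves the $\tau s \tau^{-1} = \tau_0(s)$ computation as ``straightforward'' and does not explicitly check that $\tau$ normalizes $C$ or that $\langle\tau\rangle \cap (\calS_0 \times C) = \{\id_V\}$), but the underlying argument is the same.
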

\begin{proof}
It is straightforward to verify that, $\tau$ is an element of $\GL(V)$ of order $2$ and if $s \in \calS_0$, then $\tau s \tau^{-1} = \tau_0 (s)$.

This proves that $(\calS_0 \times \bbC^{*\times 2} ) \rtimes \langle \tau \rangle \subseteq N_{\GL(V)} (\calS_0)$. Passing to the quotient modulo $\calS_0 \times \bbC^{*\times 2}$, we obtain $\langle \tau \rangle \subseteq \Out(\calS_0)$ and since they both have order $2$ we conclude that they are the same.
\end{proof}

In order to prove Theorem \ref{thm: stabilizer of Pow}, it only remains to determine which elements of $N_{\GL(V)}(\calS_0)$ stabilize $\Pow^m_n$. 
\begin{proof}[Proof of Theorem \ref{thm: stabilizer of Pow}]
Obviously $\calS_0 \subseteq \calS$. The map $\tau \in \GL(V)$ induces the transpose on $E^* \otimes E$ and in particular it stabilizes $\Pow^m_n$, so $\tau \in \calS$.

Finally, let $g = c_1 \cdot \Id_{\bbC \Id_E} + c_2 \cdot \Id_{\fraksl(V)} \in C_{\GL(V)}(\calS_0)$. Suppose $g$ stabilizes $\Pow^m_n$. Since $\Pow^m_n$ is not bi-homogeneous in the groups of variables $\{ x^i_i : i=1 \vvirg n\}$ and $\{ x^i_j : i \neq j\}$, we deduce that $c_1 = c_2 =: c$. This shows that $g \cdot \Pow^m_n = c^m \Pow^m_n$ and $c^m=1$ if and only if $c$ is an $m$-th root of $1$.
 \end{proof}

The polynomial $\Pow^m_n$ is not characterized by its stabilizer $\calS$. But we can characterize the subspace of polynomials that are stabilized by $\calS$.

\begin{proposition}\label{prop: space invariant under S}
  Let $f \in S^m V$. Then $f$ is stabilized by the action of $\calS$ if and only if it is a homogeneous symmetric polynomial of degree $m$ in the eigenvalues of the elements of $V^*$. The space of these polynomials has dimension $\# \{ \gamma \vdash m, \ell(\gamma)\leq n\}$ -- the number of partitions of $m$ in at most $n$ parts. When $n\geq m$ this  is asymptotically $\sim \frac{1}{4m\sqrt{3}} \exp\left( \pi \sqrt{\frac{2m}{3}} \right)$.
 \end{proposition}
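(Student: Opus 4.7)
The plan is to analyze the action of each generator of $\calS = (\PGL(E)\times \langle \omega_m \Id_V\rangle) \rtimes \langle \tau\rangle$ from Theorem~\ref{thm: stabilizer of Pow} on $S^m V$ separately. The central factor $\omega_m \Id_V$ acts on $S^m V$ as multiplication by $\omega_m^m = 1$, hence trivially, so it imposes no condition. Thus I am reduced to computing $(S^m V)^{\PGL(E)\rtimes \langle\tau\rangle}$.

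The main step is to describe $(S^m V)^{\PGL(E)}$. Viewing elements of $V^* \simeq \End(E)$ as $n\times n$ matrices, $\calS_0 = \PGL(E)$ acts by conjugation. By the classical first fundamental theorem of invariant theory for a single matrix under $\GL_n$-conjugation (see e.g.~\cite{Pro:LieGroups}), the invariant ring $\bbC[V^*]^{\PGL(E)}$ is generated by the power traces $X\mapsto \tr(X^k)$, $k\geq 1$, equivalently by the coefficients of the characteristic polynomial. So the homogeneous degree $m$ component consists precisely of symmetric polynomials of degree $m$ in the $n$ eigenvalues of $X$, yielding the characterization part of the proposition.

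Next I would check that $\tau$ acts trivially on $(S^m V)^{\PGL(E)}$. By Theorem~\ref{thm: stabilizer of Pow}, $\tau$ implements the coordinate transposition on $V = \End(E)$ via the basis identification $\delta$, and transposing a matrix preserves its eigenvalues. Since every $\PGL(E)$-invariant depends only on the conjugacy class (hence only on the eigenvalues with multiplicity), every such invariant is automatically $\tau$-invariant. Combining this with the preceding step gives $(S^m V)^\calS$ equal to the space of homogeneous symmetric polynomials of degree $m$ in the eigenvalues.

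For the dimension, a basis of the space of symmetric polynomials of degree $m$ in $n$ variables is given by the monomial symmetric functions $m_\gamma$ indexed by partitions $\gamma \vdash m$ with $\ell(\gamma)\leq n$, which is the stated count. When $n\geq m$ the restriction is vacuous and the count becomes $p(m)$, for which the asymptotic is the standard Hardy--Ramanujan formula. The only mildly delicate point is matching the adjoint action of $\calS_0$ on $V$ (built into the definition of $ad$ used in Section~\ref{sec:stab}) with the usual conjugation action on $\End(E)$, but this identification is immediate and the remainder of the argument is bookkeeping; I do not anticipate a genuine obstacle.
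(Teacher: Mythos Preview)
Your proposal is correct and follows essentially the same approach as the paper: both identify $(S^m V)^{\calS}$ with degree-$m$ symmetric polynomials in the eigenvalues by computing the $\PGL(E)$-invariants and observing that $\tau$ and $\omega_m\Id_V$ act trivially on them, then count via a standard basis of symmetric functions and invoke Hardy--Ramanujan. The only cosmetic difference is that you cite the first fundamental theorem for conjugation invariants as a black box, whereas the paper reproves it directly via density of diagonalizable matrices and conjugation to diagonal form; and for the dimension count you use the monomial basis $m_\gamma$ with $\ell(\gamma)\leq n$ while the paper uses the elementary basis $e_\alpha$ with $\alpha_1\leq n$ and then transposes.
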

\begin{proof}
After fixing coordinates, $V^*$ is identified with the space of $n \times n$ matrices, $f$ is a polynomial in matrix entries and $g \in \PGL(E) \subseteq \calS$ acts via conjugation by any element $S_g \in \GL(E)$ whose image in $\PGL(E)$ is $g$. We will prove that $f$ coincides with a symmetric function of the eigenvalus of the elements of $V^*$ on the dense subset of diagonalizable matrices. Passing to the closure we conclude.

Let $A$ be a diagonalizable matrix in $V^*$, namely there exists $S \in \GL(E)$ such that $D = S^{-1} A S$ is diagonal and its diagonal entries are the eigenvalues of $A$. In particular $f(A) = f(D)$; the eigenvalues of $D$ are the same as the eigenvalues of $A$ and $f$ is a polynomial in the entries of $D$, so $f$ is a polynomial in the eigenvalues of $A$ (and clearly it is homogeneous of degree $m$). Moreover, conjugation by a permutation matrix permutes the diagonal entries of $D$, therefore $f$ is a symmetric polynomial.

Conversely, for $A \in V^*$, denote by $\Sigma_A$ the set of the eigenvalues of $A$. Let $g \in \calS$:  we have that $\Sigma_{gA} = \omega_m' \Sigma_A$, where $\omega'_m$ is an $m$-th root of $1$. A symmetric polynomial of degree $m$ has the same value on $\Sigma_A$ and $\Sigma_{gA}$; in particular $f(A) = f(gA)$.

The space of symmetric polynomials of degree $m$ is spanned by the basis $\{ e_{\alpha} | \alpha \vdash m\}$, where $e_{\alpha} := e_{\alpha_1}e_{\alpha_2}\cdots$ and $e_k(x_1,x_2,\ldots) = \sum_{i_1 < i_2 <\cdots < i_k} x_{i_1}x_{i_2}\cdots x_{i_k}$ are the elementary symmetric polynomials, see e.g.~\cite{FulHar:RepTh}. When the number of variables is $n$ we must have $\alpha_i \leq n$, else $e_{\alpha_i}=0$, and the dimension is given by $\# \{ \alpha \vdash m| \alpha_1 \leq n\}$, via conjugation $\gamma = \alpha^t$, this is equivalent to the number of partitions $\gamma$ with $\ell(\gamma) \leq n$. If $m\leq n$, then we have $\alpha_1 \leq m \leq n$, and there is no further restriction on these partitions. The asymptotics is then  given by the classical formula of Hardy-Ramanujan for integer partitions.  
\end{proof}

\begin{observ}\label{obs:notchar}
If $t_1 \vvirg t_n$ are the eigenvalues of $A \in V^*$, then $\Pow^m_n (A) = t_1^m + \cdots +t_n^m$, that is indeed a symmetric polynomial in $t_1 \vvirg t_n$. Moreover, the argument used in the first part of the proof of Prop.~\ref{prop: space invariant under S} applies to every degree, showing that $f$ is invariant under the action of $\PGL(E)$ if and only if it is a symmetric function of the eigenvalues. In particular, the $k$-th elementary symmetric function of the eigenvalues (namely the coefficients of $t^{n-k}$ in the characteristic polynomial) is stabilized by $\tilde{\calS} = (\calS_0 \times \langle \omega_k \rangle )\rtimes \langle \tau \rangle$; in fact $\tilde{\calS}$ is the entire stabilizer \cite[Thm. 3.4]{LiPierce:Linear_preservers_problems}.
 \end{observ}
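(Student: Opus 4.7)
The plan is to address each of the three assertions in the observation in turn: (i) $\Pow^m_n(A)$ equals the $m$-th power sum in the eigenvalues of $A$; (ii) a polynomial $f$ on $V^*$ of arbitrary degree is $\PGL(E)$-invariant if and only if it is a symmetric function of the eigenvalues; and (iii) the $k$-th elementary symmetric function $e_k$ of the eigenvalues has stabilizer equal to $\tilde{\calS} = (\calS_0 \times \langle \omega_k\Id_V\rangle)\rtimes\langle\tau\rangle$.

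For (i), I would work on the Zariski-dense open subset of $V^* = \End(E)$ consisting of diagonalizable matrices. If $A = S\diag(t_1\vvirg t_n)S^{-1}$ for some $S \in \GL(E)$, then $A^m = S\diag(t_1^m\vvirg t_n^m)S^{-1}$, so $\Pow^m_n(A) = \tr(A^m) = t_1^m + \cdots + t_n^m$ by cyclicity of the trace. Both sides are polynomials in the matrix entries of $A$ and agree on a dense subset, so they agree identically.

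For (ii), I would reuse the argument from the first part of the proof of Prop.~\ref{prop: space invariant under S}, now without the hypothesis that the degree equals $m$. Let $f$ be a $\PGL(E)$-invariant homogeneous polynomial of any degree. For a diagonalizable $A = SDS^{-1}$, invariance under $S\in\calS_0$ gives $f(A)=f(D)$; permutation matrices lie in $\calS_0$ as well and act on $D$ by permuting its diagonal entries, so $f(D)$ is a symmetric polynomial in $t_1\vvirg t_n$. Density of diagonalizable matrices shows $f$ coincides, as a polynomial on $V^*$, with a symmetric polynomial in the eigenvalues. Conversely, any symmetric polynomial in $t_1\vvirg t_n$ can be written as a polynomial in the coefficients of the characteristic polynomial of $A$ (by the fundamental theorem on symmetric functions and the Newton identities), hence gives rise to a polynomial in the entries of $A$ that is constant on conjugacy classes and therefore $\PGL(E)$-invariant. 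Decomposing a general invariant into homogeneous components yields the equivalence in any degree.

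For (iii), the inclusion $\tilde{\calS}\subseteq\stab(e_k)$ is checked on the three generators: $\calS_0=\PGL(E)$-invariance of $e_k$ is (ii), since $e_k$ is symmetric in the eigenvalues; $\tau$-invariance follows because $\tau$ acts on $V^*$ as matrix transposition, and $A$ and $A^t$ share the same characteristic polynomial; finally the scalar $c\Id_V$ rescales every eigenvalue by $c$, so $e_k(cA) = c^k e_k(A)$, giving invariance exactly under $c^k=1$, i.e.\ under $\langle\omega_k\Id_V\rangle$. The reverse inclusion — that no further element of $\GL(V)$ fixes $e_k$ — is the linear-preserver statement \cite[Thm.~3.4]{LiPierce:Linear_preservers_problems}, which I would simply invoke. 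The only step that is not essentially routine is this maximality claim, and the main obstacle (were one to avoid quoting Li--Pierce) would be an annihilator-in-the-Lie-algebra computation analogous to, but distinct from, the lengthy case analysis performed in Section~\ref{sec:stab} for $\Pow^m_n$.
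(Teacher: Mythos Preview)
Your proposal is correct and follows exactly the approach the paper intends: the observation itself carries no separate proof but simply points back to the first part of the proof of Prop.~\ref{prop: space invariant under S} (your step (ii)), notes the obvious power-sum identity (your step (i)), and for the maximality of $\tilde{\calS}$ cites \cite[Thm.~3.4]{LiPierce:Linear_preservers_problems} rather than redoing a Lie-algebra annihilator computation (your step (iii)). You have merely unpacked these references in more detail than the paper does.
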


\section{Symmetric Kronecker coefficients of columns}\label{sec:skcof}
In this section we prove Theorem~\ref{thm:signaction}.

The irreducible $\aS_D$ representation of type $\la \partinto D$ has a concrete description as follows \cite[p.~110]{fult:97}, see also \cite[Sec.~4.1]{Ike:PhDthesis}.

A \emph{tableau} of shape $\la$ is a filling of the boxes of the Young diagram corresponding to $\la$ with entries $1,2,\ldots,|\la|$.
Let $\mathscr T(\la)$ denote the set of all tableaux of shape $\la$.
Then $\IC^{\mathscr{T}(\la)}$ is a finite dimensional vector space with an action of $\aS_D$.
We will quotient out a linear subspace $K(\la)$ as follows:
\begin{compactitem}
\item Given tableaux $T_1$ and $T_2$ of shape $\la$. Then $T_1 + T_2 \in K(\la)$ if $T_2$ arises from $T_1$ by switching two entries in a column.
This relation is called the \emph{Grassmann relation}.
\item Given a tableau $T$. Then $T + \sum_S S \in K(\la)$, where the sum goes over all tableaux $S$ that arise from $T$ by exchanging for some $j$ and $k$
the top $k$ elements from the $(j+1)$th column with any selection of $k$ elements in the $j$th column, preserving their vertical order.
This relation is called the \emph{Pl\"ucker relation}. Our argument will only need the Grassmann relation.
\end{compactitem}
\begin{theorem}[e.g.~\cite{fult:97}]\label{thm:specht}
For $\la \partinto D$ we have $[\la] \simeq T(\la)/K(\la)$ as $\aS_D$ representations.
\end{theorem}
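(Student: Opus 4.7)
The strategy is to construct an explicit $\aS_D$-equivariant surjection from $\IC^{\mathscr{T}(\la)}$ onto the classical Specht module $[\la]$ that factors through $K(\la)$, and then verify that the induced map on the quotient is an isomorphism by comparing dimensions. I would realize $[\la]$ in its polytabloid form inside the permutation module on $\la$-tabloids:
\[
e_T \;:=\; \sum_{\sigma \in C_T} \sgn(\sigma)\,\{T_\sigma\}, \qquad T \in \mathscr{T}(\la),
\]
where $\{T\}$ denotes the row-equivalence class of $T$ and $C_T \subseteq \aS_D$ is the column-stabilizer of $T$. Define $\phi\colon \IC^{\mathscr{T}(\la)} \to [\la]$ by $\phi(T) := e_T$, extended linearly; $\aS_D$-equivariance is immediate from $e_{gT} = g \cdot e_T$ for $g \in \aS_D$.

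The key step is to check $K(\la) \subseteq \ker \phi$. The Grassmann case is essentially free: if $T_2$ arises from $T_1$ by swapping two entries in a column, then the sign-antisymmetrization over $C_{T_1} = C_{T_2}$ flips sign, so $e_{T_1} + e_{T_2} = 0$. The Pl\"ucker case is the classical Garnir relation: the indicated sum over exchanges of the top $k$ entries of column $j+1$ with $k$-element subsets of column $j$ is precisely the coset decomposition of $C_T$ inside the larger sign-antisymmetrizer on those $\ell_j + k$ cells, and this larger antisymmetrizer annihilates $e_T$ because $k \geq 1$ extra entries cannot be placed in a column of height $\ell_j$ without forcing a repetition---a cancellation already encoded by the Grassmann step.

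Once $\phi$ descends to $\bar\phi\colon \IC^{\mathscr{T}(\la)}/K(\la) \to [\la]$, surjectivity follows from the standard basis theorem for Specht modules: the polytabloids $e_T$ indexed by standard Young tableaux (entries strictly increasing along rows and down columns) span $[\la]$. For injectivity I would invoke the straightening algorithm, which uses exactly the Grassmann and Pl\"ucker relations to rewrite any tableau modulo $K(\la)$ as a $\IC$-linear combination of standard tableaux, yielding
\[
\dim \bigl(\IC^{\mathscr{T}(\la)}/K(\la)\bigr) \;\leq\; \#\{\text{standard tableaux of shape }\la\} \;=\; \dim [\la].
\]
Combined with surjectivity of $\bar\phi$ this forces equality and hence the desired isomorphism of $\aS_D$-modules.

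The principal obstacle is the Pl\"ucker-equals-Garnir identification; everything else reduces to classical Specht theory plus a dimension count. Termination of the straightening algorithm requires a well-chosen order on tableaux (e.g.\ lexicographic on the tuple of column contents, or the reverse-dominance order used in Fulton), and this combinatorial bookkeeping---while standard---is the only nontrivial ingredient beyond verifying that $\phi$ respects both families of relations.
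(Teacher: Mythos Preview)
The paper does not supply a proof of this theorem at all: it is stated with the citation ``e.g.~\cite{fult:97}'' and immediately used as a black box. Your sketch is precisely the classical argument from Fulton's book (polytabloids, Garnir relations, straightening), so there is nothing to compare---you have simply reconstructed the proof that the paper defers to the literature.
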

In the light of Theorem~\ref{thm:specht} we identify $[\la]$ with $T(\la)/K(\la)$.
We will always think of tableaux of shape $\la$ as being representatives of cosets in $T(\la)/K(\la)=[\la]$.
In particular $[\la]$ is generated as a vector space by tableaux of shape $\la$ and
$[\pi]\otimes[\la]\otimes[\la]$ is generated by tensors $T'_1 \otimes T'_2 \otimes T'_3$,
where $(T'_1,T'_2,T'_3)$ is a triple of tableaux of shape $(\pi,\la,\la)$.

The following symmetrization map is the linear projection from $[\pi]\otimes[\la]\otimes[\la]$ onto
its invariant subspace 
$([\pi]\otimes[\la]\otimes[\la])^{\aS_D}$:
\[
P(T) := \tfrac 1 {D!} \sum_{\pi \in \aS_D} \pi(T).
\]
Moreover, since the action of $\aS_D$ and $\IZ_2$ (switching the last two tensor factors) commute,
$P$ can be restricted to $[\pi]\otimes S^2[\la]$ or $[\pi]\otimes \Lambda^2[\la]$
and projects onto
$([\pi]\otimes S^2[\la])^{\aS_D}$
and $([\pi]\otimes \Lambda^2[\la])^{\aS_D}$, respectively.
\begin{lemma}
The set $\{P(T'_1 \otimes T'_2 \otimes T'_3) \mid \text{$T'_1$ of shape $\pi$, $T'_2$ of shape $\la$, $T'_3$ of shape $\la$}\}$
forms a generating set of the invariant space $([\pi]\otimes[\la]\otimes[\la])^{\aS_D}$.
\end{lemma}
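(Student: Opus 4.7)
The plan is very short: use that $P$ is a surjective linear projection from the full tensor product onto the invariant subspace, and then transport a generating set across.

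First I would observe that by Theorem~\ref{thm:specht} and the definition of $P$ given just above the statement, $P : [\pi]\otimes[\la]\otimes[\la] \twoheadrightarrow ([\pi]\otimes[\la]\otimes[\la])^{\aS_D}$ is a surjective linear map (it is the standard Reynolds operator for the finite group $\aS_D$ acting on a finite dimensional $\IC$-vector space; restricted to the invariants it is the identity). Surjective linear maps send generating sets to generating sets, so it suffices to exhibit a generating set of $[\pi]\otimes[\la]\otimes[\la]$ consisting of triples of the required form.

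Next I would argue that the pure tensors $T'_1 \otimes T'_2 \otimes T'_3$, with $T'_i$ a tableau of the appropriate shape, indeed span $[\pi]\otimes[\la]\otimes[\la]$. By Theorem~\ref{thm:specht} each factor $[\pi]$, $[\la]$, $[\la]$ is spanned by (cosets of) tableaux of the corresponding shape. Since a tensor product of vector spaces is spanned by pure tensors of basis vectors of the factors, and a fortiori by pure tensors of generators of the factors, the collection $\{T'_1 \otimes T'_2 \otimes T'_3\}$ spans $[\pi]\otimes[\la]\otimes[\la]$ as a complex vector space.

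Applying $P$ to this spanning set yields $\{P(T'_1 \otimes T'_2 \otimes T'_3)\}$, which therefore spans the image of $P$, namely the invariant subspace $([\pi]\otimes[\la]\otimes[\la])^{\aS_D}$. This completes the argument. There is no real obstacle here; the only mild subtlety is remembering that one is working with cosets in $T(\la)/K(\la)$, but this causes no problem since $P$ is a well-defined linear map on the quotient and linearity together with surjectivity of $P$ is all that is used.
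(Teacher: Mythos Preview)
Your proposal is correct and follows essentially the same argument as the paper's proof: the paper also observes that the pure tensors $T'_1 \otimes T'_2 \otimes T'_3$ generate $[\pi]\otimes[\la]\otimes[\la]$ and that $P$ is the linear projection onto the invariant subspace, concluding immediately from these two facts. Your write-up is simply a more expanded version of the same reasoning.
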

\begin{proof}
This immediately follows from the fact that the set $\{T'_1 \otimes T'_2 \otimes T'_3 \mid \text{$T'_1$ of shape $\pi$}, \text{$T'_2$ of shape $\la$}, \text{$T'_3$ of shape $\la$}\}$
is a generating set of $[\pi]\otimes[\la]\otimes[\la]$ and that $P$ is the linear projection onto the $\aS_D$ invariant subspace.
\end{proof}

For a shape $\la$ there is a unique tableau whose entries increase from top to bottom, left to right, columnwise. We call it the \emph{column standard} tableau of shape $\la$. Analogously, for a shape $\la$ there is a unique tableau whose entries increase from top to bottom, left to right, rowwise. We call it the \emph{row standard} tableau of shape $\la$. For example, if $\la=(4,3,1)$, its column standard tableau is $\tiny\young(1468,257,3)$, and the row standard is $\tiny\young(1234,567,8)$. If $\la$ is just a column or a row, then the row standard and column standard tableaux coincide and we call it the \emph{standard} tableau. 

\begin{lemma}\label{lem:project}
Let $\pi=(D \times 1)$ and let $\la$ be self conjugate.
If $T_1$ is the standard tableau of shape $(D \times 1)$, $T_2$ is the row standard of shape $\la$, and $T_3$ is the column standard of shape $\la$, then $P(T_1 \otimes T_2 \otimes T_3) \neq 0$.
\end{lemma}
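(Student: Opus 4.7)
The plan is to reduce to a statement in $[\lambda]\otimes[\lambda]$, invoke the self-conjugacy of $\lambda$ to identify the sign-isotypic component as one-dimensional, and then detect non-vanishing by computing an explicit coefficient.

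\emph{Step 1 (reduction to $[\lambda]\otimes[\lambda]$).} Since $T_1$ is the standard filling of the single-column shape $(1^D)$, the Grassmann relation in $[1^D]$ gives $\pi T_1 = \sgn(\pi)\,T_1$ for every $\pi\in\aS_D$. Pulling this out of the sum,
\[
P(T_1\otimes T_2\otimes T_3) \;=\; T_1\otimes Q,\qquad Q \;:=\; \tfrac{1}{D!}\sum_{\pi\in\aS_D}\sgn(\pi)\,\pi T_2\otimes\pi T_3\ \in\ [\lambda]\otimes[\lambda],
\]
so it suffices to show $Q\neq 0$.

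\emph{Step 2 (one-dimensional target).} The element $Q$ is the image of $T_2\otimes T_3$ under the antisymmetrizer $e_-=\frac{1}{D!}\sum_\pi\sgn(\pi)\pi$ acting diagonally on $[\lambda]^{\otimes 2}$, so $Q$ lies in the $\sgn$-isotypic subspace. Its dimension is $g(1^D,\lambda,\lambda) = \dim\Hom_{\aS_D}([\lambda],[\lambda^t])$, which by Schur's lemma equals $1$ precisely because $\lambda=\lambda^t$. Hence $Q\neq 0$ iff $T_2\otimes T_3$ has nonzero projection onto this one-dimensional line.

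\emph{Step 3 (structural consequence of self-conjugacy).} Set $a_i := \lambda_1+\cdots+\lambda_i$ and $I_i:=\{a_{i-1}+1,\dots,a_i\}$. By the filling rules, $I_i$ is both the $i$th row of $T_2$ and (because $\lambda_i=\lambda_i^t$) the $i$th column of $T_3$; thus $T_3$ is the diagonal reflection of $T_2$, and the row subgroup of $T_2$ coincides as a subgroup of $\aS_D$ with the column subgroup of $T_3$. Call this common subgroup $H=\prod_i\aS_{\lambda_i}$. A quick consequence of this matching is that no transposition $(ij)$ simultaneously lies in a column of $T_2$ and a column of $T_3$ (two entries in a common column of $T_3$ share a row of $T_2$, in which they occupy different columns), which rules out the obvious obstruction $(ij)(T_2\otimes T_3)=T_2\otimes T_3$ that would force $Q=0$.

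\emph{Step 4 (non-vanishing via an explicit coefficient).} Work in the polytabloid basis of $[\lambda]$ indexed by standard Young tableaux, in which both $T_2$ and $T_3$ (being standard) are basis elements. The coefficient of the basis vector $T_2\otimes T_3$ in $D!\cdot Q$ equals
\[
\sum_{\pi\in\aS_D}\sgn(\pi)\,[T_2]_{\pi T_2}\,[T_3]_{\pi T_3},
\]
with $[\,\cdot\,]_{\,\cdot\,}$ denoting the coefficient in the chosen basis. The identity contributes $+1$. Using Step~3 and the Young symmetrizer description $c_T = a_T b_T$, the support of $\pi$ for which both $[T_2]_{\pi T_2}$ and $[T_3]_{\pi T_3}$ are nonzero is controlled by a product of left and right $H$-cosets, and a direct tabulation of those contributions shows that they combine to a nonzero value rather than cancel the identity term.

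\emph{Main obstacle.} The crux is the coefficient computation in Step~4: the Grassmann and Pl\"ucker relations make $\pi T_2$ and $\pi T_3$ expand into substantial linear combinations, so the signed sum of products of coefficients must be controlled carefully. The self-conjugacy hypothesis enters precisely here, by aligning the row combinatorics of $T_2$ with the column combinatorics of $T_3$ through the common subgroup $H$ identified in Step~3; this alignment is what keeps the signed sum from cancelling and yields the desired $Q\neq 0$.
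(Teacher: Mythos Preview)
Your Steps 1--3 are fine, and Step~2 correctly identifies that $Q$ lives in a one-dimensional space; but Step~4 is not a proof, it is an assertion. You write that ``a direct tabulation of those contributions shows that they combine to a nonzero value,'' and then in your closing paragraph you concede that controlling this signed sum is exactly the crux and is delicate because of the straightening relations. That is the gap: you have reduced the lemma to a coefficient computation and then not performed it. In the polytabloid basis, $\pi T_2$ for general $\pi$ expands via Pl\"ucker relations into a genuine linear combination of standard tableaux, so isolating the coefficient of $T_2\otimes T_3$ in $\sum_\pi \sgn(\pi)\,\pi T_2\otimes\pi T_3$ is not a triviality and requires an argument you have not supplied. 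The observation in Step~3 that no transposition lies in a column of both $T_2$ and $T_3$ only rules out the most naive cancellation; it does not by itself show positivity of the full signed sum.

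The paper sidesteps this computation entirely by arguing indirectly. Since $g(1^D,\lambda,\lambda)=1$, the invariant space is one-dimensional and nonzero, so \emph{some} triple $(T_1',T_2',T_3')$ satisfies $P(T_1'\otimes T_2'\otimes T_3')=v\neq 0$. For any such triple the same column-disjointness you noted must hold (else a transposition forces $v=-v$). The paper then uses this disjointness to move entries, via within-column permutations (Grassmann relations, which only change signs) and a global relabelling $\pi\in\aS_D$ (which $P$ absorbs), until the triple becomes exactly $(T_1,T_2,T_3)$. Hence $P(T_1\otimes T_2\otimes T_3)=\pm v\neq 0$. This avoids any coefficient extraction; if you want to salvage your direct approach, you need to actually carry out Step~4, and the cleanest way to do so is probably to recognise that you are computing (up to normalisation) a product of hook-length type factors coming from the Young symmetrizers of $T_2$ and $T_3$ --- but that still needs to be written down, not asserted.
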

\begin{proof}
Since $\la$ is self-conjugate we have $g(\pi,\la,\la)=1$.
Since the $P(T'_1 \otimes T'_2 \otimes T'_3)$ for tableaux $T'_1$, $T'_2$, $T'_3$ form a generating set of $([\pi]\otimes[\la]\otimes[\la])^{\aS_D}$,
it follows that there exists a tableau triple $(T'_1,T'_2,T'_3)$ with $P(T'_1 \otimes T'_2 \otimes T'_3)=v\neq 0$.
Note that there are no two elements $a$ and $b$ which appear in the same column of $T'_2$ and at the same time in the same column of $T'_3$. Otherwise, the transposition $\tau = (a \ b)$ fixes all columns of the three tableaux, and after the Grassmann relations it changes the total sign by $(-1)^3$, so $\tau T'_1 \otimes \tau T'_2 \otimes \tau T'_3 = - T'_1 \otimes T'_2 \otimes T'_3$, and since $P(T'_1, T'_2, T'_3) = P(\tau T'_1 , \tau T'_2 , \tau T'_3)$ we must have that they both are 0.
Now start with the first column in $T'_3$: by the consideration above we know that its entries use different columns in $T'_2$.
With a permutation $\sigma$ we move them inside their columns in $T'_2$ to the top row
and using the Grassmann relation we obtain 
$P(T'_1 \otimes \sigma T'_2 \otimes T'_3) = \pm v \neq 0$.
We do the same for the second column in $T'_3$ and continue through all columns,
so that we end up with a tableau $\sigma' T'_2$ in which row $i$ contains exactly the entries from column $i$ in $T'_3$.
Moreover, we still have $P(T'_1 \otimes \sigma' T'_2 \otimes T'_3) = \pm v \neq 0$.
Using a permutation $\pi \in \aS_D$ on all three tableaux simultaneously we rename the entries in $\sigma'T'_2$ to make it row standard:
$P(\pi T'_1 \otimes T_2 \otimes \pi T'_3) = \pm v \neq 0$.
It follows that the Grassmann relation suffices to make $\pi T'_3$ column standard:
$\pi T'_1 \otimes T_2 \otimes \pi T'_3 = \pm \pi T'_1 \otimes T_2 \otimes T_3$.
Now using the Grassmann relation on the first tableau gives
$\pi T'_1 \otimes T_2 \otimes T_3 = \pm T_1 \otimes T_2 \otimes T_3$.
We conclude that $P(T_1 \otimes T_2 \otimes T_3) = \pm v \neq 0$.
\end{proof}
\begin{proposition}\label{pro:switch}
Let $\pi=(D \times 1)$ and let $\la$ be self conjugate. Let $T_1, T_2, T_3$ be as in Lemma~\ref{lem:project}.
Then $T_1 \otimes T_2 \otimes T_3 = \sgn(\la) T_1 \otimes T_3 \otimes T_2$.
\end{proposition}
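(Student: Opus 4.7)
The plan is to exhibit a single permutation $\sigma \in \aS_D$ whose diagonal action on $T_1\otimes T_2\otimes T_3$ produces exactly $\sgn(\la)\cdot T_1\otimes T_3\otimes T_2$; the stated equality is then the corresponding identity in the one-dimensional invariant subspace $([\pi]\otimes[\la]\otimes[\la])^{\aS_D}$, obtained by averaging with the symmetrizer $P$ from Lemma~\ref{lem:project}.

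Since $\la$ is self-conjugate, I would first observe that $T_3$ is the transpose filling of $T_2$, i.e.\ $T_3[i,j]=T_2[j,i]$ for every $(i,j)\in\la$: both tableaux fill row/column $i$ of $\la$ with the same block of consecutive integers, in opposite direction. I would therefore define $\sigma\in\aS_D$ to be the involution acting on entries by $\sigma(T_2[i,j]):=T_2[j,i]$; this is well-defined because $(i,j)\in\la\Leftrightarrow (j,i)\in\la$. By construction, $\sigma T_2=T_3$ and $\sigma T_3=T_2$ as elements of $\mathbb{C}^{\mathscr{T}(\la)}$, hence also in $[\la]$. Separately, since $T_1$ is the standard single-column tableau in $[\pi]$, the Grassmann relation lets one sort any permuted column at the cost of its sign, giving $\sigma T_1=\sgn(\sigma)\,T_1$ in $[\pi]$.

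Next I would compute $\sgn(\sigma)$. The involution $\sigma$ decomposes as the disjoint product of the transpositions $(T_2[i,j],T_2[j,i])$ over all pairs $\{(i,j),(j,i)\}$ with $i<j$ in $\la$; the number of such pairs is exactly the number $N$ of boxes of $\la$ strictly above the main diagonal. Hence $\sgn(\sigma)=(-1)^N=\sgn(\la)$ by the very definition of the sign of a self-conjugate partition.

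Combining the three identities $\sigma T_1=\sgn(\la)\,T_1$, $\sigma T_2=T_3$, and $\sigma T_3=T_2$, the diagonal action gives
\[
\sigma\cdot(T_1\otimes T_2\otimes T_3)=\sgn(\la)\,T_1\otimes T_3\otimes T_2.
\]
Averaging both sides against $P$ and using $P\sigma=P$ yields $P(T_1\otimes T_2\otimes T_3)=\sgn(\la)\,P(T_1\otimes T_3\otimes T_2)$, which is the content of the proposition in the invariant subspace where it is ultimately applied. The main obstacle will be the sign bookkeeping: matching the cycle type of $\sigma$ precisely to the diagonal-symmetric box pairs of $\la$, and justifying that the Grassmann relation alone (without Pl\"ucker) suffices to produce $\sgn(\sigma)$ on $T_1$. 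Once these combinatorial checks are in place, combining with Lemma~\ref{lem:project} and the decomposition $[\la]\otimes[\la]=S^2[\la]\oplus\Lambda^2[\la]$ immediately yields Theorem~\ref{thm:signaction}.
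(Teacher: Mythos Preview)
Your proof is correct and follows exactly the paper's approach: exhibit the involution $\sigma$ swapping $T_2$ and $T_3$ as fillings, use the Grassmann relation to get $\sigma T_1=\sgn(\sigma)\,T_1$ in $[\pi]$, and count the off-diagonal transpositions to identify $\sgn(\sigma)=\sgn(\la)$. Your observation that the identity is most cleanly read after applying $P$ is a useful clarification of what the paper's brief argument actually establishes (and is exactly what is used in the proof of Theorem~\ref{thm:signaction}).
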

\begin{proof}
There exists a self inverse permutation $\sigma \in \aS_D$ with $\sigma(T_2)=T_3$ and $\sigma(T_3)=T_2$.
Clearly $\sigma(T_1)=\sgn(\pi) T_1$, because $T_1$ is a column.
This permutation $\sigma$ consists of disjoint transpositions switching boxes above the main diagonal with the corresponding box at the transpose position.
Therefore $\sgn(\sigma)=\sgn(\la)$, which concludes the proof.
\end{proof}

\begin{proof}[Proof of Theorem~\ref{thm:signaction}]
Let $T_1$ be standard of shape $(D \times 1)$, $T_2$ be row standard of shape $\la$, and $T_3$ be column standard of shape $\la$.

Let $\mathcal T := T_1 \otimes T_2 \otimes T_3 + T_1 \otimes T_3 \otimes T_2 \in [\pi]\otimes S^2[\la]$.
Using Prop.~\ref{pro:switch} we conclude that if $\sgn(\la)=1$, then $\mathcal T = 2 T_1 \otimes T_2 \otimes T_3$.
Therefore with Lemma~\ref{lem:project} we see that
$P(\mathcal T) \neq 0$.
Therefore $\sk([\pi],[\la])>0$.

Let $\mathcal T' := T_1 \otimes T_2 \otimes T_3 - T_1 \otimes T_3 \otimes T_2 \in [\pi]\otimes \wedge^2[\la]$.
Using Prop.~\ref{pro:switch} we conclude that if $\sgn(\la)=-1$, then $\mathcal T' = 2 T_1 \otimes T_2 \otimes T_3$.
Therefore with Lemma~\ref{lem:project} we see that
$P(\mathcal T') \neq 0$.
Therefore $\ak([\pi],[\la])>0$.
\end{proof}

\section{Vanishing of plethysm coefficients}\label{sec:vanishingpleth}
In this section we prove Prop.~\ref{pro:plethvanish}.
\begin{proof}[Proof of Prop.~\ref{pro:plethvanish}]
Let $\la \partinto md$ with $\la_1 < m$.
We want to show that $a_\la(d[m])=0$.
An known upper bound for $a_\la(d[m])$ are the so-called \emph{Kostka numbers} $K_{\la,d \times m}$:
\begin{equation}\label{eq:upperboundforpleth}
a_\la(d[m]) \leq K_{\la,d \times m},
\end{equation}
which are quantities for which a classical combinatorial description is known.
We will prove Prop.~\ref{pro:plethvanish} by proving the following stronger statement: If $\la_1 < m$, then $K_{\la,d \times m}=0$.
The upper bound~\eqref{eq:upperboundforpleth} follows for example directly from \cite{Gay:Characters_Weyl_gp_on_zero_wt_sp_of_permutations}, see also the exposition in \cite[Thm.~4.3.8]{Ike:PhDthesis}.

The Kostka numbers have a combinatorial interpretation as follows.
A \emph{semistandard Young tableau} of shape $\la$ and content $\mu$ is a filling of the boxes of the Young diagram of $\la$ with entries $1,2,\ldots,\ell(\mu)$
such that every entry $i$ appears exactly $\mu_i$ times and such that
\begin{compactitem}
\item the entries are strictly increasing in each column from top to bottom and
\item the entries are nondecreasing in each row from left to right.
\end{compactitem}
For example $\tiny\young(1112,22,3)$ is a semistandard Young tableau of shape $(4,2,1)$ and content $(3,3,1)$.
The Kostka number $K_{\la,\mu}$ counts the number of semistandard Young diagram of shape $\la$ and content $\mu$.

Given a partition $\la \partinto md$ with $\la_1 < m$.
We claim that $K_{\la,d \times m}=0$.
Indeed, the pigeonhole principle says that for every placement of $m$ 1s to the boxes of $\la$ we will end up with at least one column containing the number 1 at least twice.
Therefore if $\mu_1 > \la_1$ there is no semistandard Young tableau of shape $\la$ and content $\mu$, so we have $K_{\la,\mu}=0$.
Setting $\mu = d\times m$ and observing that $\mu_1=m$ we conclude that $\mu_1 > \la_1$. Therefore $K_{\la,d \times m}=0$.
\end{proof}

\section{Kronecker positivity}\label{sec:kronpos}

Here we consider the positivity of the Kronecker coefficients when one partition is a 2-row or 2-column, which is used to derive some of the positivity results for $\sm$ in Section~\ref{sec:smpositivity}.

\begin{proposition}\label{prop:kron_2row}
We have that $g((a,b), \nu,\nu)>0$ for all partitions $\nu\vdash a+b$, such that $d(\nu) \geq \sqrt{2b+1} $ and $d(\nu)\geq 7$, where $d(\nu)$ is the Durfee size of $\nu$ (i.e. the length of main diagonal of $\nu$).
\end{proposition}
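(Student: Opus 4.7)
The plan is to apply the semigroup property for Kronecker coefficients, following the template used in Section~\ref{sec:smpositivity}. Its analogue for the ordinary Kronecker coefficient, stated already in the introduction and proved exactly as in Prop.~\ref{prop:semigroup} by multiplying highest weight vectors, asserts that if $g(\lambda_1,\mu_1,\nu_1) > 0$ and $g(\lambda_2,\mu_2,\nu_2) > 0$, then $g(\lambda_1+\lambda_2,\mu_1+\mu_2,\nu_1+\nu_2) \geq \max\{g(\lambda_1,\mu_1,\nu_1),g(\lambda_2,\mu_2,\nu_2)\}$. Decomposing $(a,b) = (a-b) + (b,b)$ row-wise as a sum of partitions, the task reduces to producing a compatible row-wise decomposition $\nu = \mu_1 + \mu_2$ into partitions with $|\mu_1| = a-b$, $|\mu_2| = 2b$, and $g((b,b),\mu_2,\mu_2) > 0$, because $g((a-b),\mu_1,\mu_1) = 1$ automatically for any $\mu_1 \vdash a-b$ (the trivial representation appears once in $[\mu_1]^{\otimes 2}$).

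Next I would establish that $g((b,b),\mu,\mu) > 0$ for every $\mu \vdash 2b$ outside the two exceptional cases $\mu = (2b)$ and $\mu = (1^{2b})$. These exceptions are immediate: in both cases $[\mu]$ is one-dimensional, so $[\mu]^{\otimes 2} = [(2b)]$ does not contain $[(b,b)]$. For every other $\mu$, positivity can be verified either by direct character computation or by invoking the Remmel--Whitehead formula for two-row Kronecker coefficients (since $(b,b)$ has only two rows). The role of the hypothesis $d(\nu) \geq 7$ is to guarantee that $\nu$ is ``thick enough'' to force the eventual $\mu_2$ to fall outside these two exceptional shapes.

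The remaining and main step is the combinatorial construction of a row-wise decomposition $\nu = \mu_1 + \mu_2$ with $\mu_2 \notin \{(2b),(1^{2b})\}$. The row-wise partition constraint amounts to requiring that the row-drops $\mu_{2,i} - \mu_{2,i+1}$ be bounded by the row-drops $\nu_i - \nu_{i+1}$. This is where the Durfee bound enters: since $d(\nu)^2 \geq 2b+1 > 2b$, the Durfee square of $\nu$ has enough area to accommodate $\mu_2$, and the combined condition $d(\nu) \geq \max\{7,\sqrt{2b+1}\}$ ensures enough flexibility. Natural candidates for $\mu_2$ are a rectangle $(k^{\ell})$ with $k\ell = 2b$ and $k,\ell \leq d(\nu)$, chosen so that its single row-drop at row $\ell$ is absorbed by a large row-drop of $\nu$; alternatively, a two-row shape $(c, 2b-c)$ fitted to the first two rows of $\nu$ when those are comparatively flat. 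The main obstacle is precisely this case analysis: one must verify that for every $\nu$ satisfying the Durfee bounds at least one admissible $\mu_2$ exists. I would expect to split the argument according to whether $b$ is small compared to $d(\nu)$ or close to the extremal value $(d(\nu)^2-1)/2$, and within each regime to distinguish ``thin'' vs.\ ``wide'' $\nu$ based on the row-length profile near the Durfee square.
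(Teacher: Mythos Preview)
Your key positivity claim in the second paragraph is false: it is \emph{not} true that $g((b,b),\mu,\mu)>0$ for every $\mu\vdash 2b$ with $\mu\notin\{(2b),(1^{2b})\}$. Already $\mu=(2b-1,1)$ fails for every $b\geq 3$, since $[(2b-1,1)]\otimes[(2b-1,1)]\simeq [2b]\oplus[2b-1,1]\oplus[2b-2,2]\oplus[2b-2,1,1]$ contains no copy of $[(b,b)]$. More generally, hooks and near-hooks give many further counterexamples, so the claim cannot be rescued by excluding finitely many shapes. This breaks your reduction: you genuinely need to control \emph{which} $\mu_2$ you choose, and for that choice you need an independent positivity input, not a blanket statement.

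Your combinatorial step is also more delicate than you indicate. A rectangle $\mu_2=(k^\ell)$ fitting inside the Durfee square does not in general leave $\nu-\mu_2$ a partition (take $\nu=r^r$ and any $\ell<r$), so the ``row-drop'' condition is a real obstruction. The paper avoids both problems by a different decomposition: it takes the \emph{entire} Durfee square $r^r$ as the core and invokes the Pak--Panova strict unimodality theorem to get $g((r^2-b,b),r^r,r^r)>0$ for $r\geq 7$ and $r^2\geq 2b+1$ --- this is precisely where both hypotheses enter. It then adds the arm $\alpha$ to each of the three partitions via the semigroup (using $g((|\alpha|),\alpha,\alpha)=1$), transposes two of the three arguments, adds the leg $\beta^t$ the same way, and transposes back. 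The arm/leg split combined with the transpose trick is what makes the partition arithmetic work out cleanly, and Pak--Panova is the positivity input you are missing.
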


\begin{proof}
Let $r = d(\nu) \geq 7$. We have that $r^2 \geq 2b+1$. By \cite{PakPan:Strict_unimod} we have that $g((r^2-b,b), r^r, r^r) = p_b(r,r) - p_{b-1}(r,r) >0$, where $p_b(r,r)$ is the number of partitions of $b$ which fit inside the $r \times r$ rectangle. Let $\alpha$ be the partition consisting of columns $r+1,r+2,\ldots$ of $\nu$, and $\beta$ be the partition consisting of rows $r+1,r+2,\ldots$ of $\nu$, so that $\nu = (r^r+\alpha, \beta)$, denote $\gamma = r^r+\alpha$ and $\tau = (r^2-b,b)$. 

By the semigroup property of Kronecker coefficients, we have that
$$g( \tau+(|\alpha|), \gamma,\gamma)= g( ( r^2-b+ |\alpha|,b), r^r+ \alpha, r^r + \alpha) \geq g((r^2-b,b),r^r,r^r)$$
since $g((|\alpha|), \alpha, \alpha)=1>0$. Since the Kronecker is invariant under transposition of two partitions we also have
$$g( \tau + |\alpha|+|\beta|, \gamma^t + \beta^t, \gamma^t + \beta^t) \geq g(\tau+|\alpha|, \gamma^t,\gamma^t) = g(\tau + |\alpha|,\gamma,\gamma)>0$$
as again $g((|\beta|),\beta^t,\beta^t) =1>0$. Finally, we have that $\gamma^t = \gamma^t+\beta^t$, so transposing again gives the desired positivity. 
\end{proof}

\begin{corollary}\label{cor:kron_2_columns}
We have that $g(1^a+1^b, \nu,\nu)>0$ for all $\nu=\nu'$ with $d(\nu) \geq\max\{ 7, \sqrt{2b+1} \}$. 
\end{corollary}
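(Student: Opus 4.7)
The statement reduces directly to Proposition~\ref{prop:kron_2row} via the transposition symmetry of the Kronecker coefficient together with the hypothesis that $\nu$ is self-conjugate. Recall the well-known identity
\[
g(\lambda,\mu,\nu) \;=\; g(\lambda^t,\mu^t,\nu),
\]
which follows from tensoring two of the three Specht modules by the sign representation of $\aS_D$. Applied to our setting, since $(1^a+1^b)^t = (a,b)$ (assuming $a \geq b$), this gives
\[
g(1^a+1^b,\nu,\nu) \;=\; g\bigl((a,b),\nu^t,\nu\bigr).
\]

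The plan is then to use $\nu=\nu^t$ to rewrite $g((a,b),\nu^t,\nu)$ as $g((a,b),\nu,\nu)$, which is exactly the quantity estimated by Proposition~\ref{prop:kron_2row}. The Durfee size $d(\nu)$ is invariant under conjugation, so the hypothesis $d(\nu)\geq\max\{7,\sqrt{2b+1}\}$ feeds straight into the hypothesis of Proposition~\ref{prop:kron_2row}, yielding positivity.

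There is essentially no obstacle here: the only point to be careful about is verifying the transposition identity applies to this triple (i.e., that we transpose the two partitions $1^a+1^b$ and $\nu$ simultaneously, which is the standard form), and noting that the case $a < b$ is handled by the symmetry $g(\lambda,\mu,\nu)=g(\lambda,\nu,\mu)$ in the last two arguments combined with relabelling. The whole proof is a two-line derivation from Proposition~\ref{prop:kron_2row}.
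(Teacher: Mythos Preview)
Your proof is correct and essentially identical to the paper's: both use the transposition symmetry $g(\lambda,\mu,\nu)=g(\lambda^t,\mu^t,\nu)$ together with $\nu=\nu^t$ to reduce to $g((a,b),\nu,\nu)$ and then invoke Proposition~\ref{prop:kron_2row}. The only cosmetic difference is the order in which the substitution $\nu=\nu^t$ and the transposition identity are applied.
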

\begin{proof}
 Since $g$ is invariant under transposition of two partitions and $\nu^t=\nu$, $(1^a+1^b)^t = (a,b)$, we have $g(1^a+1^b, \nu,\nu) = g(1^a+1^b, \nu^t,\nu) = g( (a,b), \nu,\nu)>0$ by Proposition~\ref{prop:kron_2row}.
\end{proof}

\bibliographystyle{amsalpha}
\bibliography{completebib}

\end{document}